\newtheorem{lem}{Lemma}
\newtheorem{pro}{Proposition}
\newtheorem{cor}{Corollary}
\newtheorem{exmp}{Example}
\newtheorem{Remark}{Remark}
\newcommand{\tr}{\operatorname{tr}}
\renewcommand{\cal}{\mathcal}
\newcommand{\diag}{\operatorname{diag}}
\renewcommand{\cal}{\mathcal}
\newcommand{\ad}{{\rm ad}}
\newcommand{\0}{{\bf 0}}
\renewcommand{\S}{\mathcal{S}}
\newcommand{\rt}{r_{\operatorname{tot}}}
\newtheorem{Definition}{Definition}
\newtheorem{Theorem}{Theorem}
\newtheorem{Corollary}{Corollary}
\newtheorem{Assumption}{Assumption}
\newcommand{\R}{\mathbb{R}}
\newcommand{\E}{\mathbb{E}}
\renewcommand{\cal}{\mathcal}
\newcommand{\ol}{\overline}
\renewcommand{\(}{\left (}
\renewcommand{\)}{\right )}
\renewcommand{\[}{\left [}
\renewcommand{\]}{\right ]}
\newcommand{\Sp}{\operatorname{Sp}}
\newcommand{\so}{\frak{so}}
\newcommand{\Ch}{\operatorname{Ch}}
\title{\LARGE \bf
Optimal Capacity Allocation for Sampled Networked Systems}
\author{Xudong Chen, M.-A. Belabbas, Tamer Ba\c sar}
\begin{document}
\maketitle
\thispagestyle{empty}
\pagestyle{empty}

\begin{abstract}
We consider the problem of estimating the states of weakly coupled linear systems from sampled measurements.  We assume that the total capacity available to the sensors to transmit their samples to a network manager  in charge of the estimation is bounded above, and that each sample requires the same amount of communication. Our goal is then to find an optimal allocation of the capacity to the sensors so that the average estimation error is minimized. We show that when the total available channel capacity is large,  this  resource allocation problem can be recast as a strictly convex optimization problem, and hence there exists a unique optimal allocation of the capacity. We further investigate how this optimal allocation varies as the available capacity  increases. In particular, we show that if the coupling among the subsystems is weak, then the sampling rate allocated to each sensor  is nondecreasing in the total sampling rate,  and is strictly increasing if and only if the total sampling rate exceeds a certain threshold. 
\end{abstract}

\section{Introduction}

This paper addresses situations in which a network manager is tasked with estimating the state of an ensemble of weakly inter-connected linear systems. For the estimation to be performed, the systems send sampled measurements to the network manager  over a shared communication channel. Because this communication channel has a finite capacity, we seek to optimize the allocation of channel capacity to each sensor in order to minimize the {\it total estimation} error. In this work, we assume that the channel capacity is directly proportional to the number of samples sent by the subsystems; this is verified when all samples are treated equally.

To proceed, we first describe the model adopted in precise terms. We consider  $N$ weakly-coupled stochastic linear systems  with sampled outputs
\begin{equation}\label{eq:ModelWeaklyCoupled1}
S_i:=\left\{
\begin{array}{l}
dx_i = \left (A_ix_i +\epsilon \sum_{j \neq i} A_{ij}x_j \right)  dt + G_i  dw_i \\
y_i(k\tau_0) =  \bar c_i^{\top} x_i(k\tau_0) + v_i(k\tau_0), 
\end{array}
\right. 
\end{equation}
where $1/\tau_0>0$ is the sampling rate of the sensors and $k$ is a positive integer. We have that $A_i, A_{ij} \in \R^{n \times n}$, $\bar c_i \in \R^{n \times p}$, and $G_i \in \R^{n\times n}$, and that $|\epsilon|$ is small. The assumptions that the subsystems have the same state-dimension $n$ and  the outputs $y_i$ have the same dimension $p$ for all $i$, and the assumption that the coupling parameter $\epsilon$ is the same for all pair $(i,j)$, for $i \neq j$, are made to simplify the notations of the paper, but are not necessary for the results to hold. 
The Brownian motions $w_i$ are pairwise independent and the $\nu_i(k \tau_0)$ are pairwise independent normal random variables. The $w_i$ and $\nu_i$ are also assumed to be independent.

We refer to the system described in~\eqref{eq:ModelWeaklyCoupled1} as subsystem~$S_i$. The samples $y_i(k \tau_0)$, $k \in \mathbb{N}$, are  sent over a common channel to a network manager whose objective is to estimate the states $x_i$ of the subsystems $S_i$, for all $i = 1,\ldots, N$, from these samples. The network manager needs to decide the \emph{schedule} with which it receives the samples in order to minimize the estimation error. Note that since the systems are coupled, the knowledge of $y_i$ can help with the estimation of $x_j$, for $i \neq j$.

We now describe in detail the scheduling problem. The network manager has at his disposal $N$ linear sensors from which he can request samples in order to estimate the states of the subsystems. We only consider \emph{periodic} schedules. That is, we assume that over a fixed time period $\tau>0$, the network manager can request up to $\rt=r_1+r_2+\cdots+r_N$ samples from the sensors, where $r_i$ is the number of samples from the $i$-th sensor, bounded below by a positive number $r_{\min}$. 
We thus have $\tau = \rt \tau_0$ and we can assume that the time-period $\tau$ is then divided evenly into $\rt$ time slots. { In each time slot, the network manager can only  have one sample  sent over the common channel from one of the $N$ sensors. Thus, the problem faced by the network manager is to decide how to assign these~$\rt$ slots to the sensors to send their samples to minimize the estimation error.} 

We note here that the problem has {\it two} natural scales, $\tau$ and $\tau_0$ which are proportional to each other, with ratio $\rt$. We use the following notation to refer to the time slots: The sub-index $l$ refers to the current position within a time period, and the main index $k$ refers to the current period. More specifically, for an arbitrary time signal $s(t)$, we let 
\begin{equation}\label{eq:deftimeslotsnotation}s_{(l)}(k):= s((k\rt+l)\tau_0)\end{equation} where $l$ is only allowed to take values in the set $\{0,\ldots,\rt - 1\}$. With this convention, we can write  the output of the $i$-th sensor as
\begin{equation}\label{eq:sensors2}
y_{i,(l)}(k) = \bar c_i^\top x_{(l)}(k) +v_{i,(l)}(k),
\end{equation} 
where the $\nu_{i,(l)}(k)$'s are pairwise  independent normal variables. 

We call an {\bf allocation strategy} an assignment of the time slots to the sensors over a period $\tau$, and denote by $\cal R$ the set of all possible allocation strategies. We call $\cal{R}$ the {\bf strategy set}.  Our objective  is thus to find the allocation strategy that minimizes the  time-averaged (infinite horizon) estimation error. We refer to this problem as the {\bf optimal allocation problem}.   { A precise formulation  of the problem is presented in Section~\ref{sec:prelim}.}

{
The optimal allocation problem (also known as the optimal scheduling problem if the dynamics for the state~$x$ is in discrete-time) has been investigated for decades, with numerous applications in networked control and estimation, such as localization of mobile robot formations~\cite{mourikis2006optimal}, navigation of underwater vehicles using sonar sensors~\cite{meduna2008low}, target tracking~\cite{he2006sensor}, trajectory planning~\cite{singh2007simulation}, to name just a few. Because of its widespread relevance, there has been continuing efforts in designing efficient algorithms for finding the optimal (but often an suboptimal) solution to the allocation/scheduling problem. Amongst the related works, we first note the seminal work~\cite{meier1967optimal} by Meier, Peschon, and Dressler: The authors there consider a {\em discrete-time} linear control system with multiple sensors. But only one  sensor can be used at each time step.   The objective is thus to determine the schedule of the sensors  to minimize the total estimation error for a finite horizon. The optimization problem is then solved via dynamic programming. However, such a method is often computationally intractable especially when the number of sensors is large and  schedule horizon is long (here, $N$ and~$\rt$ are large). Following~\cite{meier1967optimal}, there have been various methods established to reduce the computational complexity. Among the deterministic methods, greedy algorithms have been used several times  to find suboptimal solutions (see, for example~\cite{oshman1994optimal,kagami2006sensor,chhetri2007use}). Other algorithms, such as pruning of the tree-search, have also been proposed (note that the optimal scheduling problem is a special type of tree-search problem). For example, the tree-pruning algorithms established in~\cite{vitus2012efficient} trade-off the quality of the solution and the complexity of the problem through a tuning parameter. We further refer to~\cite{alriksson2005sub} for an suboptimal algorithm using relaxed dynamic programming.   
Besides deterministic algorithms, there are also stochastic methods dealing with computational complexity of the optimal scheduling problem. For example, the authors in~\cite{gupta2006stochastic} select a sensor randomly at each time step according to a certain probability distribution. An upper-bound for the expected value of the stead-state estimation error is established. The probability distribution is then chosen so as to minimize the upper-bound. For other stochastic methods, we refer to~\cite{he2006sensor} for a Monte Carlo method, and to~\cite{singh2007simulation} for a simulation-based approach.  We further point out that the optimal scheduling problem is also investigated for nonlinear processes. For example, the authors established in~\cite{baras1989optimal} the existence of an optimal solution for nonlinear diffusion processes.     

Amongst other related works, there have also been studies on a similar problem called {\em optimal sensor selection}, for which the objective is to select a relatively small subset of sensors to be put to use at each time step so as to minimize the estimation error. The optimal selection problem also faces the challenge of high computational complexity; indeed, it has been proved in~\cite{zhang2015sensor} that the problem is NP-hard, which holds even if the system is stable. Various algorithms have also been proposed to deal with the computational complexity. We refer to~\cite{joshi2009sensor} for an approach using convex relaxation, and to~\cite{isler2005sensor} for an approximation algorithm which finds a suboptimal solution in a polynomial time and guarantees that the resulting estimation error is within factor 2 of the least possible error. 

For more works related to the optimal scheduling problem, we note that there are settings where there exist energy constraints and/or running costs for  transmitting samples, and the objective is thus to decide whether to transmit or not~\cite{gao2015optimal,martins2007fundamental,gupta2009optimal,imer2005optimal,imer2010optimal,bommannavar2008optimal,imer2005ACM}. There are also settings where the channels over which the samples are sent are lossy~\cite{sinopoli2004kalman,sinopoli2008optimal,imer2006optimal,moon2015minimax,bommannavar2008optimal}. While our set-up is related in spirit, the approach required to solve these problems is  different from the one we need here.

As mentioned above, most extant work in this area has dealt with the computational complexity by appealing to heuristic algorithms, and look for suboptimal solutions. In this work, we investigate the optimal allocation problem from a different perspective: First, we recall that the dynamics of networked system is in continuous-time, with $N$ sensors sampling the state at a rate of $1/\tau_0$. A total number of $\rt$ samples can be obtained in a scheduling period~$\tau$. We investigate in this work how the estimation error depends on the sampling rate, and moreover, how such a dependence affects the solution to the optimal capacity allocation problem. One of our main contributions is then to show that the optimal allocation problem can be solved {\em exactly} for $\tau$ asymptotically small (or equivalently, the sampling rate asymptotically high).  We note here that the question about the dependence of the estimation error on the sampling rates, and the optimal allocation problem, have also been addressed recently in the computer science  and cyber-physical systems communities~\cite{lu2016real,seto2001trade,saifullah2014near}. This line of work, however,  relies on a heuristic claim that the performance measure (here, the total estimation error) decays exponentially  in the sampling rate. We show in the next section that this is not in fact true, and derive the exact asymptotic behavior.    
}

We now briefly outline the approach taken in this paper. First, we show that the estimation error afforded by a given allocation strategy $R$---in the appropriate asymptotic limit---is \emph{independent} of the order in which the samples are requested, but depends only on the total numbers of samples requested from each sensor in a time period $\tau$ by the strategy $R$. This simplifies the problem greatly and allows us to show that it is in fact \emph{equivalent} to a continuous-time estimation problem where the limited resource is not the channel capacity, but the {\it quality} of the sensors, specifically, the signal-to-noise ratio of the measurements they provide.  Said otherwise, we show that under a few natural assumptions, we can replace the generally difficult problem of optimally assigning time slots to sensors with the easier problem of optimally choosing signal-to-noise ratios of measurements. 

We  state these problems and show their equivalence in Section~\ref{sec:prelim}.
Next, in Section~\ref{sec:optimalsol}, we study the dependence of the error covariance on the sampling rate for a {\it single} subsystem. In particular, we show that it is {\it strictly convex} and {\it monotone decreasing} in the sampling rate. We then show that if several subsystems are weakly coupled, the resulting optimal capacity allocation problem is also strictly convex and hence admits a unique {\em optimal allocation}.  Finally, in Section~\ref{sec:monotone}, we investigate how this optimal allocation depends on the total available capacity.

\noindent{\bf Definitions and notations.} We describe the notation used throughout the paper.
We denote by $\R^N_+$ the nonnegative orthant in $\R^N$. Given $\sigma >0$, we define the {\bf simplex of height $\sigma$} in $\R^N$ as 
\begin{equation}\label{eq:defsimplex}
\Sp[\sigma] := \left\{ v\in \R^{N} \mid v_i  \ge 0 \mbox{ and } \sum^N_{i=1}v_i = \sigma  \right\}.
\end{equation}
We use the acronym ARE to refer to the algebraic Riccati equation~\cite{brockett2015finite}. For a matrix $P$, we let $\tr(P)$ be the trace of $P$. For a symmetric matrix $P$, we write $P \ge 0$ (resp. $P \le 0$) if $P$ is positive (resp. negative) semi-definite, and $P > 0$ (resp. $P < 0$) if $P$ is positive (resp. negative) definite. 
An $N \times N$ matrix $P = (p_{ij})$ is said to be  {\bf diagonally dominant} if 
$$
|p_{ii}| \ge \sum_{j\neq i} |p_{ij}|, \hspace{10pt} \forall\, i = 1,\ldots. N.
$$
Given a square matrix $P$,  a {\bf principal} submatrix of $P$ is a matrix derived by removing certain rows and columns of $P$, with the condition that the two sets of indices---the indices of the rows that are removed and the indices of the columns that are removed---are the same.

\section{Problem Formulation}\label{sec:prelim}
\subsection{The optimal allocation problem}

We present the estimation problem in the case of a single system with $N$ sensors. 
To this end, we consider a linear stochastic system with $N$ sensors:
\begin{equation}\label{eq:Model1Cont}
\left\{
\begin{array}{l} 
dx = A  x dt + G  dw\\
y_{i,(l)}(k) = \bar c_i^\top x_{(l)}(k) +v_{i,(l)}(k), \hspace{10pt} i = 1,\ldots, N.
\end{array}\right.
\end{equation} 
Recall that the optimal allocation problem consists of assigning the $\rt$ time slots in a period $\tau = \rt \tau_0$ to the sensors in order to minimize the estimation error of $x(t)$. 
For a later purpose, we note here that the optimization problem is comprised of the following two inter-related problems:
\begin{enumerate}
\item[P1).] Given $\rt >0$ fixed, determine how many time slots $r_i \ge 0$  are assigned to sensor $i$, subject to the constraint that $\rt=r_1+r_2+\cdots r_N$ and $r_i\ge r_{\min}$ for all $i = 1,\ldots, N$? 
\item[P2).] Given $r_1,\ldots,r_N \ge 0$ fixed, determine how to assign the $r_i$ slots to the $i$-th sensor, for all $i = 1,\ldots, N$, out of the total $\rt$ slots?
\end{enumerate}

Note that we do {\it not} consider here feedback strategies of allocations, in which, for example, the network manager decides which sensor should send its sample for the upcoming time slot based upon all the past observations. Of course, such a feedback strategy would evidently yield a better performance, but their real-time implementation is far more difficult. We instead focus on the optimal {\it design} problem, for which the network manager makes a static assignment that is used for every period $\tau$. 
We can assume that all the sensors sample their outputs at the same frequency $1/\tau_0$ and that the network manager requests the samples as needed. 


To proceed, let $\hat x_{(l)}(k)$ be the optimal {\it mean squared error} (MSE) estimate of the state $x_{(l)}(k)$ of the system  by the network manager. It is well known that the MSE estimate is the conditional expectation of $x_{(l)}(k)$ given all the past observations.  It is also well known how to update the MSE estimate recursively (see, for example~\cite{optimalfiltering_anderson79}). We thus only  sketch the recursive derivation with an eye towards obtaining asymptotics: First, let $e_{(l)}(k)$ be the error in estimation of $x_{(l)}(k)$: $$e_{(l)}(k) := x_{(l)}(k)-\hat x_{(l)}(k).$$  We denote the corresponding error covariance as follows: $$ \Sigma_{(l)}(k):=\E\left[e_{(l)}(k)e_{(l)}^\top(k) \right],$$ where the expectation is conditioned on the past observations. We note here that the trace of $\Sigma_{(l)}(k)$, denoted by $\tr(\Sigma_{(l)}(k))$, is then the estimation error. Now, by first taking the expectation on both sides of the evolution equation~\eqref{eq:Model1Cont} and then integrating over one time slot $\tau_0$, we obtain the MSE estimate of $x_{(l)}(k)$ without using the new sample $y_{i,(l)}(k)$, for some $i\in \{1,\ldots, N\}$, which is solved by  \begin{equation*}\label{eq:xmupdate}\hat x_{(l)}^-(k) := e^{A \tau_{0}} \hat x_{(l-1)}(k).\end{equation*}
Correspondingly, the covariance $\Sigma_{(l)}^-(k)$ of the estimation error $e_{(l)}^-(k):=\hat x_{(l)}^-(k) - x_{(l)}(k)$ is obtained by integrating the following Lyapunov differential equation:
$$\dot \Sigma = A \Sigma + \Sigma A^\top + GG^\top$$ over a time slot $\tau_0$, with $\Sigma_{(l-1)}(k)$ the initial condition. The solution can be obtained explicitly as follows:  
\begin{equation}\label{eq:covupdate}
\Sigma_{(l)}^-(k)  =   e^{A\tau_0}\Sigma_{(l-1)}(k)e^{A^\top\tau_0} + 
\displaystyle\int_0^{\tau_0} e^{As} GG^\top e^{A^\top s}ds.
\end{equation}
Upon receiving the new sample $y_{i,(l)}(k)$, we update the mean and covariance as follows:
\begin{equation*}\label{eq:xupdate}
\hat x_{(l)}(k)  =   \hat x_{(l)}^-(k) + \Sigma_{(l)}^-(k)\bar c_i  \Big [\bar c_i^\top\Sigma_{(l)}^-(k)\bar c_i+I \Big ]^{-1} \\ \(y_{i,(l)}(k)-\bar c_i^\top \hat x_{(l)}^-(k)\),
\end{equation*}
and 
\begin{equation}\label{eq:covplusupdate}\Sigma_{(l)}(k) = \left(\Sigma_{(l)}^-(k)^{-1} + \bar c_i \bar c_i^\top\right)^{-1}.\end{equation}
For convenience, we define the map $$\phi_i: \Sigma_{(l-1)}(k) \mapsto \Sigma_{(l)}(k)$$ which sends an error covariance matrix  to its update over a single slot. The sub-index~$i$ indicates that sensor~$i$ is used in the corresponding time slot.

{
We recall that $\cal R$ is the set of allocation strategies of assigning the slots to the sensors in a scheduling period~$\tau$. Let $R\in \cal R$ be an allocation strategy. Note that $R$ can be represented by a $\rt$-dimensional vector, whose entries take values in the set $\{1,\ldots, N\}$. More specifically, if the $l$-th entry of $R$, denoted by $R_l$, is~$i$, then sensor~$i$ is used at the $l$-th slot over the period $\tau$. 
We now associates each allocation strategy $R$ a map $\Phi_R$, defined as the composition of $\phi_{R_l}$ for $l = 1,\ldots, \rt $: 
$$
\Phi_R := \phi_{R_{\rt}}\cdots \phi_{R_1}.
$$
Note that the map $\Phi_R$ depends (implicitly) on the scheduling period $\tau$, and the matrices~$A$, $G$ and $\bar c_i$, for $i = 1,\ldots, N$, in~\eqref{eq:Model1Cont}.    
Further, for ease of notation,  we omit the sub-index of $\Sigma_{(l)}(k)$ if $l = 1$. 
Then, with the map $\Phi_R$ defined above,  we have that   
$
\Sigma(k) = \Phi_{R}(\Sigma(k-1))  
$. 
Thus, given an initial condition $\Sigma(0)$ of the error covariance, we can obtain  $\Sigma(k) = \Phi^k_R(\Sigma(0))$ for all $k \ge 0$.

We now establish a sufficient condition for the convergence of the sequence of error covariance matrices. First, let ${\rm Eig}(A)$ be the set of eigenvalues of the matrix $A$ in~\eqref{eq:Model1Cont}. 
We then let ${\cal T}$ be a subset of positive numbers $\tau$ defined as follows: if $\tau\in \cal{T}$, then for any pair of distinct eigenvalues $(\lambda_i,\lambda_j)$ of $A$, we have that $e^{\lambda_i\tau} \neq e^{\lambda_j \tau}$, i.e.,  
$$
{\cal T} := \{\tau \in \R_+ \mid e^{\lambda_i\tau} \neq e^{\lambda_j \tau} \mbox{ if } \lambda_i \neq \lambda_j \in {\rm Eig}(A)\} 
$$
It should be clear that $\cal T$ is an open dense subset of $\R_+$. Note, in particular, that if $\tau$ is sufficiently small such that 
\begin{equation}\label{eq:smalltau}
|{\rm Im}(\lambda_i) - {\rm Im}(\lambda_j) |\tau < 2\pi, \hspace{10pt}  \forall \lambda_i, \lambda_j\in {\rm Eig}(A),
\end{equation}
where ${\rm Im}(\cdot)$ denotes the imaginary part of a complex number,   
then $\tau \in {\cal T}$.      
With the definitions and notations above, we now have the following fact:

\begin{pro}\label{pro:existenceoflimit}
Let  $\ol c := [\bar c_1, \ldots, \bar c_N]$, with $\ol c_i$ in~\eqref{eq:Model1Cont}. Suppose that $(A, \ol c)$ is an observable pair; then, for any allocation strategy $R\in \cal R$ and any scheduling periodic $\tau\in \cal T$, the sequence of error covariance matrices converges to a steady state: 
$$ \Sigma(\infty; \tau, R):=\lim_{k\to\infty} \Phi^k_R(\Sigma(0)),$$        
which depends only on $\tau$ and $R$, but not on the initial condition $\Sigma(0)$.  
\end{pro}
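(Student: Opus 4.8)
The plan is to recognize $\Phi_R$ as the one-period (monodromy) propagation map of the Riccati recursion of a periodic discrete-time Kalman filter, and to show that this map has a unique fixed point that attracts every initial condition. First I would write the two substeps of each $\phi_i$ in standard Kalman form: the time update $\Sigma \mapsto e^{A\tau_0}\Sigma e^{A^\top\tau_0} + Q_0$ with $Q_0 := \int_0^{\tau_0} e^{As}GG^\top e^{A^\top s}\,ds \ge 0$ from~\eqref{eq:covupdate}, and the information-form measurement update $\Sigma^- \mapsto \left( (\Sigma^-)^{-1} + \bar c_i \bar c_i^\top \right)^{-1}$ from~\eqref{eq:covplusupdate}, which corresponds to a sensor with unit measurement-noise covariance. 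Composing these over the $\rt$ slots exhibits $\Phi_R$ as the monodromy map of an $\rt$-periodic Kalman filter; equivalently, after stacking the within-period measurements, as the one-step Riccati map of a lifted time-invariant system whose single step equals one period, with transition $\bar F := e^{A\tau}$ and aggregated output obtained by stacking $\bar c_{R_l}^\top e^{A l\tau_0}$ over $l=0,\ldots,\rt-1$ (together with the induced, within-period correlated noise, which plays no role below). The claim is thus a convergence statement for a standard discrete-time Riccati recursion.

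Second, I would record the two structural properties that drive the convergence argument, both read off directly from the forms above. The map $\Phi_R$ is \emph{monotone}: if $\Sigma_1 \ge \Sigma_2 \ge 0$ then $\Phi_R(\Sigma_1) \ge \Phi_R(\Sigma_2)$, since the time update is a congruence plus a PSD term and the measurement update is monotone when written in information form; and $\Phi_R$ is \emph{concave} on the cone of PSD matrices. These are the only qualitative facts about the recursion I will need.

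Third --- and this is where the hypotheses enter --- I would prove that the lifted pair $(\bar F,\bar C)$ is observable, using the Popov--Belevitch--Hautus test. For an eigenvector $w$ of $A$ with eigenvalue $\lambda$, the condition $\tau\in\mathcal{T}$ makes $\lambda\mapsto e^{\lambda\tau}$ injective on $\Eig(A)$, so no two distinct continuous modes collapse and the (generalized) eigenspaces of $\bar F=e^{A\tau}$ coincide with those of $A$; hence $w$ remains an eigenvector of $\bar F$. The $l$-th block of $\bar C w$ is then $e^{\lambda l\tau_0}\,\bar c_{R_l}^\top w$, which vanishes for every scheduled slot exactly when no scheduled sensor sees the mode $w$. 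Since $r_i\ge r_{\min}>0$ forces every sensor to appear in $R$, and observability of $(A,\bar c)$ means every mode is seen by some sensor, the stacked condition $\bar C w=0$ would give $\bar c^\top w=0$ and hence $w=0$; so the PBH test passes on each eigenspace and $(\bar F,\bar C)$ is observable. I expect this transfer of continuous-time observability to the sampled lift to be the main obstacle, both because the Jordan structure of $A$ must be handled with care and because it is precisely here that excluding the pathological sampling set $\R_+\setminus\mathcal{T}$ is indispensable.

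Finally, with $(\bar F,\bar C)$ observable I would invoke the standard convergence theory for the filtering Riccati recursion~\cite{optimalfiltering_anderson79}. Observability yields a uniform upper bound on $\Sigma(k)$ --- the optimal covariance is dominated by that of a finite-memory observer built from one observability window, so the effect of the prior is forgotten. Together with monotonicity and concavity, the trajectory started at $\Sigma(0)=0$ is monotone increasing and bounded, hence converges to a fixed point $\Sigma_\star$ of $\Phi_R$; observability then forces this fixed point to be unique and globally attracting, equivalently that $\Phi_R^{\,m}$ is a strict contraction in the Riemannian metric on positive definite matrices once $m$ exceeds one observability window. Therefore $\Sigma(k)=\Phi_R^k(\Sigma(0))$ converges to $\Sigma_\star=:\Sigma(\infty;\tau,R)$ for every initial $\Sigma(0)$, with the limit depending only on $\tau$ and $R$, which is the assertion. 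If one additionally assumes the process noise is nondegenerate, i.e.\ $Q_0>0$ (for instance $G$ invertible), then the lift is also reachable and the conclusion follows at once from the detectability-plus-stabilizability form of the theorem.
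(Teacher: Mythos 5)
Your proof is correct, and its core combinatorial/structural step is the same as the paper's, but the convergence machinery is genuinely different. The paper's Appendix~A proceeds by duality: it constructs a discrete-time periodically-switched \emph{optimal control} problem driven by the reversed schedule $R'$, shows that the backward value-function recursion for the matrices $K_{[l]}[k]$ reproduces the maps $\phi_{R_l}$ (so $K[T-k]=\Phi_R^k(K_0)$), and then deduces existence of the limit and its independence of $K_0$ from solvability of the infinite-horizon LQR problem under controllability of the lifted pair $(\cal{A},\cal{B})=\bigl(e^{A^\top\tau},[\bar c_{R'_1},\cal{A}_0\bar c_{R'_2},\ldots]\bigr)$. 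You instead stay on the primal (filtering) side and invoke the standard monotone/bounded/contraction theory for the lifted Riccati recursion. The decisive lemma is identical in both arguments --- transferring observability of $(A,\ol c)$ to the sampled lift using $\tau\in\cal{T}$ (so that eigenspaces of $e^{A\tau}$ coincide with those of $A$) together with $r_i\ge r_{\min}>0$ (so every sensor appears in the period) --- only you phrase it as a PBH observability test on $(\bar F,\bar C)$ while the paper phrases it as a column-space computation for the dual controllability matrix. Two points where you should tighten the write-up: (i) the lifted one-period model has correlated process and measurement noise within the period, so the textbook convergence theorem must be cited in its correlated-noise (or directly in its periodic-Riccati) form rather than dismissed as playing no role; and (ii) the independence of the limit from $\Sigma(0)$ genuinely uses nondegeneracy of $Q_0=\int_0^{\tau_0}e^{As}GG^\top e^{A^\top s}ds$ (equivalently reachability of the lift), which you correctly flag at the end --- note that the paper's own proof makes the same implicit assumption when it asserts $H>0$, so you are not assuming more than the paper does. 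What your route buys is a self-contained argument using only classical Kalman-filter convergence theory; what the paper's route buys is an explicit dual LQR interpretation of $\Phi_R$ that makes the independence of the limit from the initial condition transparent (the terminal-cost term vanishes along optimal trajectories).
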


We refer to Appendix A for a proof of the proposition. 
In the sequel, we assume that $\tau \in \cal T$. Following Proposition~\ref{pro:existenceoflimit}, we formalize the optimal allocation problem as the problem of minimizing the steady-state estimation error: 
\begin{equation}\label{eq:defeta}
\ol \eta(\tau, R) := \tr(\Sigma(\infty; \tau, R)), 
\end{equation} 
over all possible allocation strategies~$R \in \cal{R}$.
}

\subsection{The small $\tau$ asymptotic}\label{ssec:asympt}

The optimal allocation problem described in the previous section is a combinatorial optimization problem, and is in general hard to solve, especially when both $N$ and $\rt$ are large.  We show in this subsection that in the $\tau \to 0$ asymptotic, major simplifications occur that ultimately lead us to a (strict) convex optimization problem which is thus tractable.  Specifically, we will show that when $\tau \to 0$, the exact order in which the measurement signals $y_i$ are sampled is not important, but only the number of times they are sampled in a period $\tau$ matters. In other words, the network manager only needs to solve Problem P1 in Subsection II-A, i.e. determine how many slots are allocated to each sensor, and {\em not} Problem P2, i.e. determine which slots are allocated to each sensor.

To proceed, we first note the following fact: When the sampling period of the measurement signal decreases, as a compensation, we need to let the norm of the observation vector $\bar c_i$ decrease (or, equivalently, let the variance of the measurement noise $v(k\tau)$ increase) at a rate  proportional to the {\it square root} of the sampling period. We thus write \begin{equation}\label{eq:barc}\bar c_i =   c_i \sqrt{\tau},\end{equation} for some vector $ c_i$ with fixed norm. This  scaling applies to all observation vectors $\bar c_i$, for $i = 1,\ldots, N$. 

{
We now fix an allocation strategy $R\in \cal{R}$, and focus on the update equation $\Sigma(k+1) = \Phi_R(\Sigma(k))$. Recall that the map $\Phi_R$ is defined as the composition of $\phi_{R_l}$ for $l = 1,\ldots, \rt$. Thus, to obtain from $\Sigma(k)$ to $\Sigma(k+1)$, we need to recursively  apply $\phi_{R_l}$ to obtain $\Sigma_{(l+1)}(k)$ from $\Sigma_{(l)}(k)$.  To this end, we assume that $R_1 = i$, i.e., the sensor~$i$ is used at the first slot. Then, by appealing to~\eqref{eq:covupdate} and~\eqref{eq:covplusupdate} and the scaling $\bar c_i \bar c_i^\top = \tau c_ic_i^\top$ defined in~\eqref{eq:barc}, we obtain that
\begin{equation}\label{eq:smallstepupdate}
\Sigma_{(1)}(k) = \Sigma_{(0)}(k) +   \Sigma_{(0)}(k) c_ic_i^\top \Sigma_{(0)}(k) \tau + \\\left(A\Sigma_{(0)}(k)+\Sigma_{(0)}(k) A^\top  +G G^\top\right)\tau_0  + {\rm o}(\tau),
\end{equation}
where ${\rm o}(\tau)$ denotes the higher order terms in~$\tau$. Similarly, if the sensor~$j$ is used in the next slot, i.e.,  $R_{2} =j $, then 
\begin{equation}\label{eq:smallstepupdate1}
\Sigma_{(2)}(k) = \Sigma_{(1)}(k) +   \Sigma_{(1)}(k) c_jc_j^\top \Sigma_{(1)}(k) \tau + \\\left(A\Sigma_{(1)}(k)+\Sigma_{(1)}(k) A^\top  +G G^\top\right)\tau_0  + {\rm o}(\tau).
\end{equation} 
We now plug~\eqref{eq:smallstepupdate} into~\eqref{eq:smallstepupdate1}. Then, up to the first order in~$\tau$, we obtain
\begin{equation}\label{eq:smallstepupdate2}
\Sigma_{(2)}(k) = \Sigma_{(0)}(k) +  \Sigma_{(0)}(k) (c_ic_i^\top + c_jc_j^\top) \Sigma_{(0)}(k) \tau + \\ 2\left(A\Sigma_{(0)}(k)+\Sigma_{(0)}(k) A^\top  + G G^\top\right)\tau_0  + {\rm o}(\tau).
\end{equation} 
By repeating this process for $\rt$ times, we then obtain $\Sigma(k+1)$ from $\Sigma(k)$ as follows:   
\begin{pro}\label{pro:pronice}
Let $R$ be an allocation strategy,  and $r_i$ be the number of slots times assigned to sensor~$i$ over a scheduling period~$\tau$. Then, 
\begin{equation}\label{eq:updateequation}
\Sigma(k+1) =  \Sigma(k) +  \tau \Big [ A\Sigma(k)+\Sigma(k) A^\top + \\ \sum_{i=1}^N r_i\Sigma(k) c_ic_i^\top \Sigma(k) +G G^\top \Big ] + {\rm o}(\tau).
\end{equation}
\end{pro}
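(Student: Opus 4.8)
The plan is to prove the formula by induction on the number of slots that have elapsed within a period, which formalizes the ``repeat $\rt$ times'' argument sketched just above. Write $\Sigma_0 := \Sigma(k) = \Sigma_{(0)}(k)$ and $M := A\Sigma_0 + \Sigma_0 A^\top + GG^\top$, and let $R_1,\dots,R_{\rt}$ be the sensors occupying the successive slots. I claim that for every $m \in \{0,1,\dots,\rt\}$,
\begin{equation*}
\Sigma_{(m)}(k) = \Sigma_0 + \tau \sum_{l=1}^m \Sigma_0\, c_{R_l}c_{R_l}^\top \Sigma_0 + m\,\tau_0\, M + {\rm o}(\tau).
\end{equation*}
The case $m=0$ is trivial, $m=1$ is exactly~\eqref{eq:smallstepupdate}, and $m=2$ is~\eqref{eq:smallstepupdate2}. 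Taking $m=\rt$, using $\rt\,\tau_0 = \tau$, and applying the bookkeeping identity $\sum_{l=1}^{\rt} c_{R_l}c_{R_l}^\top = \sum_{i=1}^N r_i\, c_ic_i^\top$ (which just groups the slots according to which sensor occupies them) yields precisely~\eqref{eq:updateequation}. So all the content lies in the inductive step.

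For the inductive step I would apply the single-slot expansion~\eqref{eq:smallstepupdate} to $\Sigma_{(m)}(k)$ with sensor $R_{m+1}$, obtaining
\begin{equation*}
\Sigma_{(m+1)}(k) = \Sigma_{(m)}(k) + \tau\, \Sigma_{(m)}(k)\, c_{R_{m+1}}c_{R_{m+1}}^\top \Sigma_{(m)}(k) + \tau_0\left(A\Sigma_{(m)}(k) + \Sigma_{(m)}(k)A^\top + GG^\top\right) + {\rm o}(\tau).
\end{equation*}
The induction hypothesis gives $\Sigma_{(m)}(k) = \Sigma_0 + {\rm O}(\tau)$, since each of its correction terms is ${\rm O}(\tau)$ (the sum has at most $\rt$ bounded terms, and $m\,\tau_0 \le \tau$). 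Substituting this into the two correction terms above is where the simplification occurs: the quadratic term becomes $\tau\, \Sigma_0 c_{R_{m+1}}c_{R_{m+1}}^\top \Sigma_0 + {\rm O}(\tau^2)$, and the propagation term becomes $\tau_0 M + \tau_0\,{\rm O}(\tau)$; because $\tau_0 = \tau/\rt$, both error pieces ${\rm O}(\tau^2)$ and $\tau_0\,{\rm O}(\tau)$ are ${\rm o}(\tau)$. Adding these to the induction hypothesis for $\Sigma_{(m)}(k)$ reproduces the claimed formula at level $m+1$.

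The conceptual point I would stress is that this is exactly what makes the steady-state error order-independent: every slot contributes its measurement correction $\tau\, \Sigma_0 c_{R_l}c_{R_l}^\top \Sigma_0$ evaluated at the \emph{common} leading covariance $\Sigma_0$, so any dependence on the order in which the sensors are visited is banished into the ${\rm O}(\tau^2)$ remainder. Hence only the multiplicities $r_i = |\{\,l : R_l = i\,\}|$ survive at first order, which is the entire justification for reducing Problem~P2 to Problem~P1. The main obstacle I anticipate is not the algebra but the uniform control of the remainder: each application of~\eqref{eq:smallstepupdate} carries its own ${\rm o}(\tau)$ error, and I am summing $\rt$ of them, so I must ensure the accumulated error is genuinely ${\rm o}(\tau)$ rather than degenerating as the slot count grows. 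Since $\rt$ is fixed and finite, a finite sum of ${\rm o}(\tau)$ terms is ${\rm o}(\tau)$; to make this airtight I would observe that all intermediate covariances $\Sigma_{(m)}(k)$ remain in a fixed compact neighborhood of $\Sigma_0$ as $\tau \to 0$, so the Taylor remainder in the single-slot map---arising from expanding $e^{A\tau_0}$, the integral $\int_0^{\tau_0} e^{As}GG^\top e^{A^\top s}\,ds$, and the matrix inverse in~\eqref{eq:covplusupdate}---is bounded uniformly in $m$. That uniformity is the part that deserves care.
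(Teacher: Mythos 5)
Your proposal is correct and follows essentially the same route as the paper, which derives the single-slot expansion, composes it twice to get~\eqref{eq:smallstepupdate2}, and then asserts the result ``by repeating this process for $\rt$ times''; your induction on the slot index $m$, with the observation that $\Sigma_{(m)}(k)=\Sigma_0+{\rm O}(\tau)$ so that each slot's correction can be evaluated at the common leading covariance $\Sigma_0$, is precisely the formalization of that iteration. Your added care about the uniform control of the accumulated remainder (finitely many, namely $\rt$, terms each ${\rm o}(\tau)$, with all intermediate covariances staying in a compact neighborhood of $\Sigma_0$) is a point the paper leaves implicit, and it is handled correctly.
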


Now, let $\tau$ be sufficiently small such that~\eqref{eq:smalltau} holds. Then, from Proposition~\ref{pro:existenceoflimit}, the steady state $\Sigma(\infty; \tau, R)$ exists, and satisfies the following expression: 
$$
\Sigma(\infty; \tau, R) = \phi_R(\Sigma(\infty; \tau, R)).
$$

It thus follows from Proposition~\ref{pro:pronice} that if we express $\Sigma(\infty; \tau, R)$ as $$\Sigma(\infty; \tau, R) = \Sigma^{(0)}(R) + \tau \Sigma^{(1)}(R) + \cdots, $$ then its zeroth order term $\Sigma^{(0)}(R)$ must satisfy the following algebraic Ricatti equation (ARE):
\begin{equation}\label{eq:algriccattiforzeroth}
A\Sigma^{(0)}(R) + \Sigma^{(0)}(R) A^\top  \\ - \sum_{i=1}^N r_i \Sigma^{(0)}(R) c_ic_i^\top\Sigma^{(0)}(R) +GG^\top = 0.  
  \end{equation}
Note that if $(A, \ol c)$ is an observable pair, then so is $(A, c)$, with $c := [c_1,\ldots, c_N] = \ol c \sqrt{\tau} $. Thus, $\Sigma^{(0)}(R)$ is the unique positive semi-definite solution to~\eqref{eq:algriccattiforzeroth}.    
We further note that $\Sigma^{(0)}(R)$ depends only on the $r_i$'s, i.e., the numbers of slots assigned to the sensors, but {\em not} on the order in which we sample these sensors.   

We conclude this section with the following fact which will be essential to the analysis of the optimal allocation problem: For an allocation strategy $R$, we consider the following system with continuous observation signals (compared to~\eqref{eq:Model1Cont}): 
\begin{equation}\label{eq:ModelWeaklyCoupledpre}
 \left\{
\begin{array}{l}
dx = Ax   dt + G  dw \\
dz_i =  \sqrt{r_i} c_i^{\top} x dt + d\nu_i, \hspace{10pt}  i = 1,\ldots, N.
\end{array}
\right. 
\end{equation} 
The MSE estimate of its state given the past observations is given by the Kalman-Bucy filter, and if we denote by $\Sigma(t)$ the covariance of the estimation error at time $t$, then it is well known that $\Sigma$ obeys the following continuous-time Riccati differential equation: 
\begin{equation}\label{eq:eqXX1}
\dot \Sigma = A\Sigma +\Sigma A^\top -\sum_{i=1}^N r_i \Sigma c_i c_i^\top \Sigma +GG^\top.\end{equation}
Furthermore, since $(A,c)$ is an observable pair, the solution of~\eqref{eq:eqXX1}, with any initial condition $\Sigma(0)\ge 0$,  converges to a unique positive semi-definite matrix $\Sigma(\infty)$, which satisfies the same equation~\eqref{eq:algriccattiforzeroth} as $\Sigma^{(0)}(R)$ does. We thus obtain the following result: 

\begin{Corollary}\label{cor:cornice}
In the limit $\tau \to 0$, the steady-state estimation error, as defined in~\eqref{eq:defeta},  of the sampled system~\eqref{eq:Model1Cont} for a fixed allocation strategy~$R$ coincides with the estimation error of system~\eqref{eq:ModelWeaklyCoupledpre}: $$\lim_{\tau \to 0} \ol \eta(\tau;R) = \tr(\Sigma(\infty)).$$\, 
\end{Corollary}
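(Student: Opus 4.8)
The plan is to identify the zeroth-order term $\Sigma^{(0)}(R)$ of the sampled-system steady state with the continuous-time steady state $\Sigma(\infty)$ of the Kalman--Bucy filter for~\eqref{eq:ModelWeaklyCoupledpre}, and then pass the trace through the limit $\tau\to 0$. The crux is that both matrices are positive semi-definite solutions of one and the same ARE~\eqref{eq:algriccattiforzeroth}, which has a unique such solution under observability.

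First I would fix $R\in\cal R$ and write $\Sigma_\tau:=\Sigma(\infty;\tau,R)$ for the steady state, which exists by Proposition~\ref{pro:existenceoflimit} for $\tau\in\cal T$ and satisfies the fixed-point equation $\Sigma_\tau=\Phi_R(\Sigma_\tau)$. Substituting the first-order expansion of $\Phi_R$ supplied by Proposition~\ref{pro:pronice}, cancelling $\Sigma_\tau$ from both sides, and dividing by $\tau$, I obtain
\begin{equation*}
A\Sigma_\tau+\Sigma_\tau A^\top-\sum_{i=1}^N r_i\,\Sigma_\tau c_ic_i^\top\Sigma_\tau+GG^\top={\rm o}(\tau)/\tau .
\end{equation*}
The right-hand side tends to $0$, so in the limit $\tau\to 0$ the steady state satisfies the ARE~\eqref{eq:algriccattiforzeroth} exactly.

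To make this rigorous I would next show that the family $\{\Sigma_\tau\}$ is uniformly bounded as $\tau\to 0$; this is the step I expect to be the main obstacle, since the remainder in Proposition~\ref{pro:pronice} was obtained for a fixed matrix and must here be controlled uniformly over the relevant set of matrices in order to justify that ${\rm o}(\tau)/\tau$ is genuinely negligible at the fixed point. Granting boundedness, the Bolzano--Weierstrass theorem extracts convergent subsequences $\Sigma_{\tau_k}\to\Sigma^\star$; passing to the limit in the displayed identity shows that every limit point $\Sigma^\star$ is positive semi-definite and solves~\eqref{eq:algriccattiforzeroth}. Because $(A,\ol c)$ is observable (equivalently $(A,c)$ with $c=\ol c\sqrt{\tau}$), standard ARE theory guarantees a \emph{unique} positive semi-definite solution; hence $\Sigma^\star=\Sigma^{(0)}(R)$, the limit point is unique, and therefore $\Sigma_\tau\to\Sigma^{(0)}(R)$ as $\tau\to 0$.

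Finally I would observe that $\Sigma(\infty)$, the steady state of the Riccati differential equation~\eqref{eq:eqXX1} associated with the continuous-observation system~\eqref{eq:ModelWeaklyCoupledpre}, satisfies the very same ARE~\eqref{eq:algriccattiforzeroth} (namely~\eqref{eq:eqXX1} with $\dot\Sigma=0$) and is positive semi-definite. By the uniqueness just invoked, $\Sigma^{(0)}(R)=\Sigma(\infty)$. Since the trace is continuous, it follows that $\lim_{\tau\to 0}\ol\eta(\tau;R)=\lim_{\tau\to 0}\tr(\Sigma_\tau)=\tr(\Sigma^{(0)}(R))=\tr(\Sigma(\infty))$, which is the asserted identity.
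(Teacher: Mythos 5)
Your proposal is correct and follows essentially the same route as the paper: both identify the steady state of the sampled system, via the fixed-point equation and the expansion of Proposition~\ref{pro:pronice}, with the unique positive semi-definite solution of the ARE~\eqref{eq:algriccattiforzeroth}, and then equate that with $\Sigma(\infty)$ of system~\eqref{eq:ModelWeaklyCoupledpre} by uniqueness under observability. Your subsequence/compactness treatment of the limit (and your flagging of the need for uniform control of the ${\rm o}(\tau)$ remainder) is a more careful rendering of the step the paper handles by simply positing the expansion $\Sigma(\infty;\tau,R)=\Sigma^{(0)}(R)+\tau\Sigma^{(1)}(R)+\cdots$, but it is the same argument in substance.
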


\section{The Optimal Solution for Resource Allocation} \label{sec:optimalsol}
\subsection{On Algebraic Riccati Equations}
In this subsection, we establish the properties of Algebraic Riccati Equation (ARE) that are needed to prove the results of this paper. For $r>0$,
we introduce the following ARE:   
\begin{equation}\label{eq:RiccatiEquation}
A\Sigma + \Sigma A^\top - r\Sigma cc^\top \Sigma + GG^\top = 0, 
\end{equation}
It is well known that if $(A,c)$ is observable, then~\eqref{eq:RiccatiEquation} admits a unique positive semi-definite matrix $\Sigma$ as its solution. Furthermore, if $GG^\top > 0$, then $\Sigma > 0$. 
It should be clear that the positive semi-definite solution $\Sigma$ to ARE~\eqref{eq:RiccatiEquation} is nothing but the steady state of the Riccati differential equation~\eqref{eq:eqXX1}. Thus, the trace of $\Sigma$, denoted by $\tr(\Sigma)$, is the MSE in steady state for the Kalman-Bucy filter of system~\eqref{eq:ModelWeaklyCoupledpre}. We will sometimes emphasize the dependence of $\Sigma$ on $r$ by  writing $\Sigma(r)$ explicitly. We further denote the first and second derivatives (taken entry-wise) of $\Sigma$ with respect to $r$ as 
$$\Sigma'(r) := \frac{d\, \Sigma(r)}{ d r} \hspace{10pt} \mbox{ and } \hspace{10pt} 
\Sigma''(r) := \frac{d^2\, \Sigma(r)}{dr^2}. $$ 
We investigate below the dependence of $\Sigma(r)$ on $r$. We start with the following definition: 

\begin{Definition}[Regular triplet]\label{def:regulartriplet}
Let $(A, c)$ be an observable pair, and $P$ be a positive definite matrix. We say that $\(A, c, P\)$ is a {\bf regular triplet} if the following  condition is satisfied: 
there is a number $r >0$ such that the pair $(A,\Sigma c)$ is controllable, where  $\Sigma$ is the unique positive definite solution of the following ARE:
\begin{equation}\label{eq:AREinDef1}
A \Sigma + \Sigma A^\top - r \Sigma cc^\top \Sigma + P = 0.
\end{equation}\,     
\end{Definition}

We show below that if a triplet $(A,c,P)$ is regular for some $r>0$, then it is regular for \emph{all} $r>0$. Precisely, we have the following fact:

\begin{pro}\label{pro:oneforall}
Suppose that $(A, c, P)$ is a regular triplet, with $(A, c)$ observable and $P > 0$; then, for any positive number $r$, the pair $(A, \Sigma(r)c )$, with $\Sigma(r)$ the unique positive definition solution to~\eqref{eq:AREinDef1}, is controllable. 
\end{pro}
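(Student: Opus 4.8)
My plan is to characterize controllability of $(A,\Sigma(r)c)$ through the Popov--Belevitch--Hautus (PBH) test and to convert it into a scalar inequality that can be tracked as $r$ varies. Recall the PBH criterion: $(A,\Sigma(r)c)$ is controllable if and only if there is no left eigenvector $w$ of $A$ (say $w^{*}A=\lambda w^{*}$, $w\neq 0$) with $w^{*}\Sigma(r)c=0$. Since $(A,\Sigma(r_0)c)$ is controllable for the distinguished $r_0$ in Definition~\ref{def:regulartriplet}, $A$ is necessarily cyclic, and it suffices to rule out, for every such eigenpair $(\lambda,w)$, that the scalar $w^{*}\Sigma(r)c$ vanishes at some $r>0$.

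The key step is to sandwich the ARE~\eqref{eq:AREinDef1} between $w^{*}$ and $w$. Using $w^{*}A=\lambda w^{*}$ and $A^{\top}w=\bar\lambda w$, together with the fact that $c,\Sigma,P$ are real and $\Sigma=\Sigma^{\top}$, all cross terms collapse and I obtain the identity
\begin{equation}
r\,|w^{*}\Sigma(r)c|^{2} = 2\operatorname{Re}(\lambda)\,w^{*}\Sigma(r)w + w^{*}Pw ,
\end{equation}
valid for every $r>0$. Writing $g_w(r):=w^{*}\Sigma(r)w>0$ and $B_w:=w^{*}Pw/(2|\operatorname{Re}\lambda|)$, two consequences are immediate. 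If $\operatorname{Re}(\lambda)\ge 0$ the right-hand side is at least $w^{*}Pw>0$, so $w^{*}\Sigma(r)c\neq 0$ automatically and this eigenvector never obstructs controllability. If $\operatorname{Re}(\lambda)<0$, the left-hand side is nonnegative, which forces $g_w(r)\le B_w$ for all $r$, with equality precisely when $w^{*}\Sigma(r)c=0$, i.e. exactly at the uncontrollable parameters.

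It therefore remains to show that $g_w$ can never touch its ceiling $B_w$. Here I will use two structural facts about $\Sigma(r)$. First, differentiating the ARE yields the Lyapunov equation $A_{cl}\Sigma'+\Sigma'A_{cl}^{\top}=\Sigma cc^{\top}\Sigma$ with $A_{cl}:=A-r\Sigma cc^{\top}$ Hurwitz (the stabilizing solution), whence $\Sigma'(r)\le 0$ and so $g_w$ is nonincreasing in $r$. Second, by the implicit function theorem applied to the ARE---whose linearization is the invertible Lyapunov operator $X\mapsto A_{cl}X+XA_{cl}^{\top}$---the map $r\mapsto\Sigma(r)$ is real-analytic on $(0,\infty)$. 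Now suppose $g_w(r^{*})=B_w$ at some $r^{*}>0$. Because $g_w$ is nonincreasing and bounded above by $B_w$, it must equal $B_w$ on the whole interval $(0,r^{*}]$; hence $w^{*}\Sigma(r)c\equiv 0$ there, and real-analyticity propagates this vanishing to all of $(0,\infty)$, in particular $w^{*}\Sigma(r_0)c=0$, contradicting controllability at $r_0$. Thus no eigenvector obstructs controllability at any $r$, and $(A,\Sigma(r)c)$ is controllable for every $r>0$.

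I expect the main obstacle to be this last step: the identity alone only gives $g_w\le B_w$, and monotonicity alone only settles $r>r_0$; it is the \emph{combination} of monotonicity (to force $g_w\equiv B_w$ on an entire subinterval once it touches the ceiling) with real-analyticity (to extend a zero of $w^{*}\Sigma(r)c$ from that subinterval to the controllable point $r_0$) that closes the argument. Establishing the Hurwitz property of $A_{cl}$, the resulting sign $\Sigma'(r)\le 0$, and the analytic dependence of $\Sigma(r)$ are the technical points that must be in place, but all three are standard consequences of Riccati/Lyapunov theory under the observability of $(A,c)$ and $P>0$.
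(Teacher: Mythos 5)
Your proof is correct, but it takes a genuinely different route from the paper's. The paper (Proposition~\ref{pro:equivalentstatements} in Appendix~B) proves the stronger three-way equivalence between regularity of $(A,c,P)$, the strict inequality $\Sigma'(r)<0$ for all $r>0$, and controllability of $(A,\Sigma(r)c)$ for all $r>0$: the link between controllability and the sign of $\Sigma'$ comes from the integral representation $\Sigma'(r)=-\int_0^\infty e^{(A-r\Sigma cc^\top)t}\,\Sigma cc^\top\Sigma\, e^{(A^\top-rcc^\top\Sigma)t}\,dt$, and a degeneracy $\Sigma'(r_1)v=0$ is propagated \emph{forward} in $r$ using convexity ($\Sigma''\ge 0$, from Proposition~\ref{lem:SigmaDerivative}) and then to all of $(0,\infty)$ by analyticity, contradicting $\Sigma'(\widetilde r)<0$ at the regular parameter. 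You instead test controllability by PBH and track, for each left eigenvector $w$ of $A$, the scalar $w^{*}\Sigma(r)c$; the quadratic-form identity $r\,|w^{*}\Sigma(r)c|^{2}=2\Re(\lambda)\,w^{*}\Sigma(r)w+w^{*}Pw$ extracted from~\eqref{eq:AREinDef1} reduces the question to whether $g_w(r)=w^{*}\Sigma(r)w$ ever reaches the ceiling $B_w$, which you exclude by propagating the touching \emph{backward} in $r$ via monotonicity ($\Sigma'\le 0$) and then the vanishing of $w^{*}\Sigma(r)c$ to $r_0$ via analyticity. The two arguments share the skeleton ``a degeneracy at one $r$ persists on an interval, hence everywhere by analyticity, contradicting controllability at $r_0$,'' but the objects propagated and the mechanisms forcing persistence (convexity of $\Sigma$ versus monotonicity of $\Sigma$ plus the ARE identity) are different; your identity also yields a finer fact the paper does not record, namely that left eigenvectors with $\Re(\lambda)\ge 0$ can never obstruct controllability of $(A,\Sigma(r)c)$, regardless of regularity. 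Two minor remarks: the cyclicity of $A$ you infer from controllability at $r_0$ holds only when $c$ is a single column (the paper allows $p>1$) and is in any case unused in your argument; and the three technical ingredients you flag ($A-r\Sigma cc^\top$ Hurwitz, $\Sigma'(r)\le 0$, analyticity of $\Sigma(r)$) are indeed all available, from Proposition~\ref{lem:SigmaDerivative}, its proof in Appendix~C, and the analyticity result of~\cite{delchamps_analytic1984} invoked in Lemma~\ref{lem:domainofeta}.
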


We refer to Appendix~B for a proof of Proposition~\ref{pro:oneforall}. 
We note here that not all triplets $(A, c, P)$, with $(A,c)$ observable and $P> 0$, are regular. An illustration of a counter example is given below: 
\begin{exmp}
Let a triplet $(A, c, P)$ be given by 
$$
\begin{array}{lll}
A = 
\begin{pmatrix}
-1 & 0\\
0 & -2
\end{pmatrix}, \, & 
c = 
\begin{pmatrix}
1 \\
1
\end{pmatrix}, \, &
P = 
\begin{pmatrix}
5 & -3\\
-3 & 4
\end{pmatrix}. 
\end{array}
$$
One sees that $(A, c)$ is observable and $P > 0$. Let $r = 1$; then, the positive definite solution to~\eqref{eq:AREinDef1} is given by
$$
\Sigma = 
\begin{pmatrix}
2 & -1 \\
-1 & 1
\end{pmatrix}.
$$   
But the pair
$$
(A, \Sigma c) = \( 
\begin{pmatrix}
-1 & 0\\
0 & -2
\end{pmatrix},    \, 
\begin{pmatrix}
1 \\
0
\end{pmatrix}
\)
$$
 is not controllable. 
\end{exmp}

However, we show in Appendix~B that a triplet $(A, c, P)$, with $(A,c)$ observable and $P > 0$,  is {\it generically} regular.  
With the preliminaries above, we state the following fact:

\begin{pro}\label{lem:SigmaDerivative}
Let  $(A,c)$ be an observable pair and $GG^\top>0$. Let $\Sigma(r)$, for $r > 0$, be the positive definite solution to 
\begin{equation}\label{eq:LyapunovEqSingle1}
A \Sigma(r) + \Sigma(r) A^\top - r \Sigma(r) cc^\top \Sigma(r) + GG^\top = 0.
\end{equation}  
Then, $$\Sigma'(r) \le 0 \hspace{5pt} \mbox{ and }  \hspace{5pt} \Sigma''(r) \ge 0.$$
Furthermore, if the triplet $\(A, c, GG^\top\)$ is regular, then the inequalities above are strict. 
\end{pro}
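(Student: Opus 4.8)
The plan is to work directly with the ARE~\eqref{eq:LyapunovEqSingle1} and to differentiate it with respect to $r$. Write $\Sigma=\Sigma(r)$, $Q:=GG^\top$, and introduce the closed-loop matrix $A_\Sigma:=A-r\,\Sigma cc^\top$. Since $\Sigma$ is the stabilizing solution of the filtering ARE and $(A,c)$ is observable, $A_\Sigma$ is Hurwitz. Differentiating~\eqref{eq:LyapunovEqSingle1} once and collecting terms gives the Lyapunov equation
\begin{equation}
A_\Sigma\,\Sigma'+\Sigma'\,A_\Sigma^\top=\Sigma cc^\top\Sigma .
\end{equation}
Because $A_\Sigma$ is Hurwitz and the right-hand side is positive semi-definite, this admits the unique closed-form solution
\begin{equation}
\Sigma'=-\int_0^\infty e^{A_\Sigma t}\,(\Sigma c)(\Sigma c)^\top\,e^{A_\Sigma^\top t}\,dt\le 0,
\end{equation}
which is exactly the first assertion $\Sigma'(r)\le0$.

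The integral above is the controllability Gramian of the pair $(A_\Sigma,\Sigma c)$, so $\Sigma'<0$ strictly if and only if $(A_\Sigma,\Sigma c)$ is controllable. I would deduce this from regularity as follows: since $A_\Sigma=A+(\Sigma c)(-rc^\top)$ differs from $A$ by state feedback along the input direction $\Sigma c$, and controllability is invariant under such feedback, $(A_\Sigma,\Sigma c)$ is controllable if and only if $(A,\Sigma c)$ is. By Proposition~\ref{pro:oneforall}, regularity of $(A,c,GG^\top)$ guarantees that $(A,\Sigma(r)c)$ is controllable for every $r>0$; hence $\Sigma'(r)<0$.

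For the second derivative I would differentiate the Lyapunov equation once more. Using $A_\Sigma'=-(\Sigma+r\Sigma')cc^\top$, this yields
\begin{equation}
A_\Sigma\,\Sigma''+\Sigma''\,A_\Sigma^\top=M,\qquad M:=2\,\Sigma cc^\top\Sigma'+2\,\Sigma'cc^\top\Sigma+2r\,\Sigma'cc^\top\Sigma',
\end{equation}
so that $\Sigma''=-\int_0^\infty e^{A_\Sigma t}M e^{A_\Sigma^\top t}\,dt$. The main obstacle is that $M$ is \emph{not} sign-definite: the term $2r\,\Sigma'cc^\top\Sigma'$ is positive semi-definite while the cross terms are indefinite, so $\Sigma''\ge0$ cannot be read off by inspection.

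To overcome this, I would test $\Sigma''$ against an arbitrary $x$ and propagate it along the closed-loop flow $\dot\zeta=A_\Sigma^\top\zeta$, $\zeta(0)=x$. Writing $p:=c^\top\Sigma\zeta$ and $q:=c^\top\Sigma'\zeta$, integrating $\frac{d}{dt}(\zeta^\top\Sigma''\zeta)=\zeta^\top M\zeta$ over $[0,\infty)$ reduces the claim $x^\top\Sigma''x\ge0$ to the scalar inequality $\int_0^\infty(4\,p^\top q+2r\,q^\top q)\,dt\le0$. The first-order equation supplies the key identity $\int_0^\infty|p|^2\,dt=-x^\top\Sigma'x$ (from $\frac{d}{dt}(\zeta^\top\Sigma'\zeta)=|p|^2$) and also shows that $q$ is the causal convolution of $p$ against the kernel $c^\top e^{A_\Sigma s}\Sigma c$. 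The resulting integral quadratic form in $p$ is then non-positive by a Parseval/KYP (positive-real) estimate built on the closed-loop transfer function of $(A_\Sigma,\Sigma c,c^\top)$; this frequency-domain step is where the real work lies. Finally, the controllability argument used for the first derivative makes the pertinent Gramian positive definite, upgrading $\Sigma''\ge0$ to $\Sigma''>0$ under regularity, which completes the proof.
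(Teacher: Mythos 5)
Your treatment of the first derivative is correct and coincides with the strict-case half of the paper's own argument: differentiating the ARE yields the Lyapunov equation for $\Sigma'$ with Hurwitz closed-loop matrix $A_\Sigma$ and positive semi-definite source $\Sigma cc^\top\Sigma$, hence $\Sigma'\le 0$, with strict negativity exactly when $(A,\Sigma c)$ is controllable, which Proposition~\ref{pro:oneforall} supplies under regularity. The problem is the second derivative. You compute the source term $M$ correctly and correctly observe that it is not sign-definite, but the step you then invoke --- that the integral quadratic form $\int_0^\infty(4p^\top q+2r\,q^\top q)\,dt$ is non-positive ``by a Parseval/KYP (positive-real) estimate'' --- is precisely the content of the claim $\Sigma''\ge 0$, and you do not prove it. This is a genuine gap, and it is not closed by ordinary positive realness of $G(s)=c^\top(sI-A_\Sigma)^{-1}\Sigma c$: because the term $2r\int q^\top q\,dt$ is nonnegative and must be absorbed by the cross term, what you actually need is the stronger pointwise bound $r|G(i\omega)|^2\le 2\Re G(i\omega)$, i.e.\ that $G(i\omega)$ stays in the disk of radius $1/r$ centered at $1/r$. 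That bound does hold --- the ARE gives the identity $2\Re G(i\omega)=\|\sqrt{GG^\top}\,\Sigma^{-1}(i\omega I-A_\Sigma)^{-1}\Sigma c\|^2+r|G(i\omega)|^2$ --- so your route is completable in principle, but the completion is the hard part and is missing; the claimed upgrade to $\Sigma''>0$ under regularity is likewise only asserted. (Minor point: your $q$ is an \emph{anticausal} convolution of $p$, $q(t)=-\int_0^\infty c^\top e^{A_\Sigma s}\Sigma c\;p(t+s)\,ds$, which matters for the Parseval step.)

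The paper sidesteps all of this with one substitution: set $K:=\Sigma^{-1}$ and $\overline A:=-A$. Then $K$ solves the dual ARE $\overline A^\top K+K\overline A-KGG^\top K+r\,cc^\top=0$, whose closed-loop matrix $\overline A^\top-KGG^\top$ is stable because $GG^\top>0$ makes $(\overline A,\sqrt{GG^\top})$ controllable. Differentiating this equation twice produces Lyapunov equations for $K'$ and $K''$ whose source terms, $-cc^\top$ and $2K'GG^\top K'$, are \emph{both} sign-definite, so $K'\ge 0$ and $K''\le 0$ by inspection; converting back via $K\Sigma=I$ gives $\Sigma'=-\Sigma K'\Sigma\le 0$ and $\Sigma''=2\Sigma K'\Sigma K'\Sigma-\Sigma K''\Sigma\ge 0$. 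Under regularity, $\Sigma'<0$ forces $K'=-K\Sigma'K>0$ and hence $\Sigma''\ge 2\Sigma K'\Sigma K'\Sigma>0$. I would recommend either adopting this inverse-covariance argument or supplying the frequency-domain inequality in full.
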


We refer to Appendix~C for a proof of Proposition~\ref{lem:SigmaDerivative}.  

\begin{Remark}
Note that a similar result has been derived in~\cite{wredenhagen1993curvature}. Specifically, the authors there consider the following ARE: 
$$
A \Sigma + \Sigma A^\top - \Sigma S \Sigma + P = 0,
$$ 
and they have shown that the solution $\Sigma$ is convex in $S$ and concave in $P$. In Proposition~\ref{lem:SigmaDerivative}, we  provide in addition a sufficient but generic condition for $\Sigma(r)$ to be a strictly convex function in~$r$. 
\end{Remark}
With a slight abuse of notation, we let
 \begin{equation}\label{eq:optimaleta}
\eta(r) := \tr(\Sigma(r))
\end{equation}
and set $$
\eta'(r) := \frac{d \eta(r)}{d r} \hspace{10pt} \mbox{and} \hspace{10pt} \eta''(r) := \frac{d^2 \eta(r)}{d r^2}.$$ 
As an immediate consequence of Proposition~\ref{lem:SigmaDerivative}, we have

\begin{cor}\label{cor:OptimalEtaDer}
Let $\(A, c, GG^\top\)$ be a regular triplet, and $\eta(r)$ be as in~\eqref{eq:optimaleta}.    Then, 
$$\eta'(r) < 0 \hspace{5pt} \mbox{ and } \hspace{5pt}  \eta''(r) > 0.$$
\,
\end{cor}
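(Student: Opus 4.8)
The plan is to observe that the Corollary is a direct trace-level restatement of Proposition~\ref{lem:SigmaDerivative}, so the entire argument reduces to (i) commuting the trace with the derivative and (ii) reading off the sign of the trace of a definite matrix. First I would recall from~\eqref{eq:optimaleta} that $\eta(r) = \tr(\Sigma(r))$, and note that $\Sigma'(r)$ and $\Sigma''(r)$ are defined entry-wise. Since the trace is a fixed linear functional (the sum of the diagonal entries), differentiation and trace commute, giving
\begin{equation*}
\eta'(r) = \tr\!\big(\Sigma'(r)\big) \qquad \text{and} \qquad \eta''(r) = \tr\!\big(\Sigma''(r)\big).
\end{equation*}

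Next I would invoke Proposition~\ref{lem:SigmaDerivative}. Because $(A, c, GG^\top)$ is assumed to be a regular triplet (and $(A,c)$ is observable with $GG^\top > 0$ by hypothesis), that proposition yields the \emph{strict} inequalities $\Sigma'(r) < 0$ and $\Sigma''(r) > 0$; that is, $\Sigma'(r)$ is negative definite and $\Sigma''(r)$ is positive definite. Here I would remark that $\Sigma(r)$, being the positive definite solution of the ARE~\eqref{eq:LyapunovEqSingle1}, is symmetric, so its entry-wise derivatives $\Sigma'(r)$ and $\Sigma''(r)$ are symmetric matrices as well; this is what makes the definiteness statements meaningful.

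Finally I would close the argument by recalling the elementary fact that for a symmetric matrix the trace equals the sum of its (real) eigenvalues. A negative definite symmetric matrix has all eigenvalues strictly negative, so $\tr(\Sigma'(r)) < 0$; a positive definite matrix has all eigenvalues strictly positive, so $\tr(\Sigma''(r)) > 0$. Combining with the identities above gives $\eta'(r) < 0$ and $\eta''(r) > 0$, as claimed. There is essentially no obstacle in this proof: the only points requiring care are the (routine) justification that trace commutes with the entry-wise derivative and the observation that the symmetry of $\Sigma(r)$ lets one pass from matrix definiteness to the sign of the trace. This is precisely why the statement is labeled an immediate consequence of Proposition~\ref{lem:SigmaDerivative}.
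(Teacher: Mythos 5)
Your proof is correct and matches the paper's treatment: the paper states this corollary as an immediate consequence of Proposition~\ref{lem:SigmaDerivative}, and your argument simply spells out the routine steps (trace commutes with the entry-wise derivative, and the trace of a negative/positive definite symmetric matrix is strictly negative/positive). Note also that the hypotheses of Proposition~\ref{lem:SigmaDerivative} are automatically met, since by Definition~\ref{def:regulartriplet} a regular triplet already presupposes $(A,c)$ observable and $GG^\top>0$.
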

\subsection{On weakly coupled networks}

In this subsection, we return to a network of $N$ weakly-coupled $n$-dimensional stochastic linear systems, denoted by $S_1,\ldots, S_N$:
\begin{equation}\label{eq:ModelWeaklyCoupled2}
S_i= \left\{
\begin{array}{l}
dx_i = \left (A_ix_i +\epsilon \sum_{j} A_{ij}x_j \right) \, dt + G_i \, dw_i, \\
y_{i,(l)}(k) =  \bar c_i^{\top} x_{i,(l)}(k) + v_{i,(l)}(k).
\end{array}
\right. 
\end{equation}
Using Corollary~\ref{cor:cornice}, we know that for a fixed allocation strategy $R \in {\cal R}$,  the MSE estimation problem for the sampled system~\eqref{eq:ModelWeaklyCoupled2} is equivalent, in the asymptotic case $\tau \to 0$, to the MSE  estimation problem for the continuous system below:
\begin{equation}\label{eq:ModelWeaklyCoupled}
S_i= \left\{
\begin{array}{l}
dx_i = \left (A_ix_i +\epsilon \sum_{j} A_{ij}x_j \right) \, dt + G_i \, dw_i \\
dz_i =  \sqrt{r_i} c_i^{\top} x_i dt + d\nu_i,
\end{array}
\right. 
\end{equation}
where we recall $c_i = \bar c_i / \sqrt{\tau} $ (see~\eqref{eq:barc}), and  $r_i$ is the number of time slots assigned to the sensor~$i$ over a time period~$\tau$ by the allocation strategy $R$.  
For the remainder of the paper, we assume that the sub-systems in~\eqref{eq:ModelWeaklyCoupled} satisfy the following mild assumption:

\begin{Assumption}\label{asmp:fornetworksystem}
Each sub-system~$S_i$, for $i = 1,\ldots, N$,  satisfies the following condition: The pair $(A_i, c_i)$ is observable, $G_i$ is nonsingular, and the triplet $\(A_i, c_i, G_iG^\top_i \)$ is regular. 
\end{Assumption}

To proceed, we first introduce some notations that will make the derivations of the subsequent results easier. 
Let $A$ and $G$ be two $nN\times nN$ matrices defined as follows: 
\begin{equation}\label{eq:defANetSys}
A:= 
\begin{pmatrix}
A_{1} & \epsilon A_{12} & \ldots & \epsilon A_{1N}\\
\epsilon A_{21} & A_{2} & \ldots & \epsilon A_{2N}\\
\vdots & \vdots & \ddots & \vdots \\
\epsilon A_{N1} & \epsilon A_{N2} & \ldots & A_{N} 
\end{pmatrix}, 
\end{equation} 
and 
$$G= \diag(G_1,\ldots, G_N).$$  
Let $x := (x_1,\ldots, x_N) \in \R^{nN}$  be defined by concatenating $x_i$, for $i = 1,\ldots, N$. Similarly, we define $dw:= (dw_1,\ldots, dw_N)$. With the notations above, we can re-write the network dynamics~\eqref{eq:ModelWeaklyCoupled} as follows:
\begin{equation}\label{eq:XxX4}
	\left\{
\begin{array}{l}
dx = Ax dt  + G dw, \\
dz_i =  \sqrt{r_i} c_i^{\top} x_i dt + d\nu_i, \hspace{10pt} \forall\, i = 1,\ldots, N.
\end{array}
\right. 
\end{equation}
Recall that $r_{\min} > 0$ is the lower bound such that $r_i \ge r_{\min} $ for all $i = 1,\ldots, N$. So, for ease of notation, let $\gamma$ be a vector in $\R^N$ defined as follows:    
\begin{equation}\label{eq:defgamma}
\gamma := (\gamma_1,\ldots, \gamma_n), \hspace{10pt} \mbox{ with } \hspace{5pt} \gamma_i:= {r}_i - r_{\min}.
\end{equation}
It should be clear that $\gamma$ lies in $\R^N_+$, the nonnegative orthant of $\R^N$. 
We gather all the observation vectors $\sqrt{r_i} c_i$ in a matrix $c_{\gamma}$ as follows:
$$
c_{\gamma}:= \diag(\sqrt{\gamma_1 + r_{\min}}c_1,\ldots, \sqrt{\gamma_N + r_{\min}} c_N).
$$
Now, let $dz := (dz_1,\ldots, dz_N)$ be the concatenation of all the observations, and similarly, let $d\nu:= (d\nu_1,\ldots, d\nu_N)$. 
We then further simplify~\eqref{eq:XxX4} as follows:    
\begin{equation}\label{eq:MMODEL}
\left\{
\begin{array}{l}
dx = Ax dt  + G dw, \\
dz = c_\gamma^\top x dt + d\nu. 
\end{array}
\right.
\end{equation}
The network dynamics is thus parametrized by the vector $\gamma\in \R_+^N$. To proceed, we first note the following fact about the observability of the pair $(A, c_{\gamma})$: 

\begin{lem}
Let $c := \diag(c_1,\ldots, c_N)$. If $(A, c)$ is an observable pair, then so is $(A,c_\gamma)$ for all $\gamma \in \R^N_+$. Furthermore, if Assumption 1 holds, then there exists $\epsilon_0$ such that for all $\epsilon$ with $|\epsilon| \leq |\epsilon_0|$, the pair $(A,c)$, and hence the pair $(A,c_\gamma)$, is observable
\end{lem}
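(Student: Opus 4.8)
The plan is to prove the two assertions separately, and both are mild.

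\emph{Scaling invariance.} For the first claim I would observe that $c_\gamma$ differs from $c$ only by right multiplication by a positive-definite diagonal matrix, which cannot affect observability. Writing $D := \diag(\sqrt{\gamma_1 + r_{\min}}\, I_p, \ldots, \sqrt{\gamma_N + r_{\min}}\, I_p)$, a $pN \times pN$ matrix, one checks blockwise that $c_\gamma = c D$. Since $\gamma_i \ge 0$ and $r_{\min} > 0$, every diagonal entry of $D$ is strictly positive, so $D$ is invertible and $\ker c_\gamma^\top = \ker(D c^\top) = \ker c^\top$. Observability of $(A, c)$ is equivalent to the largest $A$-invariant subspace contained in $\ker c^\top$ being $\{0\}$ (equivalently, to the Hautus--Popov--Belevitch rank condition for $(A,c)$); since $(A, c)$ and $(A, c_\gamma)$ share the same output kernel, these two observability conditions coincide, and $(A, c)$ observable forces $(A, c_\gamma)$ observable for every $\gamma \in \R^N_+$. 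Equivalently, the observability matrix of $(A, c_\gamma)$ is the left product of the invertible block-diagonal matrix $\diag(D, \ldots, D)$ with that of $(A, c)$, so the two have equal rank.

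\emph{The decoupled case $\epsilon = 0$.} For the second claim I would first settle observability at $\epsilon = 0$ and then perturb. At $\epsilon = 0$ the matrix in~\eqref{eq:defANetSys} is block diagonal, $A|_{\epsilon=0} = \diag(A_1, \ldots, A_N)$, and $c = \diag(c_1, \ldots, c_N)$ is block diagonal as well, so $c^\top A^k|_{\epsilon=0} = \diag(c_1^\top A_1^k, \ldots, c_N^\top A_N^k)$ for all $k$. After a row permutation, the observability matrix of $(A|_{\epsilon=0}, c)$ is therefore block diagonal, with diagonal blocks the individual observability matrices of the pairs $(A_i, c_i)$. By Assumption~1 each $(A_i, c_i)$ is observable, so each such block has full column rank $n$; hence the whole matrix has full column rank $nN$ and $(A|_{\epsilon=0}, c)$ is observable.

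\emph{The perturbation step and main obstacle.} It remains to propagate observability to small $\epsilon$. Because $A$ depends affinely on $\epsilon$ and $c$ is constant, every entry of the observability matrix $\mathcal{O}(\epsilon)$ of $(A, c)$ is a polynomial in $\epsilon$, and full column rank $nN$ is equivalent to the nonvanishing of some $nN \times nN$ minor. The previous paragraph exhibits a minor $M(\epsilon)$ with $M(0) \neq 0$; as a polynomial in $\epsilon$ nonzero at the origin, $M$ has only isolated roots, so there is $\epsilon_0 > 0$ with $M(\epsilon) \neq 0$ for all $|\epsilon| \le |\epsilon_0|$, making $(A, c)$ observable on that interval, and then by the first part $(A, c_\gamma)$ observable there as well for every $\gamma \in \R^N_+$. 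This lemma is essentially standard and no step is computationally demanding; the only point requiring care is the bookkeeping in the perturbation argument---tracking a single fixed minor that survives at $\epsilon = 0$ as a nonzero polynomial, rather than trying to control the rank uniformly---or, equivalently, invoking the openness of the full-rank locus together with continuity of $\epsilon \mapsto \mathcal{O}(\epsilon)$.
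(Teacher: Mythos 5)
Your proof is correct and follows essentially the same route as the paper's: the first part via the observation that $c_\gamma = cD$ for an invertible diagonal $D$ (so the observability matrices have equal rank), and the second part by checking observability of the decoupled system at $\epsilon=0$ using Assumption~1 and then invoking openness of observability in $\epsilon$. Your version merely makes explicit the minor/polynomial argument that the paper compresses into ``observability is an open condition.''
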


\begin{proof}
The proof of the first statement directly follows from the definitions of $c$ and $c_\gamma$, as well as the expressions of the observability matrices for the pairs $(A,c)$ and $(A,c_\gamma)$. For the second part, note that from Assumption 1, $(A_i,c_i)$ is observable for all $i = 1,\ldots, N$, and hence $(A,c)$ is observable for $\epsilon=0$. Since being observable is an open condition, the statement follows.
\end{proof}
For the remainder of the paper, we shall assume  that $\epsilon\in [0,\epsilon_0]$, which then implies that $(A, c_\gamma)$ is observable for all $\gamma \in \R^N_+$.  
Now, for a given $\gamma\in \R^N_+$, let $\Sigma$ be the (unique) positive definite solution of the ARE:
\begin{equation}\label{eq:RiccatiEquationNetSys}
A\Sigma + \Sigma A^\top - \Sigma c_\gamma c^\top_\gamma \Sigma  + GG^\top = 0.
\end{equation}
As before, we write $\Sigma(\gamma)$ to emphasize the dependence of $\Sigma$ on $\gamma$. Since from our assumptions, $(A, c_\gamma)$ is observable for all $\gamma\in \R^N_+$,  $\Sigma(\gamma)$ is well defined for all $\gamma\in \R^N_+$.  In fact, we can establish the following result:

\begin{lem}\label{lem:domainofeta} 
Let $Q$ be a  subset of $\R^N$ defined as 
$$
Q := \{ \gamma = (\gamma_1,\ldots, \gamma_N) \mid \gamma_i  \ge - r_{\min}\}.  
$$ 
Then, the unique positive definite solution $\Sigma(\gamma)$ to~\eqref{eq:RiccatiEquationNetSys} is well defined over $Q$ and is analytic in $\gamma$. 
\end{lem}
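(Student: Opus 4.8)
The plan is to read the lemma as an application of the analytic implicit function theorem to the Riccati map, with the \emph{stabilizing} property of the solution supplying the nondegeneracy. Define $F(\Sigma,\gamma) := A\Sigma + \Sigma A^\top - \Sigma c_\gamma c_\gamma^\top \Sigma + GG^\top$ on $\mathrm{Sym}(nN)\times Q$. The first thing I would record is that, although $c_\gamma$ itself contains the square roots $\sqrt{\gamma_i+r_{\min}}$, the object entering the ARE is $c_\gamma c_\gamma^\top = \diag\!\big((\gamma_1+r_{\min})c_1c_1^\top,\ldots,(\gamma_N+r_{\min})c_Nc_N^\top\big)$, which is \emph{affine} in $\gamma$. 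Hence $F$ is a polynomial map in $(\Sigma,\gamma)$, in particular real-analytic on all of $\mathrm{Sym}(nN)\times Q$. This reduces the lemma to showing that for each $\gamma\in Q$ there is a unique positive definite zero $\Sigma(\gamma)$ of $F(\cdot,\gamma)$ at which the partial differential $D_\Sigma F$ is invertible.

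For the nondegeneracy I would compute $D_\Sigma F(\Sigma,\gamma)[\delta] = \tilde A\,\delta + \delta\,\tilde A^\top$ with $\tilde A := A - \Sigma c_\gamma c_\gamma^\top$, a Lyapunov operator that is invertible precisely when $\tilde A$ has no two eigenvalues summing to zero, and in particular whenever $\tilde A$ is Hurwitz. So I would work with the stabilizing solution of the ARE, for which $\tilde A$ is Hurwitz by construction. Existence and uniqueness of such a solution is the standard filtering-ARE result, requiring $(A,c_\gamma)$ detectable and $(A,G)$ stabilizable: the latter is automatic since $G$ is nonsingular, so $(A,G)$ is controllable, and the former follows from the observability of $(A,c_\gamma)$ established just before the lemma. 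Finally, because $GG^\top>0$, every positive semidefinite solution is automatically positive definite: if $\Sigma v=0$ for some $v\neq 0$, then $v^\top\Sigma=0$ as well, and testing the ARE against $v$ leaves only $v^\top GG^\top v=0$, a contradiction. This identifies $\Sigma(\gamma)$ as the unique positive definite solution and shows $D_\Sigma F$ is invertible there.

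With existence, uniqueness, and an invertible Jacobian at every point, the analytic implicit function theorem produces a real-analytic local branch $\gamma\mapsto\Sigma(\gamma)$. To globalize, I would use the uniqueness of the positive definite (equivalently, stabilizing) solution together with the fact that $Q$ is convex, hence connected: the locally defined analytic branches agree with the unique solution on overlaps and therefore glue into a single real-analytic map on $Q$. Equivalently, one may phrase this as analytic continuation of the solution already available on $\R^N_+$, which persists as long as $\tilde A$ keeps its eigenvalues off the imaginary axis.

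The step I expect to be the main obstacle is securing existence and uniqueness of the stabilizing (hence positive definite) solution uniformly up to the boundary of $Q$, where some $r_i=\gamma_i+r_{\min}$ vanishes and the associated sensor is switched off. There the detectability of $(A,c_\gamma)$ is not delivered by the preceding lemma — at $\epsilon=0$ a switched-off subsystem with an unstable $A_i$ is undetectable — so the argument must extend the observability/detectability of $(A,c_\gamma)$ over all of $Q$ by exploiting the weak coupling, or equivalently verify that $\tilde A$ retains no imaginary-axis eigenvalues along paths from $\R^N_+$ into $Q$, so that the continuation never breaks down. Once this is in place, the analyticity and positive definiteness follow routinely from the two preceding steps.
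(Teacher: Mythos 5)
Your argument is sound on the interior of $Q$, but it takes a more self-contained route than the paper. The paper's proof is two sentences: observability of $(A,c_\gamma)$ is inherited from that of $(A,c)$ whenever $\gamma_i + r_{\min} > 0$ for all $i$ (the blocks of $c_\gamma$ are positive rescalings of those of $c$, so the observability matrices have the same column space), which gives existence and uniqueness of the positive definite solution by standard ARE theory, and analyticity in $\gamma$ is obtained by citing Delchamps (1984). Your implicit-function-theorem argument --- the observation that $c_\gamma c_\gamma^\top$ is affine in $\gamma$ so the Riccati map is polynomial, the identification of $D_\Sigma F$ with the Lyapunov operator of $A - \Sigma c_\gamma c_\gamma^\top$, its invertibility at the stabilizing solution, and the gluing of local analytic branches by uniqueness over the connected set $Q$ --- is essentially a from-scratch proof of the result the paper imports from that reference. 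What the citation buys is brevity; what your argument buys is an explicit account of where each hypothesis enters (the Hurwitz property of the closed-loop matrix, and $GG^\top>0$ for strict positive definiteness).

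On the boundary issue you flag: your instinct is correct, and the gap is real, but it is a gap in the paper's own proof as well, not something your proposal fails to reproduce. The paper's observability argument only covers $\gamma_i + r_{\min} > 0$, yet the conclusion is asserted for all of $Q$, which includes points with $\gamma_i + r_{\min} = 0$; at such a point sensor $i$ is switched off, and (e.g.\ for $\epsilon=0$ with $A_i$ not Hurwitz) the pair $(A,c_\gamma)$ fails to be detectable and the ARE need not admit any positive semidefinite solution. So the lemma is only justified on the interior of $Q$. Fortunately that is all the rest of the paper uses: $Q$ is introduced solely to provide an open set containing $\R^N_+$ (and the simplices $\Sp[\sigma]$) on which $\eta$ is differentiable, and the interior $\{\gamma_i > -r_{\min}\}$ already contains $\R^N_+$ since $r_{\min}>0$. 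You should therefore not try to force detectability up to the boundary of $Q$; restricting the claim to the interior is the correct repair, and with that restriction your proof is complete.
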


\begin{proof}  
Since the pair $(A, c)$, for $c = \diag(c_1,\ldots,c_N)$, is observable, we obtain using the definition of $c_\gamma$ that  so is the pair $(A,c_\gamma)$ if $\gamma_i + r_{\min} > 0$ for all  $i  = 1,\ldots, N$. We thus have that  $\Sigma(\gamma)$ is well defined for all $\gamma\in Q$.  The fact that $\Sigma(\gamma)$ is analytic in $\gamma$ is a consequence of results of~\cite{delchamps_analytic1984}.   
\end{proof}

In the remainder of the section, we solve the optimal allocation problem formalized in Section~\ref{sec:prelim} in the asymptotic case $\tau\to 0$. Specifically,  our  goal is to minimize the steady state mean squared estimation error: 
\begin{equation}
\label{eq:cost1}
\eta(\gamma)= \tr(\Sigma(\gamma)),
\end{equation}
over $\gamma  \in \R^N_+$, subject to the constraint that    \begin{equation}\label{eq:constgamma}\sum^N_{i = 1}\gamma_i\le \sigma:= \rt - Nr_{\min}.\end{equation} 
To proceed, first recall that from Corollary~\ref{cor:OptimalEtaDer}, the first derivative of $\eta(\gamma)$ is negative, and hence an optimal $\gamma$ has to meet the bound in~\eqref{eq:constgamma}, i.e.,  $\sum^N_{i=1}\gamma_i = \sigma$. We thus consider $\gamma$ as a vector parameter  in $\Sp[\sigma]$, the simplex of height~$\sigma$ in $\R^N$ defined in~\eqref{eq:defsimplex}.  
With the preliminaries above, we establish our first main result, captured by the following theorem:

\begin{Theorem}\label{thm:MTHM1} 
Let  $\eta : \Sp[\sigma] \longrightarrow \R$  be defined in~\eqref{eq:cost1}. Then, under Assumption~\ref{asmp:fornetworksystem}, the following hold:
\begin{enumerate}
\item For $\epsilon$ small, $\eta$ is strictly convex over $\Sp[\sigma]$. 
\item A point $\gamma^* \in \Sp[\sigma]$ is the unique minimum of $\eta$  if and only if there exists a number $\mu \le 0$ such that
$$
\displaystyle \left. \frac{\partial \eta(\gamma)}{\partial \gamma_i} \right |_{\gamma = \gamma^*}= \mu \hspace{10pt} \mbox{if }  \gamma_i \neq 0, 
$$
and
$$
\displaystyle \left. \frac{\partial \eta(\gamma)}{\partial \gamma_j} \right |_{\gamma = \gamma^*} \ge \mu \hspace{10pt} \mbox{if }  \gamma_j = 0.  
 $$
\end{enumerate}\,
\end{Theorem}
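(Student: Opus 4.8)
The plan is to prove both parts by treating the decoupled system $\epsilon = 0$ first and then perturbing. For part (1), I would begin by noting that at $\epsilon = 0$ the matrix $A$ in~\eqref{eq:defANetSys} is block-diagonal, and since $G = \diag(G_1,\ldots,G_N)$ and $c_\gamma = \diag(\sqrt{\gamma_i + r_{\min}}\,c_i)$ are block-diagonal as well, the network ARE~\eqref{eq:RiccatiEquationNetSys} decouples into $N$ independent single-system equations of the form~\eqref{eq:LyapunovEqSingle1} with $r = \gamma_i + r_{\min}$. Hence $\Sigma(\gamma) = \diag(\Sigma_1(\gamma_1 + r_{\min}),\ldots,\Sigma_N(\gamma_N + r_{\min}))$, the cost is separable, $\eta(\gamma) = \sum_{i=1}^N \eta_i(\gamma_i + r_{\min})$, and its Hessian is the diagonal matrix $\diag(\eta_1'',\ldots,\eta_N'')$. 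By Assumption~\ref{asmp:fornetworksystem} each triplet $(A_i,c_i,G_iG_i^\top)$ is regular, so Corollary~\ref{cor:OptimalEtaDer} gives $\eta_i'' > 0$; thus the Hessian at $\epsilon = 0$ is positive definite at every point of $\Sp[\sigma]$.

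For the perturbation step I would use that $A$ depends linearly, hence analytically, on $\epsilon$, and that by the analyticity invoked in Lemma~\ref{lem:domainofeta} (following~\cite{delchamps_analytic1984}) the solution $\Sigma(\gamma)$ is analytic in the coefficients of the ARE, and so jointly in $(\gamma,\epsilon)$ wherever $(A,c_\gamma)$ stays observable. Consequently the Hessian $H(\gamma,\epsilon)$ of $\eta$ in $\gamma$ is continuous in $(\gamma,\epsilon)$. Since $\Sp[\sigma]$ is compact and $H(\gamma,0)$ is positive definite there, its smallest eigenvalue attains a positive minimum $\delta$ on $\Sp[\sigma]$; by uniform continuity of $H$ on the compact set $\Sp[\sigma]\times[-\epsilon_0,\epsilon_0]$ there is $\epsilon_1 > 0$ with $\lambda_{\min}(H(\gamma,\epsilon)) \ge \delta/2 > 0$ for all $\gamma\in\Sp[\sigma]$ and $|\epsilon|\le\epsilon_1$. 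A positive-definite Hessian (in particular its restriction to the tangent directions $\{v : \sum_i v_i = 0\}$ of the simplex) yields strict convexity of $\eta$ over $\Sp[\sigma]$. This perturbation argument, ensuring that continuity of the Hessian preserves positive definiteness uniformly over the compact simplex, is the step I expect to require the most care.

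For part (2), strict convexity on the compact convex set $\Sp[\sigma]$ immediately gives existence and uniqueness of the minimizer $\gamma^*$. The constraints are the affine equality $\sum_i\gamma_i = \sigma$ and the linear inequalities $\gamma_i \ge 0$, so the linear constraint qualification holds automatically and the KKT conditions are necessary and sufficient for global optimality. Introducing a multiplier $\mu$ for the equality and multipliers $\lambda_i \ge 0$ for $\gamma_i \ge 0$, stationarity reads $\partial\eta/\partial\gamma_i = \mu + \lambda_i$, and complementary slackness $\lambda_i\gamma_i = 0$ then gives exactly the stated dichotomy: $\partial\eta/\partial\gamma_i = \mu$ whenever $\gamma_i \ne 0$, and $\partial\eta/\partial\gamma_i \ge \mu$ whenever $\gamma_i = 0$.

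It remains to show $\mu \le 0$, which I would do by proving every partial derivative is nonpositive. Differentiating~\eqref{eq:RiccatiEquationNetSys} with respect to $\gamma_i$ and writing $A_{cl} := A - \Sigma c_\gamma c_\gamma^\top$ for the closed-loop matrix, one obtains the Lyapunov identity $A_{cl}\,(\partial_{\gamma_i}\Sigma) + (\partial_{\gamma_i}\Sigma)\,A_{cl}^\top = \Sigma E_i \Sigma$, where $E_i \ge 0$ is the block-diagonal matrix carrying $c_ic_i^\top$ in its $i$-th block. Because $GG^\top > 0$ and $(A,c_\gamma)$ is observable, the stabilizing solution makes $A_{cl}$ Hurwitz; together with $\Sigma E_i\Sigma \ge 0$ this forces $\partial_{\gamma_i}\Sigma \le 0$ in the Loewner order, whence $\partial_{\gamma_i}\eta = \tr(\partial_{\gamma_i}\Sigma) \le 0$. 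Since $\sigma > 0$ guarantees some $\gamma_i^* > 0$, the common value $\mu$ on the active coordinates is $\le 0$. The sign computation in this last step is the other place where I would be careful to track the orientation of the Lyapunov equation.
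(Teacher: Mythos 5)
Your proof is correct and follows essentially the same route as the paper: decouple at $\epsilon=0$ to get a positive-definite diagonal Hessian via the regularity of each triplet $(A_i,c_i,G_iG_i^\top)$ and Corollary~\ref{cor:OptimalEtaDer}, then use analyticity of $\Sigma$ in $\epsilon$ together with compactness of $\Sp[\sigma]$ to preserve strict convexity for small $\epsilon$, and finally invoke the KKT conditions for item~2. Your explicit Lyapunov-equation argument showing $\partial_{\gamma_i}\Sigma\le 0$ (hence $\mu\le 0$) fills in a detail the paper leaves implicit, and the sign bookkeeping there is correct.
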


\begin{proof} 
We first prove item~1. Note that the cost function $\eta(\gamma)$ implicitly depends on $\epsilon$ via $A$ in the ARE~\eqref{eq:RiccatiEquationNetSys}. We thus write $\eta(\gamma; \epsilon)$ explicitly to emphasize the dependence.  Now, suppose that $\epsilon = 0$; then, the $N$ linear sub-systems are  decoupled. In this case, the cost function $\eta(\gamma; 0)$ is also decoupled and  can be written as the sum of independent functions of the entries $\gamma_i$ as $$
\eta(\gamma; 0) = \sum^N_{i=1} \eta_{i}(\gamma_i) 
$$
where $\eta_{i}(\gamma_i)$ is given by 
$$
\eta_{i}(\gamma_i) =  \tr\( \Sigma_i \),
$$  
for $\Sigma_i$ the positive definite solution to the  ARE
\begin{equation*}
A_i \Sigma_i + \Sigma_i A_i^\top - (\gamma_i + r_{\min}) \Sigma_i c_i c_i^\top \Sigma_i + G_iG_i^\top = 0.  
\end{equation*}
Since each triplet $(A_i, c_i, G_iG_i^\top )$ is regular,  we know from Corollary~\ref{cor:OptimalEtaDer}  that $\eta''_{i}(\gamma_i) > 0$ for all $\gamma_i > 0$, and hence
$$
\eta''(\gamma; 0) = 
\diag\(
\eta''_{1}(\gamma_1), \ldots, \eta''_{N}(\gamma_N) \) > 0, \hspace{10pt} \forall\, \gamma \in \Sp[\sigma], 
$$
which implies that $\eta(\gamma; 0)$ is strictly convex.  
Note that the matrix $A$ (defined in~\eqref{eq:defANetSys}) is linear in $\epsilon$, and the matrix $\Sigma$ in~\eqref{eq:RiccatiEquationNetSys} are analytic in $A$~\cite{delchamps_analytic1984}. Hence, the cost function $\eta(\gamma; \epsilon)$ is analytic in $\epsilon$. In particular, if $\epsilon$ is sufficiently small, then $\eta(\gamma; \epsilon)$ is strictly convex. 
The second item of Theorem~1 then directly follows the Karush-Kuhn-Tucker (KKT) conditions~\cite{luenberger1997optimization}. 
This completes the proof. 
\end{proof}

\section{Monotonicity properties and filtration of optimal resource allocation}\label{sec:monotone}

While Theorem~\ref{thm:MTHM1} is only valid in the $\tau \to 0$ asymptotic, in practical situations, we can expect that the asymptotic regime is reached for relatively low sampling frequencies (which are related to the largest real part of the eigenvalues of $A$) per the approximation of~\eqref{eq:covupdate}. In this section, we assume that the equivalence of system~\eqref{eq:ModelWeaklyCoupled1} and~\eqref{eq:ModelWeaklyCoupled} holds, and we investigate how the unique minimum of $\eta(\gamma)$ varies  as the total available number of slots within a time-period $\rt$ (and thus $\sigma$) increases.
 
 To this end, we introduce the set\begin{equation}\label{eq:defconvexhull}
\Ch[\sigma_{\max}] := \sqcup_{0\le \sigma \le \sigma_{\max}}\Sp[\sigma]. 
\end{equation}
It is  clear that $\Ch[\sigma_{\max}]$ is the convex hull of the  origin in $\R^N$ and  the points $\sigma_{\max} e_1,\ldots, \sigma_{\max} e_N$, where $e_1,\ldots, e_N$ form the canonical basis of $\R^N$. Now, we define the function $ f: [0,\sigma_{\max}] \longrightarrow \Ch[\sigma_{\max}]$ as follows:
 \begin{equation}\label{eq:littlef} f(\sigma) := \arg\min_{\gamma \in \Sp[\sigma]} \eta(\gamma).\end{equation}  We call $f$ {\bf the optimal allocation map}.  By construction, $f(\sigma) \in \Sp[\sigma]$.

We show in this section, starting from the model given in~\eqref{eq:MMODEL}, that the optimal allocation $f(\sigma)$ is  ``well-behaved'' with respect to the increase in the total amount of resource $\sigma$ in the following sense: the sampling rate allocated to each sensor by the optimal strategy is {\it nondecreasing} in the total sampling rate. Moreover, it is strictly increasing if and only if the total sampling rate exceeds a certain threshold.  
To make this notion precise, we first borrow the following definition from measure theory: 

\begin{Definition}
Let $S$ be an arbitrary  set, and $\cal{P}(S)$ be the {\bf power set} of $S$. A map $\mathcal{F}: \R \longrightarrow \cal{P}(S)$ is  a {\bf filtration} if for $x_1 \leq x_2$, we have that $\cal{F}(x_1) \subseteq \cal{F}(x_2) $. 
\end{Definition}
 
The map $f$ is said to be {\bf monotonically increasing} if it is \emph{entry-wise} monotonically increasing, i.e., if we let $f_i$, for $i = 1,\ldots, N$, be the $i$-th entry of $f$, then, for $\sigma_1\le \sigma_2$,  
$$
f_i(\sigma_1) \le f_i(\sigma_2), \hspace{10pt} \forall \, i = 1,\ldots, N.
$$
We show in this section that the map $f$ is monotonically increasing provided that the subsystems~\eqref{eq:ModelWeaklyCoupled}  are weakly coupled. 

For ease of notation, let $\cal{I}:=\{1,\ldots, N\}$ be the index set. For a $\sigma \ge 0$, let $\cal{I}_\sigma$ be a subset of $\cal{I}$ defined as follows:  
$$
\cal{I}_\sigma := \{ i \in \cal{I} \mid f_i(\sigma) > 0 \}. 
$$
By definition,  if $j \notin \cal{I}_\sigma$, then $f_j(\sigma) = 0$. In other words, 
 $\cal{I}_{\sigma}$ is comprised of the indices of nonzero entries of $f(\sigma)$. 
We then  consider the map 
\begin{equation}\label{eq:bigF}
\mathcal{F}: [0, \sigma_{\max}] \longrightarrow \cal{P}(\cal{I}),
\end{equation} 
which sends $\sigma$, the total available rate, to $\cal{I}_{\sigma}$. Note that if we can show that $f$ is monotonically increasing,  then it follows immediately that $\cal{F}$ is a filtration. 

We now formalize the results stated above. First, recall that from Theorem~1, for a fixed $\sigma\in [0,\sigma_{\max}]$, the derivative $\eta'(\gamma)$ at  $\gamma^*(\sigma)=f(\sigma)$  satisfies the following condition: there exists a number $\mu_{\sigma}$ such that 
$$
\left\{
\begin{array}{ll}
\left. \frac{\partial \eta(\gamma)}{\partial \gamma_i} \right |_{\gamma = f(\sigma)} = \mu_\sigma, & \forall\, i\in \cal{I}_{\sigma},  \vspace{3pt}\\
 \left. \frac{\partial \eta(\gamma)}{\partial \gamma_j} \right |_{\gamma = f(\sigma)} \ge \mu_\sigma, & \forall \, j\notin \cal{I}_{\sigma}. 
\end{array}
\right.
 $$
We now collect together the indices for which we have an equality in the relation above: define the subset $\cal{J}_{\sigma}\subset \cal{I}$ to be 
\begin{equation}\label{eq:defJsigma}
\cal{J}_{\sigma}:= \left\{ i\in \cal{I} \mid \left. \frac{\partial \eta(\gamma)}{\partial \gamma_i }\right|_{\gamma = f(\sigma) }  = \mu_\sigma  \right\}.
\end{equation}
By construction,  we have $$\cal{I}_{\sigma} \subseteq \cal{J}_{\sigma},$$ 
and moreover, the equality $\cal{I}_{\sigma} = \cal{J}_{\sigma} $
 holds if and only if 
 $$
\left. \frac{\partial \eta(\gamma)}{\partial \gamma_j}  \right |_{\gamma = f(\sigma)} >  \mu_\sigma, \hspace{10pt} \forall \, j\notin \cal{I}_{\sigma}. 
$$
We will see soon that the equality indeed holds for almost all $\sigma \in [0,\sigma_{\max}]$. On the other hand, there are $\sigma$'s for which the equality does {\em not} hold. Specifically, these are the $\sigma$'s for which there exists a $j\in\cal{I}$ such that $$\left. \frac{\partial \eta(\gamma)}{\partial \gamma_j}\right |_{\gamma = f(\sigma)} =  \mu_\sigma \hspace{5pt} \mbox{ and } \hspace{5pt} f_j(\sigma)=0.$$ 
We thus collect these $\sigma$'s and define the set $\cal D$ as follows: 
\begin{equation}\label{eq:Alisuggested}
\cal{D} := \{ \sigma \in [0,\sigma_{\max}] \mid \cal{I}_{\sigma} \subsetneq \cal{J}_\sigma\} \cup \{0, \sigma_{\max}\}. 
\end{equation}
We now show that $\cal{D}$ is a {\em finite} set, and moreover, each $\sigma\in \S - \{0, \sigma_{\max}\}$ is a point of ``discontinuity" of the map~$\cal{F}$.  
Precisely, we establish below our second main result, captured by the following theorem: 

\begin{Theorem}\label{thm:MTHM2}
Let the maps $f$ and $\cal{F}$ be defined in~\eqref{eq:littlef} and~\eqref{eq:bigF}, respectively, and let the set $\cal{D}$ be defined in~\eqref{eq:Alisuggested}. Then, 
for $\epsilon$ sufficiently small, the following hold: 
\begin{enumerate}
\item There are only finitely many points in $\cal{D}$, which we label as 
$$
0 = \sigma_0 < \sigma_1 < \ldots < \sigma_m = \sigma_{\max}.  
$$
\item The map $f$ is continuous and monotonically increasing. Moreover, $f$ is continuously differentiable over each open interval $(\sigma_{i-1}, \sigma_{i})$ for $i = 1,\ldots, m$. 
\item The map $\mathcal{F}$ is a filtration. In particular, we have
$$
\varnothing = \cal{I}_{\sigma_0} \subsetneq \cal{I}_{\sigma_1} \subsetneq \ldots \subsetneq \cal{I}_{\sigma_m},
$$ 
and for each $i = 1,\ldots, m$, we have
$$
\cal{I}_\sigma = \cal{I}_{\sigma_i}, \hspace{10pt} \forall\, \sigma \in (\sigma_{i-1},  \sigma_i]. 
$$ 
\end{enumerate}\,
\end{Theorem}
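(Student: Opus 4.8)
The plan is to treat this as a parametric strictly convex program and to track how the KKT system of Theorem~\ref{thm:MTHM1} deforms as the budget $\sigma$ grows. Since $\eta$ is strictly convex on $\Sp[\sigma]$ for small $\epsilon$, the minimizer $f(\sigma)$ is unique, and a first (soft) step records that $f$ is continuous: the feasible correspondence $\sigma \mapsto \Sp[\sigma]$ is continuous and compact-valued, so a standard parametric-optimization argument (e.g.\ Berge's maximum theorem) together with uniqueness of the minimizer yields continuity of $f$. The real content lies in the local analysis on intervals where the active set $\cal{I}_\sigma$ is constant, and in the bookkeeping showing these intervals are finite in number and nested.

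Fix a candidate active set $S \subseteq \cal{I}$ and consider $\sigma$ with strict complementarity $\cal{I}_\sigma = \cal{J}_\sigma = S$. There the KKT conditions read $\partial\eta/\partial\gamma_i = \mu$ for $i \in S$, $\gamma_j = 0$ for $j \notin S$, and $\sum_{i \in S}\gamma_i = \sigma$, a square system in the unknowns $(\gamma_i)_{i\in S}$ and $\mu$. Differentiating in $\sigma$ gives the bordered-Hessian system
\begin{equation*}
\begin{pmatrix} H & -\mathbf{1} \\ \mathbf{1}^{\top} & 0 \end{pmatrix}
\begin{pmatrix} \dot\gamma_S \\ \dot\mu \end{pmatrix}
=
\begin{pmatrix} 0 \\ 1 \end{pmatrix},
\end{equation*}
where $H := (\partial^2\eta/\partial\gamma_i\partial\gamma_k)_{i,k\in S}$ is the Hessian of $\eta$ restricted to the face indexed by $S$ and $\mathbf{1}$ is the all-ones vector. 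By strict convexity $H>0$, so the bordered matrix is invertible; the implicit function theorem then furnishes $f$ and $\mu_\sigma$ as $C^1$ (indeed analytic) functions of $\sigma$ on the interval, and solving the system gives $\dot\mu = 1/(\mathbf{1}^{\top} H^{-1}\mathbf{1}) > 0$ and $\dot\gamma_S = H^{-1}\mathbf{1}/(\mathbf{1}^{\top} H^{-1}\mathbf{1})$.

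The crux --- and the step I expect to be the main obstacle --- is to show $\dot\gamma_S \ge 0$ entrywise, i.e.\ that $H^{-1}\mathbf{1} > 0$; this is where weak coupling enters. At $\epsilon = 0$ the cost decouples, $\eta(\gamma;0) = \sum_i \eta_i(\gamma_i)$, so $H$ is diagonal with entries $\eta_i''(\gamma_i)$, which are strictly positive by Corollary~\ref{cor:OptimalEtaDer} and, since the underlying Riccati coefficient $\gamma_i + r_{\min}$ stays $\ge r_{\min} > 0$, are bounded away from zero uniformly over the compact range of $\gamma_i$ (so there is no degeneracy even on the boundary $\gamma_i = 0$); hence $H^{-1}\mathbf{1} > 0$ at $\epsilon = 0$. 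Because there are only finitely many subsets $S$, the domain $\Ch[\sigma_{\max}]$ is compact, and $H$ depends analytically on $(\gamma,\epsilon)$, a uniform-continuity argument produces a single $\epsilon_0 > 0$ with $H^{-1}\mathbf{1} > 0$ for all $S$, all $\gamma$, and all $|\epsilon| \le \epsilon_0$. Consequently $\dot\gamma_i > 0$ for every active index on every constant-active-set interval: active allocations strictly increase while inactive ones are held at zero.

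It remains to globalize. Since each active $f_i$ strictly increases while positive, an index that has entered $\cal{I}_\sigma$ can never return to zero, so the active set is monotone nondecreasing and changes only by adding indices. To control these additions, consider for an inactive $j$ the slack $g_j(\sigma) := \partial\eta/\partial\gamma_j|_{f(\sigma)} - \mu_\sigma \ge 0$; at $\epsilon = 0$ one has $\dot g_j = -\dot\mu < 0$ (the first term vanishes by decoupling), and shrinking $\epsilon_0$ if necessary keeps $\dot g_j < 0$ for small $\epsilon$. Thus each $g_j$ is strictly decreasing while $j$ is inactive, hits zero at an isolated $\sigma$, and forces $j$ to enter there; as there are only $N$ indices, there are at most $N$ transition values, giving finiteness of $\cal{D}$ and the labeling $0 = \sigma_0 < \cdots < \sigma_m = \sigma_{\max}$. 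On each open interval $(\sigma_{i-1},\sigma_i)$ the active set is constant and $f$ is $C^1$ by the implicit-function step; continuity of $f$ across the $\sigma_i$ was already established, and entrywise monotonicity assembles from the interval-wise statements. Finally, the nested strict inclusions $\varnothing = \cal{I}_{\sigma_0} \subsetneq \cdots \subsetneq \cal{I}_{\sigma_m}$ together with $\cal{I}_\sigma = \cal{I}_{\sigma_i}$ on $(\sigma_{i-1},\sigma_i]$ are precisely the assertion that $\cal{F}$ is a filtration, which completes the three items.
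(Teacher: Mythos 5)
Your proposal is correct in substance and, at its core, rests on the same two pillars as the paper's proof: (i) the key positivity $H^{-1}\mathbf{1} \succ 0$ for every principal block $H$ of the Hessian, obtained by perturbing from the decoupled case $\epsilon = 0$ where the Hessian is diagonal with entries $\eta_i''(\gamma_i) > 0$ bounded away from zero on a compact set (the paper packages this as Lemma~\ref{lem:convexfunctionoverslices}: the inverse of every principal submatrix of $\eta''$ is positive definite and \emph{diagonally dominant}, which is exactly what forces the row sums $H^{-1}\mathbf{1}$ to be positive); and (ii) an implicit-function-theorem continuation on intervals of constant active set, followed by a finiteness argument from monotone growth of the active set. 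Where you genuinely differ: you obtain continuity of $f$ up front from Berge's maximum theorem plus uniqueness of the minimizer, whereas the paper derives right-continuity from the one-sided derivative formula (Proposition~\ref{pro:derivativeofwgammastar}) and left-continuity separately (Proposition~\ref{pro:propertiesoftau}); you differentiate the bordered KKT system in $\sigma$ directly, whereas the paper parametrizes the solution curve by the multiplier $\mu$ and composes with $s^{-1}$ --- the resulting formulas $\dot\gamma_S = H^{-1}\mathbf{1}/(\mathbf{1}^\top H^{-1}\mathbf{1})$ and $\partial_+ f_i = x_i'(\mu_\sigma)/s'(\mu_\sigma)$ are identical; and your finiteness argument via strictly decreasing slacks $g_j$ replaces the paper's iteration of the map $\rho$ with strictly growing active sets. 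These substitutions are legitimate, and the Berge step in particular is a cleaner way to get bare continuity.

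The one place you should tighten is the behavior at the transition points themselves. Your local analysis is carried out under strict complementarity $\cal{I}_\sigma = \cal{J}_\sigma$, but at a transition $\sigma_i$ one has $\cal{I}_{\sigma_i} \subsetneq \cal{J}_{\sigma_i}$, and ``forces $j$ to enter there'' requires two verifications: first, that continuing the KKT system with the enlarged active set $S \cup \{j\}$ yields $\dot\gamma_j > 0$, so that $j$ genuinely activates and the resulting $\gamma_j$ is feasible (nonnegative) immediately to the right; second, that the gradients of the still-inactive indices remain strictly above $\mu$ on a right-neighborhood, so that the continued curve actually satisfies the KKT characterization of item~2 of Theorem~\ref{thm:MTHM1} and hence coincides with $f$. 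The first follows from your own $H^{-1}\mathbf{1}\succ 0$ applied to the enlarged block; the second follows from smoothness of $\eta'$ together with the strict inequalities holding at $\sigma_i$ for indices outside $\cal{J}_{\sigma_i}$. This is precisely why the paper states Lemma~\ref{lem:parametrizedbymu} and Proposition~\ref{pro:derivativeofwgammastar} for the set $\cal{J}_\sigma$ rather than $\cal{I}_\sigma$. With those verifications added, your argument is complete.
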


In the remainder of this section, we establish the properties of the maps $f$ and $\cal{F}$ that are needed to prove Theorem~\ref{thm:MTHM2}.

\subsection{On right-differentiability of~$f$}\label{sec:rightdifferentiable}
 We first recall the definition of right-differentiability:
\begin{Definition}\label{def:rightdifferentiability}
Let $f:[a, b] \longrightarrow \R^N$ be an arbitrary  function defined over a closed interval $[a, b]$ of $\R$. We say that $f$ is {\bf right-continuous} at $x\in [a,b)$ if 
$$
\lim_{\varepsilon \to 0, \varepsilon > 0} f(x+\varepsilon) = f(x),
$$ 
and is {\bf right-differentiable} at $x$ if the limit
$$
\partial_+ f(x):=\lim_{\varepsilon \to 0, \varepsilon > 0} \frac{f(x+\varepsilon) - f(x)}{\varepsilon}
$$ 
exists. We call $\partial_+ f(x)$ the {\bf right-derivative} of $f$ at $x$. 
\end{Definition} 


Now, let $f$ be the optimal allocation map  allocation. We compute in this subsection the right-derivative of $f$. To proceed, we first 
state a fact about the Hessian matrix $$\eta''(\gamma; \epsilon):= \frac{\partial^2 \eta(\gamma;\epsilon)}{\partial \gamma^2}.$$ . 

\begin{lem}\label{lem:convexfunctionoverslices}
Let $\sigma_{\max}>0$ and $\Ch[\sigma_{\max}]$ be the convex set defined in~\eqref{eq:defconvexhull}.  
If the coupling strength $|\epsilon|$ is sufficiently small, then for any  $\gamma\in\Ch[\sigma_{\max}]$ and any principal submatrix $M$ of  $\eta''(\gamma,\epsilon)$, $M^{-1}$ exists. It is moreover positive definite and diagonally dominant.
\end{lem}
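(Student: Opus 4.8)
The plan is to treat $\eta''(\gamma;\epsilon)$ as a perturbation of its value at $\epsilon=0$, where the decoupling makes it diagonal and strictly positive definite, and then to show that strict diagonal dominance together with positive definiteness are stable under a small, uniformly controlled perturbation, and are inherited both by principal submatrices and by inverses. First I would pin down the unperturbed Hessian: exactly as in the proof of Theorem~\ref{thm:MTHM1}, at $\epsilon=0$ the cost decouples, $\eta(\gamma;0)=\sum_i\eta_i(\gamma_i)$, so $\eta''(\gamma;0)=\diag(\eta_1''(\gamma_1),\dots,\eta_N''(\gamma_N))$ is diagonal. Under Assumption~\ref{asmp:fornetworksystem} each triplet $(A_i,c_i,G_iG_i^\top)$ is regular, so Corollary~\ref{cor:OptimalEtaDer} gives $\eta_i''(\gamma_i)>0$ for every $\gamma_i\ge 0$ (recall $r_i=\gamma_i+r_{\min}\ge r_{\min}>0$). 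By Lemma~\ref{lem:domainofeta} each $\eta_i''$ is continuous on the compact interval $[0,\sigma_{\max}]$, so there exist uniform constants $0<\delta\le\Delta<\infty$ with $\delta\le \eta_i''(\gamma_i)\le\Delta$ for all $i$ and all $\gamma\in\Ch[\sigma_{\max}]$.

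Next I would quantify the perturbation. Since $A$ is affine in $\epsilon$ and $\Sigma(\gamma)$ is analytic in $A$, the map $(\gamma,\epsilon)\mapsto\eta''(\gamma;\epsilon)$ is jointly continuous (in fact analytic) on the compact set $\Ch[\sigma_{\max}]\times\{|\epsilon|\le\epsilon_1\}$, and the Hessian is symmetric throughout. Writing $\eta''(\gamma;\epsilon)=D(\gamma)+R(\gamma,\epsilon)$ with $D(\gamma)=\eta''(\gamma;0)$ diagonal and $R(\gamma,0)=0$, compactness yields a single constant $C$ with $\|R(\gamma,\epsilon)\|\le C|\epsilon|$ uniformly in $\gamma$; in particular every entry of $R$ is $O(|\epsilon|)$.

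Then I would establish the two stable properties. For the full Hessian $H=\eta''(\gamma;\epsilon)$, the diagonal entries are at least $\delta-C|\epsilon|$ while the off-diagonal absolute row sums are at most $(N-1)C|\epsilon|$, so choosing $|\epsilon|$ small enough that $\delta-C|\epsilon|>(N-1)C|\epsilon|$ makes $H$ strictly diagonally dominant with positive diagonal. Any principal submatrix $M$ indexed by $S$ keeps the diagonal entries and only drops off-diagonal terms, hence inherits symmetry, a positive diagonal, and strict diagonal dominance; by Gershgorin's theorem it is positive definite, so $M^{-1}$ exists and is positive definite. For the diagonal dominance of $M^{-1}$ I would use the resolvent identity $M^{-1}=D_S^{-1}-D_S^{-1}R_S M^{-1}$, where $D_S,R_S$ are the corresponding blocks of $D,R$; using $\|D_S^{-1}\|\le1/\delta$, $\|R_S\|\le\|R\|\le C|\epsilon|$, and $\|M^{-1}\|\le(\delta-C|\epsilon|)^{-1}$, this gives $M^{-1}=D_S^{-1}+\tilde R_S$ with $\|\tilde R_S\|\le C'|\epsilon|$. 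Hence the diagonal entries of $M^{-1}$ are at least $\Delta^{-1}-C'|\epsilon|$ and the off-diagonal row sums are at most $(N-1)C'|\epsilon|$, so shrinking $|\epsilon|$ once more makes $M^{-1}$ strictly diagonally dominant as well. Since there are only finitely many index sets $S$ and all constants are uniform over $\Ch[\sigma_{\max}]$, a single smallness threshold on $|\epsilon|$ serves for all $\gamma$ and all $M$ at once.

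The main obstacle is this last point, the control of the inverse. Positive definiteness and diagonal dominance of $M$ itself drop out immediately from the perturbation bound and Gershgorin, but diagonal dominance is \emph{not} preserved by inversion in general; the argument works only because of the special near-diagonal structure $M=D_S+O(|\epsilon|)$, which forces $M^{-1}$ to be a positive diagonal matrix plus an $O(|\epsilon|)$ remainder. Keeping every estimate uniform over the compact polytope $\Ch[\sigma_{\max}]$ and simultaneously over all $2^N-1$ principal submatrices is precisely what lets one smallness condition on $|\epsilon|$ suffice.
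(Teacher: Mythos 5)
Your proof is correct and follows essentially the same route as the paper's: the paper simply notes that $\eta''(\gamma;0)$ is diagonal and positive definite and then invokes ``continuity of eigenvalues'' together with compactness of $\Ch[\sigma_{\max}]$ to conclude for $|\epsilon|$ small. Your version spells out the uniform perturbation estimates in full---in particular the near-diagonal structure $M = D_S + O(|\epsilon|)$ that forces $M^{-1}$ to stay diagonally dominant, which is the one step the paper's one-line argument leaves implicit (diagonal dominance is not preserved under inversion in general).
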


\begin{proof}
We have shown in the proof of Theorem~\ref{thm:MTHM1} that for $\epsilon=0$ and for all $\gamma \in \Ch[\sigma_{\max}]$
$$
\eta''(\gamma;0) = 
\diag\(
\eta''_{1}(\gamma_1), \ldots, \eta''_{N}(\gamma_N) \) > 0. 
$$ 
Using a classical argument involving the continuity of eigenvalues with respect to the entries of a matrix and the compactness of $\Ch[\sigma_{\max}]$,  we conclude that for $\epsilon>0$ sufficiently small, Lemma~\ref{lem:convexfunctionoverslices} holds.
\end{proof}

We will implicitly assume, for the remainder of this section, that $|\epsilon|$ is  small enough so that Lemma~\ref{lem:convexfunctionoverslices} holds. With such a choice of $\epsilon$, we  derive the following fact about the right-derivative of $f$:
 
\begin{pro}\label{pro:derivativeofwgammastar}
The right-derivative $\partial_+f(\sigma)$ exists for all $\sigma$ in $[0,\sigma_{\max})$, and is right-continuous. Moreover,     
\begin{equation*}
\left\{
\begin{array}{ll}
\partial_+ f_i(\sigma)  > 0 &   \mbox{if } \hspace{5pt} i\in \cal{J}_\sigma,\\
\partial_+ f_i(\sigma)  = 0 &   \mbox{otherwise}.
\end{array}\right. 
\end{equation*}
where we recall that the definition of ${\cal J}_\sigma$ is given in~\eqref{eq:defJsigma}.
\end{pro}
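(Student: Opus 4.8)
The plan is to fix $\sigma_0\in[0,\sigma_{\max})$, build an explicit candidate for $f$ on a right-neighborhood of $\sigma_0$ out of the Karush-Kuhn-Tucker system of Theorem~\ref{thm:MTHM1} restricted to the active set $\cal{J}_{\sigma_0}$, and then read off $\partial_+ f(\sigma_0)$. Write $\gamma^0:=f(\sigma_0)$, $\cal{J}:=\cal{J}_{\sigma_0}$, and let $H:=\eta''_{\cal{J}\cal{J}}(\gamma^0)$ be the principal submatrix of the Hessian indexed by $\cal{J}$. On $\cal{J}$ the candidate should solve the equality-constrained system
\begin{equation*}
\partial_{\gamma_i}\eta(\gamma)=\mu\ \ (i\in\cal{J}),\qquad \sum_{i\in\cal{J}}\gamma_i=\sigma,\qquad \gamma_j=0\ \ (j\notin\cal{J}),
\end{equation*}
in the unknowns $(\gamma_{\cal{J}},\mu)$. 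Its Jacobian in $(\gamma_{\cal{J}},\mu)$ is the bordered matrix $\left(\begin{smallmatrix}H&-\mathbf{1}\\ \mathbf{1}^{\top}&0\end{smallmatrix}\right)$, which is invertible because $H>0$ by Lemma~\ref{lem:convexfunctionoverslices}. Since $\eta$ is analytic in the entries of $A$ and hence in $\gamma$ (by~\cite{delchamps_analytic1984}), the implicit function theorem produces an analytic solution $\gamma_{\cal{J}}(\sigma)$ on a two-sided neighborhood of $\sigma_0$ with $\gamma_{\cal{J}}(\sigma_0)=\gamma^0_{\cal{J}}$.

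The next step is to verify that this candidate is genuinely $f(\sigma)$ for $\sigma\in[\sigma_0,\sigma_0+\delta)$. Differentiating the reduced system gives $H\dot\gamma-\dot\mu\,\mathbf{1}=0$ and $\mathbf{1}^{\top}\dot\gamma=1$, whence $\dot\mu=(\mathbf{1}^{\top}H^{-1}\mathbf{1})^{-1}$ and $\dot\gamma=H^{-1}\mathbf{1}/(\mathbf{1}^{\top}H^{-1}\mathbf{1})$. I will show $\dot\gamma>0$ entrywise (see below), so that for $\sigma>\sigma_0$ every coordinate $\gamma_i(\sigma)$, $i\in\cal{J}$, moves strictly away from its value at $\sigma_0$ into the positive orthant; in particular the candidate stays feasible. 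For $j\notin\cal{J}$ the strict inequality $\partial_{\gamma_j}\eta(\gamma^0)>\mu_{\sigma_0}$, together with continuity of the analytic map $\sigma\mapsto(\gamma_{\cal{J}}(\sigma),\mu(\sigma))$, shows that $\partial_{\gamma_j}\eta(\gamma(\sigma))>\mu(\sigma)$ persists for $\sigma$ near $\sigma_0$. Thus the candidate satisfies the full KKT conditions of Theorem~\ref{thm:MTHM1}; by strict convexity these are sufficient and the minimizer is unique, so the candidate equals $f(\sigma)$ on $[\sigma_0,\sigma_0+\delta)$. Consequently $\partial_+ f(\sigma_0)$ exists and equals the implicit-function derivative, giving $\partial_+ f_i(\sigma_0)=\dot\gamma_i>0$ for $i\in\cal{J}_{\sigma_0}$ and $\partial_+ f_j(\sigma_0)=0$ for $j\notin\cal{J}_{\sigma_0}$ (the latter difference quotient being identically zero).

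It remains to prove the positivity $H^{-1}\mathbf{1}>0$ and the right-continuity. For the sign, Lemma~\ref{lem:convexfunctionoverslices} gives that $H^{-1}$ is positive definite and diagonally dominant, so $(H^{-1}\mathbf{1})_i=(H^{-1})_{ii}+\sum_{j\ne i}(H^{-1})_{ij}\ge(H^{-1})_{ii}-\sum_{j\ne i}|(H^{-1})_{ij}|\ge 0$; strict positivity follows for $|\epsilon|$ small by continuity from the decoupled case $\epsilon=0$, where $H$ is diagonal with positive entries and $(H^{-1}\mathbf{1})_i=1/H_{ii}>0$. For right-continuity, I note that on the open interval $(\sigma_0,\sigma_0+\delta)$ every coordinate indexed by $\cal{J}$ is strictly positive while all others are strictly inactive, so $\cal{I}_\sigma=\cal{J}_\sigma=\cal{J}_{\sigma_0}$ there; hence on this interval $f$ is $C^1$ with derivative given by the same formula evaluated at the continuously varying data $f(\sigma)$ and $H=\eta''_{\cal{J}\cal{J}}(f(\sigma))$, and letting $\sigma\downarrow\sigma_0$ yields $\partial_+ f(\sigma)\to\partial_+ f(\sigma_0)$.

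The step I expect to be the main obstacle is the verification that the implicit-function candidate is the global optimizer on a right-neighborhood: in particular, the bookkeeping that cleanly separates the indices into the strictly active set $\cal{I}_{\sigma_0}$, the marginal set $\cal{J}_{\sigma_0}\setminus\cal{I}_{\sigma_0}$ (which must get activated as $\sigma$ grows, via $\dot\gamma>0$), and the strictly inactive set $\cal{I}\setminus\cal{J}_{\sigma_0}$ (which must stay pinned at zero, via the persistence of the strict KKT inequality). The positivity of $\partial_+ f$ on $\cal{J}_{\sigma_0}$, although it is the crux of the statement, then follows routinely from the diagonal-dominance content of Lemma~\ref{lem:convexfunctionoverslices} combined with the small-$\epsilon$ continuity argument.
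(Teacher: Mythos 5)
Your proposal is correct and follows essentially the same route as the paper: both restrict to the active set $\mathcal{J}_{\sigma}$, invoke the positive definiteness and diagonal dominance of the inverse of the Hessian block from Lemma~\ref{lem:convexfunctionoverslices}, and check that the strict KKT inequalities for $j\notin\mathcal{J}_{\sigma}$ persist on a right-neighborhood. The only (inessential) difference is that the paper parametrizes the solution curve by the multiplier $\mu$ --- inverting the gradient map $\phi$ along the line $\mu\mathbf{1}$ and then inverting $s(\mu)=\sum_i x_i(\mu)$ --- whereas you parametrize directly by $\sigma$ through the bordered KKT system; the resulting derivative $H^{-1}\mathbf{1}/(\mathbf{1}^{\top}H^{-1}\mathbf{1})$ is identical to the paper's $x'(\mu_\sigma)/s'(\mu_\sigma)$.
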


As a consequence of Proposition~\ref{pro:derivativeofwgammastar}, we have
\begin{cor}\label{cor:atpalazzo}
For each $\sigma \in [0, \sigma_{\max})$, there exists a number $\delta > 0$ such that for all $\sigma' \in (\sigma, \sigma + \delta)$, 
\begin{equation}\label{eq:IequalJ}
\cal{I}_{\sigma'} = \cal{J}_{\sigma'} = \cal{J}_{\sigma}. 
\end{equation}
Moreover, for any such $\delta>0$, the map $f$ is continuously differentiable and monotonically increasing over $[\sigma, \sigma + \delta)$.  
\end{cor}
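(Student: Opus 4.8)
The plan is to deduce the corollary from Proposition~\ref{pro:derivativeofwgammastar} (the sign pattern and right-continuity of $\partial_+f$) together with the right-continuity of $f$ itself (a finite right-derivative at $\sigma$ forces $\lim_{\sigma'\to\sigma^+}f(\sigma')=f(\sigma)$), the analyticity of $\eta$ in $\gamma$ from Lemma~\ref{lem:domainofeta}, and the nonsingularity of principal submatrices of $\eta''$ from Lemma~\ref{lem:convexfunctionoverslices}. Write $J:=\cal{J}_\sigma$ and $g_i(\gamma):=\partial\eta(\gamma)/\partial\gamma_i$. First I would note $J\neq\varnothing$: differentiating the simplex constraint $\sum_i f_i(\sigma')=\sigma'$ from the right at $\sigma$ gives $\sum_i\partial_+f_i(\sigma)=1$, so $\partial_+f_i(\sigma)>0$ for some $i$, whence $i\in J$ by Proposition~\ref{pro:derivativeofwgammastar}.

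For the identity $\cal{I}_{\sigma'}=\cal{J}_{\sigma'}=\cal{J}_\sigma$ I would squeeze all three sets between $J$ and $J$. Lower inclusion: each $i\in J$ has $\partial_+f_i(\sigma)>0$, so $f_i$---being right-continuous and right-differentiable at $\sigma$---is strictly positive on some $(\sigma,\sigma+\delta_i)$ (clear if $f_i(\sigma)>0$, and forced by the positive right-derivative if $f_i(\sigma)=0$); hence $J\subseteq\cal{I}_{\sigma'}$ for $\sigma'\in(\sigma,\sigma+\delta')$ with $\delta'=\min_{i\in J}\delta_i$. Upper inclusion: fixing any $i_0\in J$, the KKT identity $\mu_{\sigma'}=g_{i_0}(f(\sigma'))$ (valid because $i_0\in\cal{I}_{\sigma'}$) together with $f(\sigma')\to f(\sigma)$ and continuity of $g_{i_0}$ yields $\mu_{\sigma'}\to\mu_\sigma$. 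Now for $j\notin J$ we have $f_j(\sigma)=0$ (active indices lie in $J$), so the KKT conditions of Theorem~\ref{thm:MTHM1} give $g_j(f(\sigma))\ge\mu_\sigma$, and $j\notin\cal{J}_\sigma$ sharpens this to $g_j(f(\sigma))>\mu_\sigma$. Thus $h(\sigma'):=\min_{j\notin J}\bigl(g_j(f(\sigma'))-\mu_{\sigma'}\bigr)$ is right-continuous at $\sigma$ with $h(\sigma)>0$, hence positive on some $(\sigma,\sigma+\delta'')$, which forces $\cal{J}_{\sigma'}\subseteq J$ there. Taking $\delta=\min(\delta',\delta'')$ and using the always-valid $\cal{I}_{\sigma'}\subseteq\cal{J}_{\sigma'}$, I get $J\subseteq\cal{I}_{\sigma'}\subseteq\cal{J}_{\sigma'}\subseteq J$, so all three equal $J=\cal{J}_\sigma$.

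For the ``moreover'' part I would fix such a $\delta$ and use that the active set is constant $=J$ on $(\sigma,\sigma+\delta)$. There $(f(\sigma'),\mu_{\sigma'})$ solves the analytic square system $g_i(\gamma)=\mu$ $(i\in J)$, $\gamma_j=0$ $(j\notin J)$, $\sum_{i\in J}\gamma_i=\sigma'$, whose Jacobian in $((\gamma_i)_{i\in J},\mu)$ is the bordered Hessian $\bigl(\begin{smallmatrix}H_J & -\mathbf{1}\\ \mathbf{1}^\top & 0\end{smallmatrix}\bigr)$, with $H_J$ the $J$-principal submatrix of $\eta''$. By Lemma~\ref{lem:convexfunctionoverslices}, $H_J>0$, so its Schur complement $-\mathbf{1}^\top H_J^{-1}\mathbf{1}<0$ is nonzero and the Jacobian is invertible; the implicit function theorem then makes $f$ and $\mu$ continuously differentiable (indeed analytic, via Lemma~\ref{lem:domainofeta}) on $(\sigma,\sigma+\delta)$, and right-continuity of $\partial_+f$ matches this to the one-sided derivative at $\sigma$, giving $f\in C^1([\sigma,\sigma+\delta))$. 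Monotonicity follows from $\partial_+f_i(\sigma')\ge0$ for all $i,\sigma'$ (Proposition~\ref{pro:derivativeofwgammastar}): a continuous function with nonnegative derivative on an interval is nondecreasing, so $f$ is entrywise monotonically increasing on $[\sigma,\sigma+\delta)$. (Strict positivity $df_i/d\sigma'>0$ for $i\in J$ alternatively reads off the IFT formula $df_J/d\sigma'=H_J^{-1}\mathbf{1}/(\mathbf{1}^\top H_J^{-1}\mathbf{1})$ via the diagonal dominance of $H_J^{-1}$ in Lemma~\ref{lem:convexfunctionoverslices}.)

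I expect the main obstacle to be the upper inclusion $\cal{J}_{\sigma'}\subseteq\cal{J}_\sigma$: it is where one must preclude a previously inactive sensor's gradient falling to the multiplier level immediately to the right of $\sigma$, and it rests on the not-quite-obvious right-continuity of the multiplier $\mu_\sigma$, which becomes available only after the lower inclusion $J\subseteq\cal{I}_{\sigma'}$ lets us express $\mu_{\sigma'}$ through an active coordinate. Once the active set is known to be locally constant, the reduced-system implicit-function argument and the monotonicity are routine.
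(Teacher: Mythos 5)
Your proposal is correct and takes essentially the same route as the paper: the paper obtains this corollary from the local parametrization $f(\sigma') = \ol x(s^{-1}(\sigma'))$ constructed in the proofs of Lemma~\ref{lem:parametrizedbymu} and Proposition~\ref{pro:derivativeofwgammastar}, where the curve $x(\mu) = \phi^{-1}|_{l_\varepsilon}(\mu)$ is precisely your implicit-function-theorem branch reparametrized by the multiplier $\mu$ rather than by $\sigma$ (your formula $df_J/d\sigma' = H_J^{-1}\mathbf{1}/(\mathbf{1}^\top H_J^{-1}\mathbf{1})$ is the paper's $x'(\mu)/s'(\mu)$), and your persistence argument for the strict inequalities $g_j(f(\sigma'))>\mu_{\sigma'}$ for $j\notin\cal{J}_\sigma$ is the same continuity argument the paper applies to~\eqref{eq:condition2}. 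The only slips are cosmetic: the Schur complement of your bordered Jacobian is $+\mathbf{1}^\top H_J^{-1}\mathbf{1}>0$ rather than negative (nonzero either way), and the identification of the IFT branch with $f$ deserves the one-line appeal to sufficiency of the KKT conditions for the strictly convex problem, which is exactly how the paper closes the same step in Lemma~\ref{lem:parametrizedbymu}.
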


We omit the proof of Corollary~\ref{cor:atpalazzo} as it directly follows from the definition of right-continuity and Proposition~\ref{pro:derivativeofwgammastar}. 

In the remainder of the subsection, we prove Proposition~\ref{pro:derivativeofwgammastar}.  Fix $\sigma$ in $[0, \sigma_{\max})$ and, without loss of any generality,  assume that $$\cal{J}_{\sigma} = \{1,\ldots, k\}.$$

We now introduce helpful notation for the analysis below. Let $x^*_\sigma \in \R^k$ be the first~$k$ entries of $f(\sigma)$, i.e., we have  $$f(\sigma)  = (x^*_\sigma, \0).$$ 

Note that $x^*_\sigma$ may also have zero entries since, from the definitions of $\cal{I}_\sigma$ and $\cal{J}_\sigma$,  the entries of $x^*_\sigma$ are all nonzero if and only if $\cal{I}_\sigma = \cal{J}_\sigma$.  For a vector $x\in \R^k$, we set $\ol x \in \R^N$ to be the vector obtained  by appending $(N - k)$ zero entries at the end of the vector:  
$$
\ol x := (x,\0) \in \R^N.  
$$
In particular, we note here that from the definition, $\ol {x^*_\sigma} = f(\sigma)$. 
Conversely, for a vector $y = (y_1,\ldots, y_N)\in \R^N$, we truncate $y$ by defining 
$$y_{[1:k]}:= (y_1,\ldots, y_k) \in \R^k.$$ 

Now let the open neighborhood $U$ of $x^*_\sigma$ be chosen such that for any $x = (x_1,\ldots, x_k)\in U$, we have $x_i > -r_{\min}$ for all $i = 1,\ldots, k$. Then, from Lemma~\ref{lem:domainofeta}, $\eta(\ol x) = \tr(\Sigma(\ol x))$ is well defined for all $x\in U$.  We set $$\phi(x):= {\eta'(\ol x)}_{[1:k]}.
$$ 

 We show below that if $U$ is sufficiently small, then $\phi$ is a diffeomorphism.  
To establish this fact, we show that the derivative of the map $\phi$, is full rank. To this end, we partition the Hessian matrix $\eta''(\ol x)$ into $2\times 2$ blocks as follows:
$$
\eta''(\overline{x}) = 
\begin{pmatrix}
\eta''_{11}(\overline{x}) & \eta''_{ 12}(\overline{x})\\
\eta''_{ 21}(\overline{x}) & \eta''_{22}(\overline{x})
\end{pmatrix}
$$
where $\eta''_{11}(\overline{x})$ is a $k\times k$ matrix. Then, by definition of $\phi$, we obtain 
$$
\phi'(x) = \eta''_{11}(\overline{x}), 
$$  
and in particular, 
$
\phi'(x^*_\sigma) = \eta''_{11}(f(\sigma))  
$. 
From Lemma~\ref{lem:convexfunctionoverslices}, we obtain the following result: 

\begin{lem}\label{lem:determineopensetU}
If the open neighborhood $U$ of $x^*_\sigma$ is sufficiently small, then for any $x\in U$, the inverse $\phi'(x)^{-1}$ exists, and is positive definite and diagonally dominant. In particular, $\phi$ is a diffeomorphism.
\end{lem}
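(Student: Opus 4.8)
The plan is to recognize $\phi'(x)=\eta''_{11}(\overline x)$ as a principal submatrix of the full Hessian $\eta''(\overline x)$, invoke Lemma~\ref{lem:convexfunctionoverslices} at the center point $x^*_\sigma$, and then propagate its conclusion to a neighborhood by continuity, closing with the inverse function theorem.

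First I would note that $\eta''_{11}(\overline x)$ is the top-left $k\times k$ block of $\eta''(\overline x)$, obtained by deleting the rows and columns indexed by $\{k+1,\ldots,N\}$; since the deleted row- and column-indices coincide, it is by definition a principal submatrix of $\eta''(\overline x)$. At the center $x=x^*_\sigma$ we have $\overline{x^*_\sigma}=f(\sigma)\in\Sp[\sigma]\subseteq\Ch[\sigma_{\max}]$, so Lemma~\ref{lem:convexfunctionoverslices} applies verbatim and gives that $\phi'(x^*_\sigma)^{-1}=\eta''_{11}(f(\sigma))^{-1}$ exists and is positive definite and diagonally dominant.

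To pass from the single point $x^*_\sigma$ to an open neighborhood $U$, I would use that $x\mapsto\eta''_{11}(\overline x)$ is continuous, indeed analytic: for $x\in U$ sufficiently small each entry of $\overline x$ exceeds $-r_{\min}$, hence $\overline x\in Q$ and $\Sigma(\overline x)$---and with it $\eta$ and all its derivatives---is analytic by Lemma~\ref{lem:domainofeta}. Invertibility and positive definiteness of $\phi'(x)$ (equivalently of its inverse) are open conditions, so they persist on a sufficiently small $U$. The delicate point, which I expect to be the main obstacle, is diagonal dominance: in general the inequality $|p_{ii}|\ge\sum_{j\neq i}|p_{ij}|$ is a closed condition and need not survive a perturbation. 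I would resolve it using the quantitative content of Lemma~\ref{lem:convexfunctionoverslices}: for $|\epsilon|$ small the off-diagonal entries of $\eta''(\gamma;\epsilon)$ are of order $\epsilon$ uniformly on the compact set $\Ch[\sigma_{\max}]$, while the diagonal entries are uniformly bounded away from zero, so the inverse in question is in fact \emph{strictly} diagonally dominant with a uniform margin at $f(\sigma)$. Strict diagonal dominance is an open condition, and continuity of $\eta''_{11}(\overline x)^{-1}$ in $x$ then keeps the (non-strict) inequality intact after shrinking $U$.

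Finally, since $\phi$ is analytic and $\phi'(x)$ is nonsingular for every $x\in U$, the inverse function theorem shows that $\phi$ is a local diffeomorphism; shrinking $U$ once more makes $\phi|_U$ a diffeomorphism onto its image. This yields all three assertions of the lemma.
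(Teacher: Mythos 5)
Your proof is correct and follows essentially the same route as the paper's: apply Lemma~\ref{lem:convexfunctionoverslices} to the principal submatrix $\eta''_{11}$ at the center point $f(\sigma)$, propagate to a neighborhood by smoothness of $x\mapsto\phi'(x)$, and conclude with the inverse function theorem. Your extra care in noting that non-strict diagonal dominance is a closed condition, and in upgrading it to strict dominance with a uniform margin for small $|\epsilon|$, fills in a step the paper dispatches with a one-line appeal to smooth dependence.
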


\begin{proof}
From Lemma~\ref{lem:convexfunctionoverslices},  $\eta''_{11}(\ol x^*_\sigma)^{-1}$ is positive definite and diagonally dominant. Lemma~\ref{lem:determineopensetU} then follows from the fact that $\phi''(x)$ depends smoothly on~$x\in U$. The second statement follows from the inverse function theorem.
\end{proof}

For the remainder of the subsection, we assume that the open neighborhood $U$ of $x^*_\sigma$ is chosen such that Lemma~\ref{lem:determineopensetU} holds. 

From item~2 of Theorem~\ref{thm:MTHM1} and the definition of $\cal{J}_\sigma$ (see~\eqref{eq:defJsigma}), we have
$$
\phi(x^*_\sigma) = \eta'(f(\sigma))_{[1:k]} = \mu_\sigma {\bf 1}, 
$$ 
where $\bf 1$ is a vector of all ones in $\R^k$. Denote by $V\subset \R^k$  the image of $U$ by $\phi$. Then, $V$ is an open set, and hence there exists an $\varepsilon > 0$ such that the following closed-open line segment: 
$$
l_{\varepsilon} := \{ \mu {\bf 1} \mid  \mu \in [\mu_\sigma, \mu_\sigma + \varepsilon) \} 
$$
is contained in $V$. 

Now, by using the fact that $\phi$ is a diffeomorphism between $U$ to $V$, we know that the map  
$\left. \phi^{-1} \right |_{l_\varepsilon}$,  
is a diffeomorphism between $l_{\varepsilon}$ and its image, which is a one-dimensional curve in $U$.  
We denote this curve as follows:
\begin{equation}\label{eq:definexmu}
x(\mu) := \left. \phi^{-1} \right |_{l_\varepsilon}(\mu) \in U,
\end{equation}
and let $x_i(\mu)$ be the $i$-th entry of $x(\mu)$. Further, we define a function $s: [\mu_\sigma, \mu_\sigma + \varepsilon) \longrightarrow \R$ as the sum of the entries of $x(\mu)$:   
\begin{equation}\label{eq:definesmu}
s(\mu) := \sum^k_{i = 1} x_i(\mu). 
\end{equation}
Note that by their definitions, both $x(\mu)$ and $s(\mu)$ are smooth; we denote their derivatives by $$x'(\mu) := \frac{d x(\mu)}{d\mu} \hspace{10pt} and \hspace{10pt} s'(\mu) := \frac{d s(\mu)}{ d\mu}.$$  
We further write by 
$
x'(\mu )  \succ 0
$ 
if $x'_i(\mu) > 0$ for all $i = 1,\ldots, k$. With the preliminaries above, 
we  establish relationships among $x(\mu)$, $s(\mu)$ and the  optimal  allocation map:

\begin{lem}\label{lem:parametrizedbymu}
The following hold for $x(\mu)$ and $s(\mu)$: 
\begin{enumerate}
\item For each $\mu \in [\mu_\sigma, \mu_\sigma + \varepsilon)$,  
$$x'(\mu)\succ 0 \hspace{10pt} \mbox{and} \hspace{10pt} s'(\mu) > 0.$$
\item If $\varepsilon$ is sufficiently small, then 
$$
\ol x(\mu) =  f(s(\mu)).  
$$
\end{enumerate}\,
\end{lem}  

\begin{proof}
We first establish item~1 of Lemma~\ref{lem:parametrizedbymu}.  By definition of $x(\mu)$, we have
$$
x'(\mu) = \frac{d }{d \mu} \left. \phi^{-1}  \right |_{l_\varepsilon}(\mu), 
$$
which can be evaluated as follows:
$$
x'(\mu) = \left [\phi'(x(\mu))\right ]^{-1} \mathbf{1} \succ 0.
$$
The inequality in the equation above holds because from Lemma~\ref{lem:determineopensetU}, $\phi'(x)^{-1}$ is positive definite and \emph{diagonally dominant} for all $x\in U$. 
It then follows that $$s'(\mu) = \sum^k_{i = 1} x'_i(\mu)> 0.$$ 

We now establish the second item. First, note that since $x'(\mu) \succ 0$, we have $$\ol x(\mu) \succ \ol x(\mu_\sigma)\succeq \0\in \R^N,$$ for all $\mu > \mu_\sigma$.    
 Furthermore, by the fact that 
$$
\sum^N_{i=1} \ol x_i(\mu) = \sum^k_{i= 1} x_i(\mu) = s(\mu),
$$
we obtain that $$\ol x(\mu) \in \Sp[s(\mu)].$$ Next, note that from the definition of the map $\phi^{-1} |_{l_{\varepsilon}}$, we have 
$$
\phi(x(\mu)) = \phi\(\left. \phi^{-1} \right |_{l_{\varepsilon}}(\mu)\) = \mu \mathbf{1},   
$$
and since $\phi(x(\mu)) = \eta'(\ol x(\mu))$, we obtain 
\begin{equation*}\label{eq:condition1}
\left. \frac{\partial \eta(\gamma)}{\partial \gamma_i}  \right |_{\gamma = \ol x(\mu)} = \mu, \hspace{10pt} \forall \, i = 1,\ldots,k.  
\end{equation*}
So, from item~2 of Theorem~1, it suffices to show that if  $\varepsilon$ is sufficiently small, then
\begin{equation}\label{eq:condition2}
\left. \frac{\partial \eta(\gamma)}{\partial \gamma_j} \right |_{\gamma = \ol x(\mu)} \ge \mu, \hspace{10pt} \forall\, j = k+1, \ldots, N,   
\end{equation}
for any $\mu \in [\mu_\sigma, \mu_\sigma + \varepsilon)$. But this holds because we have assumed that  
$\cal{J}_{\sigma} =\{1,\ldots, k\}$, and by  definition of $\cal{J}_\sigma$, the inequalities in~\eqref{eq:condition2} are strict when $\mu = \mu_\sigma$. Now, since $ \eta'(\ol x(\mu)) = \phi(x(\mu))$ is smooth in $\mu$, we conclude that there is an $\varepsilon > 0$ such that~\eqref{eq:condition2} holds for all $\mu \in [\mu_\sigma, \mu_\sigma + \varepsilon)$. This completes the proof.    
\end{proof}


With Lemma~\ref{lem:parametrizedbymu},  we prove Proposition~\ref{pro:derivativeofwgammastar}:

\begin{proof}[Proof of Proposition~\ref{pro:derivativeofwgammastar}] 
Let $\sigma \in [0,\sigma_{\max})$ and $\varepsilon>0$ be small enough so that Lemma~\ref{lem:parametrizedbymu} holds.  We compute the right-derivative of $f$ at $\sigma$.  
The proof relies on the use of the two smooth maps $x(\mu)$ and $s(\mu)$ (defined in~\eqref{eq:definexmu} and~\eqref{eq:definesmu}, respectively), both of which are defined over the  interval $[\mu_\sigma, \mu_\sigma + \varepsilon)$.  

First, note that by definition, 
$s(\mu_\sigma) = \sigma$, and 
from item~1 of Lemma~\ref{lem:parametrizedbymu}, $s(\mu)$ is \emph{strictly} monotonically increasing in $\mu$. Hence,   there is a $\delta > 0$ such that the map  
$$
s: [\mu_\sigma,  \mu_\sigma + \varepsilon) \longrightarrow [\sigma, \sigma + \delta), 
$$
with $s(\mu_\sigma) = \sigma$,  is a diffeomorphism. Moreover, its inverse
$
s^{-1}: [\sigma, \sigma + \delta ) \longrightarrow [\mu_\sigma, \mu_\sigma + \varepsilon), 
$
is also smooth and strictly monotonically increasing. Now, from item~2 of Lemma~\ref{lem:parametrizedbymu}, we have that for $\sigma' \in [\sigma,\sigma + \delta)$, 
$$
f(\sigma')  = \ol x(s^{-1}(\sigma')), 
$$
and hence $f(\sigma')$ is smooth in $\sigma' \in [\sigma, \sigma + \delta)$. Moreover, by the chain rule, 
$$
\partial_+ f_i(\sigma) = 
\left\{ 
\begin{array}{ll}
x'_i(\mu_\sigma) / s'(\mu_\sigma) > 0  &  \mbox{if } i = 1,\ldots,k, \\ 
0 & \mbox{if } i = k+1, \ldots, N,
\end{array}
\right.
$$
where the  inequality above follows directly from the first item of Lemma~\ref{lem:parametrizedbymu}. 
\end{proof}

\subsection{Analysis and Proof of Theorem~\ref{thm:MTHM2}}\label{sec:leftdifferentiable}

In this subsection, we establish Theorem~\ref{thm:MTHM2}.  Recall that from Corollary~\ref{cor:atpalazzo}, for each $\sigma\in [0, \sigma_{\max})$, there exists a number $\delta$ such that 
\begin{equation}\label{eq:definingcondition}
\cal{I}_{\sigma'} = \cal{J}_{\sigma'} = \cal{J}_{\sigma}, \hspace{10pt} \forall\, \sigma' \in (\sigma, \sigma + \delta).
\end{equation}
Now, let $\delta_\sigma$ be the supremum among all such $\delta$:  
\begin{equation*}
\delta_\sigma :=  \sup \{\delta \in \R \mid \mbox{Eq.}~\eqref{eq:definingcondition} \mbox{ holds} \}.  
\end{equation*}
We introduce the map $\rho$ :  
\begin{equation}\label{eq:deftau}
\rho: \sigma \mapsto \min\{ \sigma + \delta_\sigma, \sigma_{\max} \}. 
\end{equation}
Appealing again to Corollary~\ref{cor:atpalazzo}, we know that the map $f$ 
is continuously differentiable and monotonically increasing over $[\sigma, \rho(\sigma))$.  We now show that the map $f$ is continuous at $\rho(\sigma)$. First, note that from Proposition~\ref{pro:derivativeofwgammastar}, $f$ is  continuous from the right, it thus suffices to show that $f$ is continuous from the left. Precisely, we establish the following result:

\begin{pro}\label{pro:propertiesoftau}
Let $\sigma\in [0, \sigma_{\max})$, and $\rho(\sigma) \in (\sigma, \sigma_{\max}]$ be defined in~\eqref{eq:deftau}. Then, 
\begin{equation}\label{eq:proofofcontinuity}
\lim_{\varepsilon \to 0, \varepsilon > 0} f(\rho(\sigma) - \varepsilon )  = f(\rho(\sigma)). 
\end{equation}\,
\end{pro}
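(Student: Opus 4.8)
The plan is to exploit the fact, established in Corollary~\ref{cor:atpalazzo}, that on the entire open interval $(\sigma,\rho(\sigma))$ the active set is frozen, $\cal{I}_{\sigma'}=\cal{J}_{\sigma'}=\cal{J}_\sigma$, together with the monotonicity of $f$ from Proposition~\ref{pro:derivativeofwgammastar}, and then to identify the left-limit of $f$ at $\rho(\sigma)$ as a KKT point whose uniqueness follows from strict convexity (Theorem~\ref{thm:MTHM1}). Without loss of generality assume $\cal{J}_\sigma=\{1,\ldots,k\}$. First I would argue the left-limit exists: by Corollary~\ref{cor:atpalazzo} the map $f$ is monotonically increasing on $[\sigma,\rho(\sigma))$, so each coordinate $f_i$ is nondecreasing, and since $0\le f_i(\sigma')\le\sigma'\le\sigma_{\max}$ it is bounded; hence $\gamma^-:=\lim_{\varepsilon\to 0,\,\varepsilon>0}f(\rho(\sigma)-\varepsilon)$ exists coordinatewise. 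Moreover $\gamma^-\in\Sp[\rho(\sigma)]$, since its entries are nonnegative limits and $\sum_i\gamma^-_i=\lim_{\sigma'\to\rho(\sigma)^-}\sum_i f_i(\sigma')=\lim_{\sigma'\to\rho(\sigma)^-}\sigma'=\rho(\sigma)$. Because $f_j(\sigma')=0$ for all $j>k$ and all $\sigma'\in(\sigma,\rho(\sigma))$, we also get $\gamma^-_j=0$ for $j>k$.

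Next I would pass to the limit in the KKT system. For each $\sigma'\in(\sigma,\rho(\sigma))$, item~2 of Theorem~\ref{thm:MTHM1} provides a multiplier $\mu_{\sigma'}\le 0$ with $\partial\eta/\partial\gamma_i|_{f(\sigma')}=\mu_{\sigma'}$ for $i\le k$ and $\partial\eta/\partial\gamma_j|_{f(\sigma')}>\mu_{\sigma'}$ for $j>k$, the latter inequality being strict precisely because $\cal{I}_{\sigma'}=\cal{J}_{\sigma'}$. By Lemma~\ref{lem:domainofeta} the partials $\partial\eta/\partial\gamma_i$ are analytic, hence continuous, on a neighborhood of $\Sp[\rho(\sigma)]$, so evaluating along the convergent net $f(\sigma')\to\gamma^-$ shows that $\mu_{\sigma'}=\partial\eta/\partial\gamma_1|_{f(\sigma')}$ itself converges to $\mu^-:=\partial\eta/\partial\gamma_1|_{\gamma^-}\le 0$, and that
\begin{equation*}
\left.\frac{\partial\eta}{\partial\gamma_i}\right|_{\gamma^-}=\mu^-\ \ (i\le k),\qquad \left.\frac{\partial\eta}{\partial\gamma_j}\right|_{\gamma^-}\ge\mu^-\ \ (j>k).
\end{equation*}

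To conclude, observe that $\gamma^-_j=0$ for every $j>k$ while $\partial\eta/\partial\gamma_i|_{\gamma^-}=\mu^-$ for $i\le k$, so $\gamma^-\in\Sp[\rho(\sigma)]$ satisfies exactly the conditions of item~2 of Theorem~\ref{thm:MTHM1} with multiplier $\mu^-\le 0$ (where entries of $\gamma^-$ equal to zero obey the inequality and those that are positive obey the equality). Since $\eta$ is strictly convex over $\Sp[\rho(\sigma)]$, these conditions are sufficient and single out the unique minimizer, so $\gamma^-=f(\rho(\sigma))$, which is precisely~\eqref{eq:proofofcontinuity}.

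I expect the main obstacle to be the conceptual point that $f$ need not be differentiable—or even have a locally constant active set—at the endpoint $\rho(\sigma)$: by definition $\rho(\sigma)$ is a point at which $\cal{J}$ may strictly contain $\cal{I}$, so one cannot simply extend the smooth representation $f=\overline{x}\circ s^{-1}$ (from the proof of Proposition~\ref{pro:derivativeofwgammastar}) through $\rho(\sigma)$. The resolution is to sidestep endpoint differentiability entirely, using monotone boundedness to produce the candidate limit and then verifying it is the unique KKT point; the only technical care required is establishing that the Lagrange multipliers $\mu_{\sigma'}$ converge, which I handle by reading $\mu_{\sigma'}$ off a single continuous partial derivative evaluated along the convergent sequence $f(\sigma')$.
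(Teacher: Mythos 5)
Your proposal is correct and follows essentially the same route as the paper's proof: establish existence of the left-limit via monotonicity and boundedness of $f$ on $[\sigma,\rho(\sigma))$, pass to the limit in the KKT conditions using smoothness of $\eta'$, and invoke item~2 of Theorem~\ref{thm:MTHM1} to identify the limit with the unique minimizer $f(\rho(\sigma))$. Your additional explicit checks (that the limit lies in $\Sp[\rho(\sigma)]$, that $\mu^-\le 0$, and that indices in $\cal{J}_\sigma$ with vanishing limit entries still satisfy the required inequality) are details the paper leaves implicit, but the argument is the same.
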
 

\begin{Remark}
By combining~Corollary~\ref{cor:atpalazzo} and Proposition~\ref{pro:propertiesoftau}, we have
$$
\cal{I}_{\sigma'}= \cal{I}_{\rho(\sigma)} = \cal{J}_{\sigma} \subseteq \cal{J}_{\rho(\sigma)}, 
$$
for all $\sigma' \in (\sigma, \rho(\sigma)]$. Furthermore, by the definition of the map~$\rho$, we have that if $\rho(\sigma) \neq \sigma_{\max}$, then the last inequality in the equation above is strict. 
\end{Remark}

\begin{proof}[Proof of Proposition~\ref{pro:propertiesoftau}]
We first show that the limit in~\eqref{eq:proofofcontinuity} exists.  
Without loss of generality, we assume that $\cal{J}_\sigma = \{1,\ldots, k\}$, for $k\le n$. Then, by the definition of $\rho(\sigma)$ and from Proposition~\ref{pro:derivativeofwgammastar}, we have that $f(\sigma')$ is smooth over $[\sigma, \rho(\sigma))$, and moreover,  for all $\sigma' \in [\sigma,\rho(\sigma))$, 
$$
\left\{
\begin{array}{ll}
\partial_+  f_i(\sigma') > 0 & \mbox{if }  i= 1,\ldots, k, \\
\partial_+  f_i(\sigma') = 0  & \mbox{if } i = k+1, \ldots, N. 
\end{array}   
\right. 
$$ 
In particular, each $f_i(\sigma')$, for $i = 1,\ldots, k$, is strictly monotonically increasing over $[\sigma,\rho(\sigma))$. Then, combining the fact that each $f_i(\sigma')$, for $i = 1,\ldots, N$, is nonnegative and the fact that  for all $\sigma' \in[\sigma, \rho(\sigma)) $, 
$$
f_i(\sigma') \le \sum^N_{i = 1} f_i(\sigma') = \sigma' < \rho(\sigma), 
$$ 
we have that 
$
\lim_{\varepsilon \to 0, \varepsilon > 0} f(\rho(\sigma) - \varepsilon )  
$ 
exists. 

Now, let the limit be denoted by $\widetilde \gamma$.   It should be clear that the nonzero entries of $\widetilde \gamma$ are the first $k$ entries. 
We show below that $f(\rho(\sigma)) = \widetilde \gamma $. First, note that by the definition of the map $\rho$ (see~\eqref{eq:deftau}), we have
$$\cal{I}_{\sigma'} = \cal{J}_{\sigma'}= \{1,\ldots, k\}$$
for any $\sigma'\in (\sigma, \rho(\sigma))$,  
and hence
$$
\left\{
\begin{array}{ll}
\left. \frac{\partial \eta(\gamma)}{\partial \gamma_i} \right |_{\gamma = f(\sigma')}  = \mu_{\sigma'} & \forall\, i = 1, \ldots, k, \vspace{3pt}\\
\left. \frac{\partial \eta(\gamma)}{\partial \gamma_i} \right |_{\gamma = f(\sigma')} > \mu_{\sigma'} & \forall \, i = k+1, \ldots, N. 
\end{array}
\right.
 $$
Then, using the fact that $\eta'(\gamma)$ is smooth in $\gamma$ and the fact that $\widetilde \gamma = \lim_{\varepsilon \to 0, \varepsilon > 0} f(\rho(\sigma) - \varepsilon )$, we know that there exists a $\widetilde \mu \in \R$ such that   
$$
\left\{
\begin{array}{ll}
\left. \frac{\partial \eta(\gamma)}{\partial \gamma_i} \right |_{\gamma = \widetilde \gamma}  = \widetilde\mu & \forall\, i = 1, \ldots, k, \vspace{3pt}\\
\left. \frac{\partial \eta(\gamma)}{\partial \gamma_i} \right |_{\gamma = \widetilde \gamma} \ge \widetilde \mu & \forall \, i = k+1, \ldots, N. 
\end{array}
\right.
 $$
We thus conclude, from item~2 of Theorem~1, that $f(\rho(\sigma)) = \widetilde \gamma$. It then follows that  for all $\sigma' \in (\sigma, \rho(\sigma)]$, 
$$
\cal{J}_{\sigma} = \cal{I}_{\sigma'} = \cal{I}_{\rho(\sigma)} = \{1,\ldots, k\},
$$
and $\cal{J}_\sigma \subseteq \cal{J}_{\rho(\sigma)}$.  
\end{proof}

We are now in a position to prove Theorem~\ref{thm:MTHM2}. 

\begin{proof}
The proof of Theorem~\ref{thm:MTHM2} relies on the use of the map~$\rho$ and the properties established in Proposition~\ref{pro:propertiesoftau}.  
We first show that the set $\cal{D}$, defined in~\eqref{eq:Alisuggested}, is finite. Let $\sigma_0 := 0$; then, for a positive integer $l>0$, we define a number $\sigma_l$ in $[0,\sigma_{\max}]$ as follows:
$$
\sigma_l := \underbrace{\rho\circ \rho \circ \cdots \circ \rho}_{l\mbox{ times}}(\sigma_0),  
$$
i.e., we iteratively apply  $\rho$ for $l$ times. 

We  now show that there exists an integer $m > 0$ such that 
\begin{equation}\label{eq:eq1inproof}
0 = \sigma_0 < \sigma_1 < \ldots < \sigma_m = \sigma_{\max},
\end{equation} 
and hence $\sigma_{m+1}$ does not exist. 
Note that if~\eqref{eq:eq1inproof} holds, then from the definition of the map~$\rho$, we have 
$$
\cal{D} = \{\sigma_0,\ldots,\sigma_m\}. 
$$ 
To establish~\eqref{eq:eq1inproof}, first note that $\sigma_1 = \rho(\sigma_0)$ is well defined. Since $\sigma_1 > 0$, we obtain
$$
\varnothing = \cal{I}_{\sigma_0} \subsetneq \cal{I}_{\sigma_1}. 
$$
If $\sigma_1 = \sigma_{\max}$, then~\eqref{eq:eq1inproof} holds.  We thus assume that $\sigma_1 < \sigma_{\max}$. But then,  $\sigma_2 = \rho(\sigma_1)$ is well defined, 
and moreover, from Proposition~\ref{pro:propertiesoftau}, we have
$$
\cal{I}_{\sigma_{1}} \subsetneq \cal{J}_{\sigma_1}  = \cal{I}_{\sigma_2}. 
$$ 
Combining the two inequalities above, we obtain 
$$
\varnothing = \cal{I}_{\sigma_0} \subsetneq \cal{I}_{\sigma_1} \subsetneq  \cal{I}_{\sigma_2}.
$$
So, by repeatedly applying  the arguments above, we obtain a sequence of inequalities as follows:  
$$
\varnothing = \cal{I}_{\sigma_0} \subsetneq \cal{I}_{\sigma_1} \subsetneq \ldots \subsetneq \cal{I}_{\sigma_l} \subsetneq \ldots ;  
$$
but since $\cal{I}$ is a finite set, the chain {\em has to} terminate in a finite number of steps. In other words, there must exist a positive integer $m$ such that $\sigma_m = \rho(\sigma_{m-1}) = \sigma_{\max}$.  
We have thus proved the first part of Theorem~\ref{thm:MTHM2}. 

The last two parts of Theorem~\ref{thm:MTHM2} then directly follow from Corollary~\ref{cor:atpalazzo} and Proposition~\ref{pro:propertiesoftau}. Indeed, from Corollary~\ref{cor:atpalazzo}, the map $f$ is smooth and strictly monotonically increasing over $[\sigma_{i-1},\sigma_i)$ for all $i = 1,\ldots, m$. Then, from Proposition~\ref{pro:propertiesoftau}, for all $i = 1,\ldots, m$, 
$$
\lim_{\varepsilon \to 0, \varepsilon > 0} f(\sigma_i - \varepsilon) = f(\sigma_i),  
$$
and hence $f$ is continuous over the entire interval $[0,\sigma_{\max}]$. Furthermore, appealing again to Proposition~\ref{pro:propertiesoftau}, 
we have that for all $i= 1,\ldots, m$, 
$$
\cal{I}_{\sigma_{i-1}} \subsetneq \cal{I}_{\sigma'}  = \cal{I}_{\sigma_i}, \hspace{10pt} \forall\, \sigma' \in (\sigma_{i-1}, \sigma_i]. 
$$
This completes the proof. 
\end{proof}

\section{Conclusions}

We considered in this paper an optimal resource allocation problem for the MSE estimation of a networked system. Precisely,  a network of $N$ weakly, dynamically coupled linear systems is connected via a shared communication channel  to a network manager. Each system can send sampled measurements of its own state over this channel. These samples are then used by the network manager  to estimate the global state of the network. Given that the channel capacity is finite, the problem is how to optimally schedule the transmission of the samples. Such problems arise in the control and estimation of real-time cyber-physical systems~\cite{lureal,seto2001trade, saifullah2014near}

In the set-up employed, we set aside information-theoretic considerations and made the simplifying assumption that each sample requires the same amount of communication, and  the capacity is thus directly proportional to the number of samples sent per unit of time. The analysis went along the following steps: for a unit of time $\tau$, we assume that we have a number $\rt$ of time slots, and that one sample can be sent per time slot. The optimal allocation problem is then a problem of designing an optimal strategy for  assigning the $\rt$ time slots to $N$ different subsystems. We have then shown in Section~\ref{sec:prelim} that  this optimal allocation problem is equivalent, in the limit $\tau \to 0$, to a problem of optimal allocating sensor qualities---namely, the signal-to-noise ratio of the measurements they provide---albeit for continuous-time sample measurements. 
 In particular, we have shown that in the limit $\tau \to 0$, the order in which the $\rt$ slots are assigned to the $N$ subsystems over a time period $\tau$ becomes irrelevant, but only the number of slots assigned  to each subsystem matters. We have then used the above-mentioned equivalence and shown in Section~III that the optimal allocation problem, in the limit $\tau\to 0$, is a strictly convex optimization problem under certain mild assumptions. 
We have further studied in Section~IV how the optimal allocation strategy evolves as the total channel capacity increases.



\bibliographystyle{IEEEtran}
\bibliography{OptimalAllocation}

\begin{thebibliography}{10}
\providecommand{\url}[1]{#1}
\csname url@samestyle\endcsname
\providecommand{\newblock}{\relax}
\providecommand{\bibinfo}[2]{#2}
\providecommand{\BIBentrySTDinterwordspacing}{\spaceskip=0pt\relax}
\providecommand{\BIBentryALTinterwordstretchfactor}{4}
\providecommand{\BIBentryALTinterwordspacing}{\spaceskip=\fontdimen2\font plus
\BIBentryALTinterwordstretchfactor\fontdimen3\font minus
  \fontdimen4\font\relax}
\providecommand{\BIBforeignlanguage}[2]{{%
\expandafter\ifx\csname l@#1\endcsname\relax
\typeout{** WARNING: IEEEtran.bst: No hyphenation pattern has been}%
\typeout{** loaded for the language `#1'. Using the pattern for}%
\typeout{** the default language instead.}%
\else
\language=\csname l@#1\endcsname
\fi
#2}}
\providecommand{\BIBdecl}{\relax}
\BIBdecl

\bibitem{mourikis2006optimal}
A.~I. Mourikis and S.~I. Roumeliotis, ``Optimal sensor scheduling for
  resource-constrained localization of mobile robot formations,'' \emph{IEEE
  Transactions on Robotics}, vol.~22, no.~5, pp. 917--931, 2006.

\bibitem{meduna2008low}
D.~K. Meduna, S.~M. Rock, and R.~McEwen, ``Low-cost terrain relative navigation
  for long-range {AUV}s,'' in \emph{OCEANS 2008}.\hskip 1em plus 0.5em minus
  0.4em\relax IEEE, 2008, pp. 1--7.

\bibitem{he2006sensor}
Y.~He and E.~K.~P. Chong, ``Sensor scheduling for target tracking: {A} {M}onte
  {C}arlo sampling approach,'' \emph{Digital Signal Processing}, vol.~16,
  no.~5, pp. 533--545, 2006.

\bibitem{singh2007simulation}
S.~S. Singh, N.~Kantas, B.-N. Vo, A.~Doucet, and R.~J. Evans,
  ``Simulation-based optimal sensor scheduling with application to observer
  trajectory planning,'' \emph{Automatica}, vol.~43, no.~5, pp. 817--830, 2007.

\bibitem{meier1967optimal}
L.~Meier, J.~Peschon, and R.~Dressler, ``Optimal control of measurement
  subsystems,'' \emph{IEEE Transactions on Automatic Control}, vol.~12, no.~5,
  pp. 528--536, 1967.

\bibitem{oshman1994optimal}
Y.~Oshman, ``Optimal sensor selection strategy for discrete-time state
  estimators,'' \emph{IEEE Transactions on Aerospace and Electronic Systems},
  vol.~30, no.~2, pp. 307--314, 1994.

\bibitem{kagami2006sensor}
S.~Kagami and M.~Ishikawa, ``A sensor selection method considering
  communication delays,'' \emph{Electronics and Communications in Japan (Part
  III: Fundamental Electronic Science)}, vol.~89, no.~5, pp. 21--31, 2006.

\bibitem{chhetri2007use}
A.~S. Chhetri, D.~Morrell, and A.~Papandreou-Suppappola, ``On the use of binary
  programming for sensor scheduling,'' \emph{IEEE Transactions on Signal
  Processing}, vol.~55, no.~6, pp. 2826--2839, 2007.

\bibitem{vitus2012efficient}
M.~P. Vitus, W.~Zhang, A.~Abate, J.~Hu, and C.~J. Tomlin, ``On efficient sensor
  scheduling for linear dynamical systems,'' \emph{Automatica}, vol.~48,
  no.~10, pp. 2482--2493, 2012.

\bibitem{alriksson2005sub}
P.~Alriksson and A.~Rantzer, ``Sub-optimal sensor scheduling with error
  bounds,'' \emph{IFAC Proceedings Volumes}, vol.~38, no.~1, pp. 80--84, 2005.

\bibitem{gupta2006stochastic}
V.~Gupta, T.~H. Chung, B.~Hassibi, and R.~M. Murray, ``On a stochastic sensor
  selection algorithm with applications in sensor scheduling and sensor
  coverage,'' \emph{Automatica}, vol.~42, no.~2, pp. 251--260, 2006.

\bibitem{baras1989optimal}
J.~S. Baras and A.~Bensoussan, ``Optimal sensor scheduling in nonlinear
  filtering of diffusion processes,'' \emph{SIAM Journal on Control and
  Optimization}, vol.~27, no.~4, pp. 786--813, 1989.

\bibitem{zhang2015sensor}
H.~Zhang, R.~Ayoub, and S.~Sundaram, ``Sensor selection for optimal filtering
  of linear dynamical systems: Complexity and approximation,'' in \emph{2015
  54th IEEE Conference on Decision and Control (CDC)}.\hskip 1em plus 0.5em
  minus 0.4em\relax IEEE, 2015, pp. 5002--5007.

\bibitem{joshi2009sensor}
S.~Joshi and S.~Boyd, ``Sensor selection via convex optimization,'' \emph{IEEE
  Transactions on Signal Processing}, vol.~57, no.~2, pp. 451--462, 2009.

\bibitem{isler2005sensor}
V.~Isler and R.~Bajcsy, ``The sensor selection problem for bounded uncertainty
  sensing models,'' in \emph{Proceedings of the 4th international symposium on
  Information processing in sensor networks}.\hskip 1em plus 0.5em minus
  0.4em\relax IEEE Press, 2005, p.~20.

\bibitem{gao2015optimal}
X.~Gao, E.~Akyol, and T.~Ba{\c{s}}ar, ``Optimal sensor scheduling and remote
  estimation over an additive noise channel,'' in \emph{2015 American Control
  Conference (ACC)}.\hskip 1em plus 0.5em minus 0.4em\relax IEEE, 2015, pp.
  2723--2728.

\bibitem{martins2007fundamental}
N.~C. Martins, M.~Dahleh, and J.~C. Doyle, ``Fundamental limitations of
  disturbance attenuation in the presence of side information,'' \emph{IEEE
  Transactions on Automatic Control}, vol.~52, no.~1, pp. 56--66, 2007.

\bibitem{gupta2009optimal}
V.~Gupta, N.~C. Martins, and J.~S. Baras, ``Optimal output feedback control
  using two remote sensors over erasure channels,'' \emph{IEEE Transactions on
  Automatic Control}, vol.~54, no.~7, pp. 1463--1476, 2009.

\bibitem{imer2005optimal}
O.~C. Imer and T.~Ba{\c{s}}ar, ``Optimal estimation with limited
  measurements,'' in \emph{Joint 44th IEEE Conference on Decision and Control
  and European Control Conference (CDC-ECC)}.\hskip 1em plus 0.5em minus
  0.4em\relax IEEE, 2005, pp. 1029--1034.

\bibitem{imer2010optimal}
O.~C. Imer and T.~Ba\c{s}ar, ``Optimal estimation with limited measurements,''
  \emph{International Journal of Systems, Control and Communications}, vol.~2,
  no. 1-3, pp. 5--29, 2010.

\bibitem{bommannavar2008optimal}
P.~Bommannavar and T.~Ba{\c{s}}ar, ``Optimal control with limited control
  actions and lossy transmissions,'' in \emph{47th IEEE Conference on Decision
  and Control (CDC)}.\hskip 1em plus 0.5em minus 0.4em\relax IEEE, 2008, pp.
  2032--2037.

\bibitem{imer2005ACM}
O.~C. Imer and T.~Ba{\c{s}}ar, ``Optimal estimation with scheduled
  measurements,'' \emph{Appl. Comput. Math}, vol.~4, no.~2, pp. 92--101, 2005.

\bibitem{sinopoli2004kalman}
B.~Sinopoli, L.~Schenato, M.~Franceschetti, K.~Poolla, M.~Jordan, and S.~S.
  Sastry, ``Kalman filtering with intermittent observations,'' \emph{IEEE
  Transactions on Automatic Control}, vol.~49, no.~9, pp. 1453--1464, 2004.

\bibitem{sinopoli2008optimal}
B.~Sinopoli, L.~Schenato, M.~Franceschetti, K.~Poolla, and S.~Sastry, ``Optimal
  linear {L}{Q}{G} control over lossy networks without packet acknowledgment,''
  \emph{Asian Journal of Control}, vol.~10, no.~1, pp. 3--13, 2008.

\bibitem{imer2006optimal}
O.~C. Imer, S.~Y{\"u}ksel, and T.~Ba{\c{s}}ar, ``Optimal control of {L}{T}{I}
  systems over unreliable communication links,'' \emph{Automatica}, vol.~42,
  no.~9, pp. 1429--1439, 2006.

\bibitem{moon2015minimax}
J.~Moon and T.~Ba{\c{s}}ar, ``Minimax estimation with intermittent
  observations,'' \emph{Automatica}, vol.~62, pp. 122--133, 2015.

\bibitem{lu2016real}
C.~Lu, A.~Saifullah, B.~Li, M.~Sha, H.~Gonzalez, D.~Gunatilaka, C.~Wu, L.~Nie,
  and Y.~Chen, ``Real-time wireless sensor-actuator networks for industrial
  cyber-physical systems,'' \emph{Proceedings of the IEEE}, vol. 104, no.~5,
  pp. 1013--1024, 2016.

\bibitem{seto2001trade}
D.~Seto, J.~Lehoczky, L.~Sha, and K.~Shin, ``Trade-off analysis of real-time
  control performance and schedulability,'' \emph{Real-Time Systems}, vol.~21,
  no.~3, pp. 199--217, 2001.

\bibitem{saifullah2014near}
A.~Saifullah, C.~Wu, P.~B. Tiwari, Y.~Xu, Y.~Fu, C.~Lu, and Y.~Chen, ``Near
  optimal rate selection for wireless control systems,'' \emph{ACM Transactions
  on Embedded Computing Systems (TECS)}, vol.~13, no.~4s, p. 128, 2014.

\bibitem{brockett2015finite}
R.~W. Brockett, \emph{Finite Dimensional Linear Systems}.\hskip 1em plus 0.5em
  minus 0.4em\relax SIAM, 2015, vol.~74.

\bibitem{optimalfiltering_anderson79}
B.~Anderson and J.~Moore, \emph{Optimal Filtering}.\hskip 1em plus 0.5em minus
  0.4em\relax Prentice-Hall, 1979.

\bibitem{wredenhagen1993curvature}
G.~Wredenhagen and P.~Belanger, ``Curvature properties of the algebraic
  {R}iccati equation,'' \emph{Systems \& {C}ontrol {L}etters}, vol.~21, no.~4,
  pp. 285--287, 1993.

\bibitem{delchamps_analytic1984}
D.~Delchamps, ``Analytic feedback control and the algebraic {R}iccati
  equation,'' \emph{IEEE Transactions on Automatic Control}, vol.~29, no.~11,
  pp. 1031--1033, 1984.

\bibitem{luenberger1997optimization}
D.~Luenberger, \emph{Optimization by Vector Space Methods}.\hskip 1em plus
  0.5em minus 0.4em\relax John Wiley \& Sons, 1997.

\bibitem{brockett_diffgeomgradesign93}
R.~W. Brockett, ``Differential geometry and the design of gradient
  algorithms,'' in \emph{Proceedings of Symposia in Pure Mathematics}.\hskip
  1em plus 0.5em minus 0.4em\relax American Mathematical Society, 1993, pp.
  69--93.

\bibitem{BeGeom2016}
M.-A. Belabbas, ``Geometric methods for optimal sensor design,''
  \emph{Proceedings of the Royal Society A}, vol. 472, no. 21805, 2016.

\end{thebibliography}

\section*{Appendix A}
We prove here Proposition~\ref{pro:existenceoflimit}. The proof will be carried out by constructing and solving an optimal control problem. 
Let $R$ be an allocation strategy. We define $R'\in \cal{R}$ be reversing the order of $R$, i.e., $$R'_{\rt - l}:= R_{l+1}, \hspace{10pt} \forall l = 0,\ldots, \rt - 1.$$   
Let $\cal{A}_0:= e^{A^\top \tau_0}$.   We consider the following discrete-time periodically-switched control system:  
\begin{equation}\label{eq:controlsys}
x_{[l + 1]}[k] = \cal{A}_0 x_{[l]}[k] + \bar c_{R'_{l}} u_{[l]}[k], \hspace{10pt} l = 0,\ldots,\rt - 1.
\end{equation}
So, each time step $k$ is comprised of $\rt$ sub-steps (we identify $x_{[\rt ]}[k]$ with $x_{[0]}[k+1]$). There are $N$ controllers $\bar c_{1},\ldots, \bar c_N$ in total. But only one controller can be used at a single sub-step, which is determined by the reversed strategy $R'$. 

We now introduce a cost function associated with the control system~\eqref{eq:controlsys}: Let $H$ be a positive definite matrix given by $$H := \int^{\tau_0}_0 e^{As} GG^\top e^{A^\top s } ds.$$ 
For each $i = 1,\ldots, N$, we define a matrix $P_i$ as follows:
$$
P_i := 
\begin{bmatrix}
H & H \bar c_i \\
\bar c_i^\top H & \bar c_i^\top H \bar c_i + I 
\end{bmatrix}
$$   
Note that each matrix $P_i$ is positive definite; indeed, for a vector $z = (x, u)$ with $x$ and $u$ in appropriate dimensions, we have 
$$
z^\top P_i z = \|\sqrt{H} (x + \bar c_i u) \| + \|u\|,
$$ 
where $\|\cdot\|$ is the standard Euclidean norm,  and $\sqrt{H} > 0$ is the square root of $H$. Thus, $z^\top P_i z = 0$ if and only if $z = 0$.  
With the matrices $P_i$'s defined above, we define a finite horizon cost function as follows:
$$
\xi_T :=  x^{\top}_{[1]}[T] K_0 x_{[1]}[T]  +  \sum^{T-1}_{k = 0} \sum^{\rt}_{l = 1} z^\top_{[l]}[k] P_{R'_l} z^\top_{[l]}[k],
$$ 
where $K_0 > 0$, and $z_{[l]}[k] := (x_{[l]}[k], u_{[l]}[k])$.

This is a classical optimal control problem, which can be solved by introducing a Hamiltonian of the system and writing down the co-state equations. We omit the details here, but provide the solution to the problem. To this end,  we define a sequence of positive definite matrices $\{K_{[l]}[k] \mid 0 \le k \le T - 1, 0\le l \le \rt - 1\}$ by backward recursion: For the initialization, we let $K_{[0]}[T] := K_0$. Then, for the recursion, we have   
\begin{equation}\label{eq:updateforK}
K_{[l]}[k] = \[ \[\cal{A}^\top_0 K_{[l+1]}[k] \cal{A}_0  + H  \]^{-1} + \bar  c_{R'_{l}} \bar c_{R'_{l}}^\top \]^{-1},
\end{equation}  
where $K_{[\rt ]}[k]$ is  identified with $K_{[0]}[k+1]$. We note here that~\eqref{eq:updateforK} can be re-written into a discrete-time dynamic Riccati equation via the use of the Woodbury matrix identity. 
Now, with the matrices $K_{[l]}[k]$'s defined above, we have that the optimal controls $u^*_{[l]}[k]$'s are given by:
\begin{equation*}
u^*_{[l]}[k] := -\( I + \bar c^\top_{R'_{l}} H \bar c^\top_{R'_{l}} + \bar c^\top_{R'_{l}} K_{[l+1]}[k] \bar c_{R'_{l}} \)^{-1} \\
\(\bar c^\top_{R'_{l}} H  + \bar c^\top_{R'_{l}} K_{[l+1]}[k] \cal{A}_0\) x_{[l]}[k],
\end{equation*} 
and the corresponding cost function is simply given by
$$
\xi^*_T := x^\top_{[0]}[0] K_{[0]}[0] x_{[0]}[0]. 
$$

Recall that the map $\phi_{i}$ is defined by combining~\eqref{eq:covupdate} and~\eqref{eq:covplusupdate}, which sends an error covariance matrix to its update over a single slot. By comparing~\eqref{eq:updateforK} with~\eqref{eq:covupdate} and~\eqref{eq:covplusupdate}, we obtain that 
$
K_{[l-1]}[k] = \phi_{R'_l} K_{[l]}[k]
$. 
Thus, if we let $K[k]:= K_{[0]}[k]$, then 
$
K[k-1] = \rho_{R'_1}\cdots\rho_{R'_{\rt}}K[k]
$.     
Since $R'$ is defined by reversing the order of $R$, we have
$$
K[k - 1] =  \rho_{R_{\rt}}\cdots\rho_{R_{1}}(K[k]) =  \Phi_R(K[k]). 
$$
It then follows that 
$
K[T-k] = \Phi^{k}_R(K_0)  
$. 
In particular, if we let $K_0$ be the initial condition of error covariance $\Sigma(0)$, then $K[T-k] = \Sigma(k)$. 
On the other hand, it is well known that if system~\eqref{eq:controlsys} is controllable (a sufficient condition of controllability will be established shortly), then the optimal control problem introduced above can be solved for an infinite horizon (i.e., we let  $T$ go to infinity); indeed, if the system is controllable, then one is able to drive the system back to the origin in finite time steps. It then implies that the minimal cost $\xi^*_\infty$ exists, and so does the limit $\lim_{k\to\infty} \Phi^k_R(K_0)$. Furthermore, we note that the limit does not depend on the initial condition $K_0$ because it is necessary that the optimal controls $u^*_{[l]}[k]$ drive the states $x_{[l]}[k]$ converge to $0$, and hence the term $\lim_{T\to\infty}x^\top_{[0]}[T]K_0 x_{[0]}[T]$ in the cost function is zero regardless of the value of $K_0$.     

It thus suffices for us to show that if the assumption of Proposition~\ref{pro:existenceoflimit} is satisfied, i.e., $(A, \bar c)$ is observable and $\tau \in \cal{T}$, then system~\eqref{eq:controlsys} is controllable. First, we define matrices $\cal{A}$ and $\cal{B}$ as follows:
$$
\cal{A}:= e^{A^\top\tau} = \cal{A}_0^{\rt}
$$
and
$$
\cal{B}:= \[\bar c_{R'_1}, \cal{A}_0 \bar c_{R'_2}, \ldots, \cal{A}_0^{\rt -1} \bar c_{R'_{\rt}}\].
$$
Then, it follows that system~\eqref{eq:controlsys} is controllable if and only if $(\cal{A}, \cal{B})$ is controllable. We let $C(\cal{A}, \cal{B})$ be the controllability matrix, and ${\rm Col}(\cal{A}, \cal{B})$ be the column space of $C(\cal{A}, \cal{B})$. 
Recall that $r_i$ is the number of slots assigned to a sensor~$i$ over a period $\tau$, and $r_i \ge r_{\min} > 0$.  Thus, for each~$i = 1,\ldots, N$, there exists an $l_i$ such that $R'_{l_i} = i$. It should be clear that 
\begin{equation*}
{\rm Col}(\cal{A}, \cal{B}) \supseteq  \sum^{N}_{i = 1}{\rm Col}\(\cal{A},  \cal{A}_0^{l_i-1} \bar c_{R'_{l_i}}\)  \\ =  \sum^{N}_{i = 1}\cal{A}_0^{l_i-1} {\rm Col}\(\cal{A},  \bar c_{i}\) 
\end{equation*}   
Furthermore, by appealing to the Cayley-Hamilton theorem, we have that 
\begin{equation}\label{eq:ssss}
\cal{A}_0^{l_i-1} {\rm Col}\(\cal{A},  \bar c_{i}\)  \subseteq {\rm Col}\(A^\top,\ol c_i\)
\end{equation} 
Note that if $\lambda$ is an eigenvalue of $A$, then $e^{\lambda\tau}$ is an eigenvalue of $\cal{A}$. Furthermore, if $\tau\in \cal{T}$, then $e^{\lambda_i \tau} \neq e^{\lambda_j\tau}$ for any two distinct eigenvalues $\lambda_i$ and $\lambda_j$ of~$A$, and hence  
the (generalized) eigenspace of $A$ corresponding to the eigenvalue $\lambda$ is the same as the (generalized) eigenspace of $\cal{A}$ corresponding to the eigenvalue $e^{\lambda\tau}$. Thus, a sufficient condition for the equality in~\eqref{eq:ssss} to hold is that $
\tau \in \cal{T}$. As a consequence, it follows that 
\begin{equation*}
{\rm Col}(\cal{A}, \cal{B})  \supseteq 
\sum^{N}_{i = 1}\cal{A}_0^{l_i-1} {\rm Col}\(\cal{A},  \bar c_{i}\)  \\ = \sum^{N}_{i = 1} {\rm Col}\(A^\top,\ol c_i\) = {\rm Col}\(A^\top, \ol c\). 
\end{equation*}
Since $(A, \ol c)$ is observable,  we have ${\rm Col}\(A^\top, \ol c\) = \R^n$, and hence
${\rm Col}(\cal{A}, \cal{B}) = \R^n$, 
which implies that $(\cal{A}, \cal{B}) $ is controllable.

\section*{Appendix B}


\setcounter{subsection}{0}

We prove here Proposition~\ref{pro:oneforall}, and establish the fact that a triplet $(A, c, P)$, with $(A, c)$ observable and $P > 0$, is generically regular.  
Recall that  $\Sigma(r)$, for $r > 0$, is the unique positive definite solution to the following ARE:
\begin{equation}\label{eq:riccatieqinappendix}
A \Sigma(r) + \Sigma(r) A^\top - r \Sigma(r) \, cc^\top \Sigma(r) + P = 0, 
\end{equation}
and $\Sigma'(r)$ is the derivative of $\Sigma(r)$ with respect to $r$. 
We first establish the following fact:

\begin{pro}\label{pro:equivalentstatements} 
Let $(A, c)$ be an observable pair, and $P > 0$. Then, the following three items are equivalent:
\begin{enumerate}
\item The triplet $(A, c, P)$ is regular.
\item For all $r > 0$, $\Sigma'(r) < 0$.
\item For all $r > 0$, $(A, \Sigma(r) c)$ is controllable. 
\end{enumerate}\,
\end{pro}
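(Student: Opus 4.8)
The plan is to reduce the three-way equivalence to one clean equivalence, (2)$\Leftrightarrow$(3), together with the trivial implication (3)$\Rightarrow$(1) and the substantive implication (1)$\Rightarrow$(3); this last one is exactly Proposition~\ref{pro:oneforall}, and proving it from scratch here is what makes the whole statement work. The common engine for every part is the derivative of $\Sigma(r)$. Since $(A,c)$ is observable and $P>0$, the solution $\Sigma(r)$ is positive definite and, by the analytic dependence results of~\cite{delchamps_analytic1984}, analytic in $r$. Differentiating the ARE~\eqref{eq:riccatieqinappendix} in $r$ and writing $A_{\mathrm{cl}} := A - r\Sigma(r)cc^\top$, I would record the Lyapunov identity $A_{\mathrm{cl}}\Sigma'(r) + \Sigma'(r)A_{\mathrm{cl}}^\top = \Sigma(r)cc^\top\Sigma(r)$. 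A preliminary step is to note that $A_{\mathrm{cl}}$ is Hurwitz, which follows by reading the ARE as $A_{\mathrm{cl}}\Sigma + \Sigma A_{\mathrm{cl}}^\top = -(P + r\Sigma cc^\top\Sigma) < 0$ together with $\Sigma>0$.

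Then, with $A_{\mathrm{cl}}$ Hurwitz, the Lyapunov identity integrates to $-\Sigma'(r) = \int_0^\infty e^{A_{\mathrm{cl}}s}(\Sigma c)(\Sigma c)^\top e^{A_{\mathrm{cl}}^\top s}\,ds$, which is precisely the controllability Gramian of $(A_{\mathrm{cl}}, \Sigma(r)c)$. Hence $-\Sigma'(r)$ is positive semidefinite (recovering $\Sigma'\le 0$), and it is positive \emph{definite} if and only if $(A_{\mathrm{cl}}, \Sigma(r)c)$ is controllable. Because $A_{\mathrm{cl}} = A - (\Sigma c)(rc^\top)$ differs from $A$ by state feedback through the input matrix $\Sigma c$, controllability of $(A_{\mathrm{cl}},\Sigma c)$ is equivalent to controllability of $(A,\Sigma c)$. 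This gives the pointwise equivalence $\Sigma'(r)<0 \Leftrightarrow (A,\Sigma(r)c) \text{ controllable}$, which, quantified over all $r>0$, is exactly (2)$\Leftrightarrow$(3). The implication (3)$\Rightarrow$(1) is then immediate from Definition~\ref{def:regulartriplet}, since (3) supplies controllability in particular at one value of $r$.

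The crux is (1)$\Rightarrow$(3): controllability of $(A,\Sigma(r_0)c)$ at a single $r_0$ must be promoted to all $r>0$. The key observation I would exploit is that the uncontrollable subspace $\mathcal{N}(r) := \ker(-\Sigma'(r))$ is monotone nondecreasing in $r$. Indeed, $\Sigma''(r)\ge 0$ (Proposition~\ref{lem:SigmaDerivative}) forces $-\Sigma'(r)$ to be nonincreasing in the positive-semidefinite order, so that $w^\top(-\Sigma'(r_1))w = 0$ implies $0\le w^\top(-\Sigma'(r_2))w \le w^\top(-\Sigma'(r_1))w = 0$ for $r_2>r_1$, i.e. $\mathcal N(r_1)\subseteq\mathcal N(r_2)$. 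Consequently the set of $r$ at which the pair fails to be controllable is \emph{upward closed}. If this set were nonempty it would contain an open interval, on which every maximal minor of the analytic controllability matrix $[\,\Sigma(r)c,\ A\Sigma(r)c,\ \ldots,\ A^{n-1}\Sigma(r)c\,]$ vanishes; by the identity theorem for analytic functions each such minor would then vanish identically, making the pair uncontrollable for \emph{all} $r$, and in particular at $r_0$, contradicting regularity. Hence the uncontrollable set is empty, which is (3); combined with the pointwise equivalence this also delivers (1)$\Rightarrow$(2) and closes the cycle.

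The main obstacle will be precisely this propagation from one value of $r$ to all values: the Gramian monotonicity only shows that controllability is inherited by \emph{smaller} $r$, so ruling out a loss of controllability as $r$ grows is exactly where the analytic-continuation argument is needed. Two points require care. First, the differentiability and analyticity of $\Sigma(r)$ in $r$ must be invoked via~\cite{delchamps_analytic1984} before $\Sigma'$, $\Sigma''$, and the minors can be manipulated. Second, in the multi-output case one must argue with the full collection of maximal minors rather than a single determinant, but the identity theorem applies to each of them separately, so the argument is unchanged. I would deliberately use only the non-strict conclusions $\Sigma'\le0$ and $\Sigma''\ge0$ of Proposition~\ref{lem:SigmaDerivative}, so as to avoid any circularity with its strict version, whose proof relies on regularity.
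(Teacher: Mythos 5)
Your proposal is correct and follows essentially the same route as the paper: the pointwise equivalence of (2) and (3) via the Gramian representation $-\Sigma'(r)=\int_0^\infty e^{(A-r\Sigma cc^\top)s}\Sigma cc^\top\Sigma\, e^{(A^\top-rcc^\top\Sigma)s}ds$, followed by using $\Sigma''\ge 0$ to propagate any degeneracy of $\Sigma'$ to all larger $r$ and analyticity in $r$ to extend it to all $r>0$, contradicting regularity. The only cosmetic difference is that the paper applies the identity theorem directly to the analytic function $r\mapsto\Sigma'(r)v$ rather than to the minors of the controllability matrix.
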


\begin{proof}

First, note that from Definition~\ref{def:regulartriplet}, item~3 implies item~1. We show below that first, items~2 and~3 are equivalent,  and second, item~1 implies item~2.

To show that items~2 and~3 are equivalent, it suffices to show that for any fixed $r > 0$, $\Sigma'(r) < 0$ if and only if $(A, \Sigma(r)c)$ is controllable. Recall that $\Sigma'(r)$ satisfies the following Lyapunov equation:
\begin{equation*}
\(A - r \Sigma cc^\top \) \Sigma' + \Sigma' \(A^\top  - r cc^\top\Sigma \) - \Sigma cc^\top \Sigma = 0,
\end{equation*}
and hence has the following explicit expression:  
\begin{equation*}\label{eq:K'(wgamma)lyapunov}
\Sigma' = -\int^{\infty}_{0} e^{\(A - r \Sigma cc^\top  \) t}\, \Sigma cc^\top \Sigma e^{\(A^\top - r cc^\top \Sigma \) t} dt. 
\end{equation*}
This, in particular,  implies that $\Sigma'< 0$ if and only if $\(A - r \Sigma cc^\top, \Sigma c \)$ is controllable, which holds if and only if $\(A, \Sigma c \)$ is controllable.

We now show that item~1 implies item~2. Since $(A, c, P)$ is a regular triplet, from Definition~\ref{def:regulartriplet}, there exists a $\widetilde r > 0$ such that $(A, \Sigma (\widetilde r)c)$ is controllable, and hence,  from the arguments above, $\Sigma'(\widetilde r) < 0$. We need to show that $\Sigma'(r) < 0$ for all $r > 0$. 
The proof will be carried out by contradiction: we show that if there is a number $r_1> 0$ such that $\Sigma'(r_1)$ is singular, with $\Sigma'(r_1) v = 0$ for some nonzero vector $v\in \R^n$, then 
$\Sigma'(r) v = 0$ for all $r \ge 0$. To see this, recall that from Proposition~\ref{lem:SigmaDerivative}, 
$$\Sigma'(r) \le 0 \hspace{5pt} \mbox{ and } \hspace{5pt} \Sigma''(r) \ge 0, \hspace{10pt} \forall\, r > 0.$$ 
Then, since $v^\top \Sigma'(r_1) v = 0$, we have that for any $r \ge r_1$, 
$$
0 \ge v^\top \Sigma'(r) v = \int^r_{r_1} v^\top \Sigma''(s)v\,  ds \ge 0.  
$$
It then follows that $v^\top \Sigma'(r) v = 0$, which in turn implies that $\Sigma'(r)v = 0$ because $\Sigma'(r) \le  0$.  
On the other hand, $\Sigma(r)$ is analytic in $r$. So, if $\Sigma'(r) v= 0 $ for all $r \ge r'$, then $\Sigma'(r)v = 0$ for all $r > 0$. But this contradicts the fact that $\Sigma'(\widetilde r) < 0$. We have thus shown that item~1 implies item~2, which completes the proof.  
\end{proof}

Proposition~\ref{pro:oneforall} then immediately follows from Proposition~\ref{pro:equivalentstatements}. We are now in a position to establish the genericity of regular triplets. 


\begin{pro}\label{pro:regulartripareopendense}
Let $\cal{X}$ be the set of triplets $\(A, c, P\)$, with $(A, c)$ observable  and $P > 0$. Let $\cal{X}_{\rm reg} \subset \cal{X}$ be the set of regular triplets. 
Then, $\cal{X}_{\rm reg}$ is open and dense in $\cal{X}$. 
 \end{pro}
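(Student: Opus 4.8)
The plan is to realize $\mathcal{X}_{\rm reg}$ as the preimage of an open set under a continuous map (which settles openness at once), and then to prove density by an explicit two-stage perturbation carried out near an arbitrary triplet. Throughout I fix the test value $r=1$: by Proposition~\ref{pro:oneforall} together with Definition~\ref{def:regulartriplet}, a triplet $(A,c,P)\in\mathcal{X}$ is regular if and only if the single pair $(A,\Sigma(1)c)$ is controllable, where $\Sigma(1)$ is the unique positive definite solution of~\eqref{eq:riccatieqinappendix} at $r=1$. I write $b(A,c,P):=\Sigma(1)c$.

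For openness I would recall that the stabilizing solution $\Sigma(1)$ depends continuously (indeed analytically, by~\cite{delchamps_analytic1984}) on $(A,c,P)$ over $\mathcal{X}$, so $b$ is continuous. Since controllability is an open condition on a pair (some maximal minor of the controllability matrix $[\,b,Ab,\ldots,A^{n-1}b\,]$ is nonzero), the set of $(A,c,P)$ for which $(A,b)$ is controllable is the preimage of an open set under the continuous map $(A,c,P)\mapsto (A,b)$, hence open; as $\mathcal{X}$ is itself open, $\mathcal{X}_{\rm reg}$ is open.

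The density argument is the crux, and the main obstacle is that perturbing $P$ alone can never help when $A$ is derogatory: for a single output, $(A,b)$ is uncontrollable for \emph{every} $b$ exactly when $A$ has no cyclic vector, so the whole fibre over such an $A$ lies outside $\mathcal{X}_{\rm reg}$. I therefore proceed in two stages. Fix $(A,c,P)\in\mathcal{X}$. \emph{Stage one:} the cyclic (nonderogatory) matrices form an open dense set, and for our fixed $c$ the matrices $A'$ with $(A',c)$ observable also form an open dense set, so their intersection is dense; I choose $A'$ arbitrarily close to $A$ that is simultaneously cyclic and keeps $(A',c)$ observable, whence $(A',c,P)\in\mathcal{X}$. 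For this cyclic $A'$ the set of uncontrollable directions $\{\,b:\det[\,b,A'b,\ldots,(A')^{n-1}b\,]=0\,\}$ is the zero locus of a polynomial that is \emph{not} identically zero, hence a proper hypersurface whose complement is open and dense in $\R^n$. \emph{Stage two:} keeping $A'$ and $c$ fixed, I perturb $P$ and show that $P\mapsto b=\Sigma(1)c$ is a submersion at $P$. Differentiating~\eqref{eq:riccatieqinappendix} in the symmetric direction $\Delta$ gives the Lyapunov equation $A_{cl}\,\dot\Sigma+\dot\Sigma\,A_{cl}^\top+\Delta=0$ with $A_{cl}:=A'-\Sigma cc^\top$ Hurwitz, so $\Delta\mapsto\dot\Sigma$ runs over all symmetric matrices; since $c\neq 0$, the set $\{\,\dot\Sigma c:\dot\Sigma=\dot\Sigma^\top\,\}$ is all of $\R^n$, and therefore $D_P b$ is surjective. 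By the open mapping property of a submersion, the image of a neighborhood of $P$ contains a neighborhood of $b(A',c,P)$, which must meet the open dense controllable region; choosing $P_1$ in the corresponding preimage yields a regular triplet $(A',c,P_1)$ arbitrarily close to $(A,c,P)$, proving density.

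The step I expect to require the most care is the submersion claim: verifying that $A_{cl}$ is Hurwitz for the stabilizing solution (so the Lyapunov operator is invertible and $\dot\Sigma$ can be prescribed arbitrarily) and that $\{\,Mc:M=M^\top\,\}=\R^n$ for $c\neq 0$. An alternative, less hands-on route to density would observe that $g(A,c,P):=\big(\det[\,b,Ab,\ldots,A^{n-1}b\,]\big)^2$ is real-analytic on $\mathcal{X}$ with $\mathcal{X}_{\rm reg}=\{g\neq 0\}$, so the non-regular locus is a real-analytic zero set; one would then need only exhibit a single regular triplet in each connected component of $\mathcal{X}$ to conclude that this locus is nowhere dense. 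I prefer the constructive perturbation above because it localizes the argument and avoids analyzing the connected components of $\mathcal{X}$.
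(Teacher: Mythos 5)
Your proof is correct and follows essentially the same route as the paper's: openness via continuity of the ARE solution together with the openness of the controllability (equivalently, $\Sigma'(r)<0$) condition, and density by first perturbing $A$ to a cyclic matrix---so that $(A',v)$ is controllable for generic $v$---and then using the freedom in $P$ to move $\Sigma c$ into that generic set. The only difference is in how the second perturbation is executed: the paper picks a small symmetric $\delta\Sigma$ making $(A',(\Sigma+\delta\Sigma)c)$ controllable and back-solves the ARE for the corresponding $\delta P$, whereas you show the forward map $P\mapsto\Sigma(1)c$ is a submersion via the Lyapunov equation with Hurwitz $A'-\Sigma cc^\top$; these are equivalent mechanisms for the same step.
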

 
 \begin{proof}
 First, we show that $\cal{X}_{\rm reg}$ is open in $\cal{X}$. Let $r$ be a  positive number, and let $\Sigma(r)$ be the unique positive definite solution to the ARE~\eqref{eq:riccatieqinappendix}, where $(A, c, P)$ is a regular triplet. Then, from Proposition~\ref{pro:oneforall}, we have that $\Sigma'(r) < 0$. Now, fix the number $r$, and we perturb the triplet $(A,c,P)$ in~\eqref{eq:riccatieqinappendix}: since $\Sigma$ is analytic in $(A, c, P)$,      there is  an open neighborhood $U$ of $(A, c, P)$ in $\cal{X}$ such that the inequality $\Sigma'(r) < 0$ still holds even if we replace $(A,c,P)$ in~\eqref{eq:riccatieqinappendix} with an arbitrary triplet $(A',c',P')$ in $U$. Appealing again to Proposition~\ref{pro:oneforall}, we have that each $(A', c', P')$ in $U$ is a regular triplet.  This then shows that $\cal{X}_{\rm reg}$ is open in $\cal{X}$.


 We now show that  $\cal{X}_{\rm reg}$ is dense in $\cal{X}$. To do so, we construct a regular triplet arbitrarily close to an arbitrary triplet.  First, perturb matrix $A$, if necessary,  so that  $A$ is diagonalizable and the eigenvalues of $A$ are not repeated. Since observability is an open condition, we can choose a perturbation small enough so that  it remains observable after the perturbation.  
We also note that for any such matrix $A$,  the pair $\(A, v\)$ is controllable for almost all $v$ in $\R^n$. This latter fact  implies  that there exists a symmetric matrix $\delta \Sigma$, with $\|\delta \Sigma\|$ arbitrarily small, such that $\(\Sigma + \delta \Sigma\)$ is positive definite and 
 $\(A, (\Sigma + \delta \Sigma) c \)$ is controllable. Now, fix any such $\delta \Sigma$, and let $\delta P$ be a symmetric matrix defined as follows: 
 $$
 \delta P:= -\( A - \Sigma cc^\top  \)  \delta \Sigma - \delta \Sigma \( A^\top - r cc^\top \Sigma \), 
 $$  
 Note that $\|\delta P\|$ can be made arbitrarily small by decreasing $\|\delta \Sigma\|$, and hence we can assume that the matrix $\(P + \delta P\) $ is positive definite. 
 Now, let 
 $$
 \widetilde \Sigma := \Sigma + \delta \Sigma \hspace{10pt} \mbox{ and } \hspace{10pt} \widetilde P := P + \delta P.
 $$
Then, from construction, $\widetilde \Sigma$ is the unique positive definite solution to the following ARE:
$$
A \widetilde \Sigma +  \widetilde \Sigma A^\top - r  \widetilde \Sigma cc^\top  \widetilde \Sigma + \widetilde P= 0. 
$$  
Since $(A,  \widetilde \Sigma c)$ is controllable, from Proposition~\ref{pro:oneforall}, $(A, c,  \widetilde P )$ is a regular triplet. This then shows that $\cal{X}_{\rm reg}$ is dense in $\cal{X}$. 
\end{proof}

\section*{Appendix C}  

We prove here Proposition~\ref{lem:SigmaDerivative}:

\begin{proof}[Proof of Proposition~\ref{lem:SigmaDerivative}]
To simplify the notation, we suppress the explicit dependence of $\Sigma$ on $r$. First, we show that $\Sigma' \le 0$ and $\Sigma''\ge 0$. Let
$$
\overline{A} := - A \hspace{5pt} \mbox{ and } \hspace{5pt}  
K := \Sigma^{-1}.  
$$
Then, $K$ satisfies the following ARE:
$$
\ol{A}^\top K + K \ol{A} - K GG^\top K + r cc^\top = 0.  
$$ 
Differentiating $K$ with respect to $r$, we obtain $K'$ as
\begin{equation}\label{eq:saysomethingaboutP'andP''really1}
(\ol{A}^\top - KGG^\top ) K'  + K' \(\ol{A} - GG^\top K \) + cc^\top = 0,
\end{equation}
and $K''$ as
\begin{equation}\label{eq:saysomethingaboutP'andP''really2}
(\ol{A}^\top - KGG^\top )   K''  + K'' \(\ol{A} - GG^\top K \) \\- 2 K' GG^\top K' = 0. 
\end{equation}

Let $\sqrt{GG^\top}$ be the square root of $GG^\top$, which is the unique positive definite matrix such that $\sqrt{GG^\top}\sqrt{GG^\top}=GG^\top$. Note that $\sqrt{GG^\top}$ is of full rank, and hence $\(\ol{A}, \sqrt{GG^\top}\)$ is controllable, which implies that 
$\(\ol{A}^\top - KGG^\top \) $ is stable.  So, from~\eqref{eq:saysomethingaboutP'andP''really1} and~\eqref{eq:saysomethingaboutP'andP''really2}, we have that
$$
K' \ge 0 \hspace{5pt} \mbox{ and } \hspace{5pt} K'' \le 0. 
$$
Furthermore, by differentiating the equality $K\Sigma = I$ with respect to~$r$, we get $K\Sigma'=-K'\Sigma$, which implies 
\begin{equation}\label{eq:derivativesofS}
\left\{
\begin{array}{l}
\Sigma'  = -\Sigma K' \Sigma \leq 0  \\
\Sigma'' =  2 \Sigma K' \Sigma K' \Sigma -  \Sigma K'' \Sigma \geq 0.
\end{array}
\right. 
\end{equation} 
We used the facts that $K'\le 0$ for the first inequality, and $\Sigma > 0$ and $K'' \leq 0$ for the second inequality.

We now assume that the triplet $\(A, c, GG^\top\)$ is regular, and show that $\Sigma' < 0$ and $\Sigma'' > 0$. By differentiating~\eqref{eq:LyapunovEqSingle1}, we obtain the Lyapunov equation:
$$
\(A- r \Sigma cc^\top\) \Sigma' + \Sigma'\(A^\top - r cc^\top \Sigma\)- \Sigma c c^\top \Sigma = 0, 
$$
which admits the solution:
$$
\Sigma' = -\int^{\infty}_0 e^{\(A - r \Sigma cc^\top\) t}\, \Sigma c c^\top \Sigma \, e^{\(A^\top - r  cc^\top \Sigma\) t}\, dt.
$$
Since the triplet $(A, c, GG^\top)$ is regular, we know, from Proposition~\ref{pro:oneforall}, that the pair $\(A,   \Sigma c\)$ is controllable. This in turn implies that  $\(A  - r \Sigma cc^\top, \Sigma c \)$ is controllable. Hence, we have that  $\Sigma' < 0$. Finally, using the equalities in ~\eqref{eq:derivativesofS}, we conclude that 
$K' = -K \Sigma' K > 0$, 
and hence 
$$\Sigma'' \ge 2\Sigma K' \Sigma K' \Sigma > 0,$$
which completes the proof. 
\end{proof}

\section*{Appendix D}\label{sec:geometricmethod}
We  set-up here a gradient algorithm to locate the local minima $\gamma^*$ of $\eta$ over $\Sp[\sigma]$. The algorithm can be shown to be globally convergent and can be used instead of convex optimization methods.
   
Denote by $M$ the set of symmetric, rank one matrices with unit trace:
$$
M : = \{ vv^\top \mid v\in \R^n,  \|v\| = 1\}.  
$$ 
Because the diagonal entries of $vv^\top$ are $v_i^2$, we have that $\sum_{i=1}^n v_i^2 =1,$  and thus the projection  map $\pi: M \longrightarrow \Sp[\sigma]$ given by 
$$H \mapsto \sigma \(h_{11},\cdots,h_{nn} \)^\top$$ 
is well defined, where we let $H=(h_{ij})$. It is easily seen to be surjective as well. Hence, there exists a function $\Phi: M \longrightarrow \R$ such that 
$$\Phi(H):= \eta(\pi(H)).$$ 
To be more explicit, we first recall that $\eta(\gamma)$ is defined as $
\eta(\gamma) = \tr(\Sigma)
$, where $\Sigma$ is the unique positive definite solution to the ARE: 
\begin{equation}\label{eq:defKss}
A\Sigma + \Sigma A^\top -\Sigma c_{\gamma} c^\top_\gamma \Sigma  + GG^\top = 0, 
\end{equation}
with $c_{\gamma}$  given by
$$
c_\gamma = \diag\(\sqrt{\gamma_{1}+r_{\min}} c_1,\ldots, \sqrt{\gamma_{N}+r_{\min}} c_N \).
$$
So, if we let $\gamma = \pi(H)$, for all $i = 1,\ldots, n$, then $\Phi(H)$ depends on $H$ via $\Sigma$. 
Our goal in this section is to derive a gradient algorithm for $\Phi$. To this end, we first need to define a metric on $M$. This is done in the following paragraph.
\paragraph{Normal metric and  double bracket flows} Let $\so(n)$ be the set of $n$-by-$n$ skew-symmetric matrices: $\so(n) = \{ \Omega \in \R^{n \times n} \mid \Omega = -\Omega^\top\}$. 
For a point $H$  in $M$, let $T_HM$ be the tangent space of $M$ at $H$.  Then, it is a well-known fact that
$$
T_H M = \left \{[H, \Omega] \mid \Omega \in \so(n) \right \},  
$$ 
where $[A, B] := AB - BA$ is the commutator of matrices. We also adopt the standard notation $\ad_H \Omega := [H,\Omega]$. Note that if the symmetric matrix $H$ has pairwise distinct eigenvalues, then $\ad_H$ is invertible~\cite{brockett_diffgeomgradesign93}. Then, the so-called normal metric is defined as follows: for any two elements $X$ and $Y$ in $T_H M$, let
\begin{equation}\label{eq:normalmetric}
g_H\(X, Y\) := -\tr \(\ad^{-1}_H X\, \ad^{-1}_H Y \).
\end{equation}

In our setup, however, $H$ will be of rank one and thus have multiple zero eigenvalues, whence $\ad_H$ has a nontrivial kernel. We thus need to slightly adapt the definition of the normal metric to handle this case.  The modification goes as follows (see also~\cite{BeGeom2016}):  
Let $\ker_H$ be the kernel of $\ad_H$, and $\ker^{\perp}_H$ the subspace of $\so(n)$ orthogonal, with respect to the Frobenius norm, to $\ker_H$, i.e., $\|\Omega\|_F = \sqrt{\tr\(\Omega^\top\Omega\)}$. Then, 
$
ad_H$ is a linear isomorphism when restricted to $\ker^{\perp}_H$. With a slight abuse of notation, we introduce the map 
\begin{equation}\label{eq:adHonkernal}
\ad_H : \ker^{\perp}_H \longrightarrow T_HM,  
\end{equation}
and denote its inverse as $\ad^{-1}_H$.  The normal metric~\eqref{eq:normalmetric} is then well defined on $M$ provided that we use the definition of $\ad_H$ in Eqn.~\eqref{eq:adHonkernal}. Let $\Phi$ be a smooth function over $M$, and denote by 
$$
\Phi'(H) : = \( \frac{\partial \Phi(H) }{ \partial h_{ij}}\)_{ij}\in \R^{n\times n}. 
$$ 
We then have the following result, adapted from~\cite{brockett_diffgeomgradesign93} :

\begin{lem}\label{lem:doublebracketflow} 
 The gradient flow of $\Phi(H)$ on $M$ with respect to the normal metric~\eqref{eq:normalmetric} is given by  
\begin{equation}\label{eq:doublebracketflow}
\frac{d}{dt}H = -\left [H,\left [H, \Phi'(H)\right ] \right ].
\end{equation}
Moreover, $H$ is an equilibrium of the gradient flow if and only if $[H,\Phi'(H)] = 0$. 
\end{lem}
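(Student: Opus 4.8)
The plan is to treat the gradient flow intrinsically. By definition, the Riemannian gradient $\nabla\Phi(H)$ is the unique tangent vector in $T_HM$ satisfying $g_H(\nabla\Phi(H),X)=\tr(\Phi'(H)X)$ for every $X\in T_HM$, the right-hand side being the Euclidean directional derivative of $\Phi$ along $X$. Since every tangent vector can be written $X=\ad_H\Omega=[H,\Omega]$ and $\ad_H$ is an isomorphism from $\ker^\perp_H$ onto $T_HM$, I would parametrize tangent vectors by $\Omega\in\ker^\perp_H$, so that $\ad^{-1}_HX=\Omega$ is unambiguous and the normal metric~\eqref{eq:normalmetric} is genuinely defined. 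I would also record that, because $M$ consists of symmetric matrices, each $[H,\Omega]$ is symmetric and $\Phi'(H)$ may be taken symmetric (only its symmetric part contributes to derivatives along $T_HM$); consequently $[H,\Phi'(H)]$ is skew-symmetric, which is what lets me compare it to elements of $\so(n)$.

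Writing the sought gradient as $\nabla\Phi(H)=[H,\Xi]$ with $\Xi\in\ker^\perp_H$, the defining identity becomes, for all $\Omega\in\ker^\perp_H$, the relation $-\tr(\Xi\Omega)=\tr(\Phi'(H)[H,\Omega])$. The left side follows directly from~\eqref{eq:normalmetric} and $\ad^{-1}_H[H,\Xi]=\Xi$; for the right side I would use the cyclic invariance of the trace to move $H$ across $\Phi'(H)$, obtaining $\tr(\Phi'(H)[H,\Omega])=-\tr([H,\Phi'(H)]\Omega)$. Setting $P:=[H,\Phi'(H)]$, the identity reads $\tr((\Xi-P)\Omega)=0$ for all $\Omega\in\ker^\perp_H$; since $\Xi$, $P$, $\Omega$ are all skew, this says $\langle\Xi-P,\Omega\rangle_F=0$ for every $\Omega\in\ker^\perp_H$, i.e.\ $\Xi-P\in\ker_H$. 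Decomposing $P=P_{\ker}+P_\perp$ along $\so(n)=\ker_H\oplus\ker^\perp_H$ and using $\Xi\in\ker^\perp_H$ then forces $\Xi=P_\perp$. Applying $\ad_H$ and noting $\ad_HP_{\ker}=0$ gives $\nabla\Phi(H)=[H,\Xi]=[H,P_\perp]=[H,P]=[H,[H,\Phi'(H)]]$, whence the gradient flow is $\dot H=-[H,[H,\Phi'(H)]]$, which is~\eqref{eq:doublebracketflow}.

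For the equilibrium characterization, $\dot H=0$ amounts to $\ad_H^2\Phi'(H)=0$, and it remains to upgrade this to $\ad_H\Phi'(H)=[H,\Phi'(H)]=0$. The key observation is that $H$ is symmetric, hence diagonalizable, so the operator $\ad_H$ on $\R^{n\times n}$ is semisimple (its eigenvalues are the differences $\lambda_i-\lambda_j$ of eigenvalues of $H$); for a semisimple operator one always has $\ker(\ad_H^2)=\ker(\ad_H)$, which yields the desired equivalence. Alternatively, for the rank-one case $H=vv^\top$ one can verify this by a direct block computation in a basis with $v=e_1$. I expect the main obstacle to be the careful bookkeeping forced by the degeneracy of $\ad_H$: because $H$ is rank one, $\ker_H\neq\{\0\}$, so one must consistently restrict to $\ker^\perp_H$ both to make $\ad^{-1}_H$ well defined in the normal metric and to identify $\Xi$ with the $\ker^\perp_H$-component of $[H,\Phi'(H)]$; the semisimplicity argument is then what guarantees that the extra kernel directions do not produce spurious equilibria beyond $[H,\Phi'(H)]=0$.
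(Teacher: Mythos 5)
Your derivation is correct. Note that the paper itself offers no proof of this lemma --- it is stated as ``adapted from'' Brockett's work on double bracket flows --- so your write-up supplies the argument the paper omits, and it is the standard one: define the Riemannian gradient by $g_H(\nabla\Phi(H),X)=\tr\bigl(\Phi'(H)X\bigr)$, parametrize $T_HM$ by $\Omega\in\ker^{\perp}_H$ so that $\ad_H^{-1}$ is unambiguous, use cyclicity of the trace to get $\tr\bigl(\Phi'(H)[H,\Omega]\bigr)=-\tr\bigl([H,\Phi'(H)]\Omega\bigr)$, and project $[H,\Phi'(H)]$ onto $\ker^{\perp}_H$, which is harmless after applying $\ad_H$ once more. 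Your careful handling of the degeneracy of $\ad_H$ for rank-one $H$ is exactly the point the paper flags when it modifies the normal metric, and your semisimplicity argument ($\ker(\ad_H^2)=\ker(\ad_H)$ for symmetric $H$) correctly settles the equilibrium characterization. One small simplification worth recording: the equilibrium claim also follows in one line without semisimplicity, since $\tr\bigl(\Phi'(H)\,[H,[H,\Phi'(H)]]\bigr)=-\tr\bigl([H,\Phi'(H)]^2\bigr)=\|[H,\Phi'(H)]\|_F^2$ because $[H,\Phi'(H)]$ is skew-symmetric, so $\dot H=0$ forces $[H,\Phi'(H)]=0$ directly.
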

\vspace{5pt}

\paragraph{Double bracket flow for optimal resource allocation} We  build upon the results of the previous paragraphs to introduce a  differential equation whose solutions provably converge to an  optimal allocation vector~$\gamma^* \in \Sp[\sigma]$. Namely, we will derive the gradient flow of the function $\Phi(H)$ over the space $M$---from Lemma~\ref{lem:doublebracketflow}, it suffices for us to compute $\Phi'(H)$. 
From the definition of the projection map~$\pi$, the potential function $\Phi(H)$ depends {\it only on} the diagonal entries of $H$. 
As a consequence, $\Phi'(H)$ is a diagonal matrix; 
indeed, because $\Phi(H) = \eta(\pi(H))$, we have
\begin{equation}\label{eq:derivativeofPhi}
 \Phi'(H) = \sigma \, \diag\( \eta'( \pi(H))\)
 \end{equation}
 where we recall that $\eta'(\pi(H))$ is the derivative $\partial \eta(\gamma)/\partial \gamma $ evaluated at $\pi(H)$. 
 We thus appeal to Lemma~\ref{lem:doublebracketflow} and derive the double bracket gradient-descent of $\Phi(H)$ as follows:
\begin{equation}\label{eq:gradientflowforAllocation}
\frac{d}{dt}H = -\sigma \left [H,\left [H, \diag\( \eta'(\pi(H))  \)  \right ] \right ].
\end{equation}
We elaborate here the evaluation of $\eta'(\pi(H))$. For ease of notation, denote by $\gamma := \pi(H)$.  
 Let $\Sigma'_i: =\partial \Sigma/ \partial \gamma_{i}$, with $\Sigma$ the positive definite solution to the ARE~\eqref{eq:defKss}. Then,
$$
\frac{\partial \eta(\gamma)}{\partial \gamma_{i} } =  \tr\( \Sigma'_i \). 
$$ 
It thus remains to compute $\Sigma'_i$. To this end, let $C_i$ be an $nN\times nN$ matrix defined as follows: First, divide $C_i$ into $N\times N$ blocks, with each block an $n\times n$ matrix. Then, let  the $ii$-th block of $C_i$ be $c_ic_i^\top$, and all the other blocks be zero matrices.  Now, by a simple calculation using~\eqref{eq:defKss}, we obtain $\Sigma'_i$ as the solution to the following Lyapunov equation:  
$$
\(A - \Sigma c_{\gamma}c_{\gamma}^\top  \) \Sigma'_i+   \Sigma'_i \(A^\top -  c_{\gamma}c_{\gamma}^\top \Sigma \) - \Sigma C_i \Sigma = 0.
$$

The convergence of the double bracket flow follows from the fact that the cost function $\eta(\gamma)$ defined over $\Sp[\sigma]$, and hence $\Phi(H)$ defined over $M$, has a unique local minimum point.
\end{document}